\newif\ifSC
\renewcommand{\@IEEEsectpunct}{.\ \,}% Modified from {:\ \,}
\newcommand{\expect}[1]{{\mathbb{E}\left[{#1}\right]}}
\newcommand{\cexpect}[2]{{\mathbb{E}_{#2}\left[{#1}\right]}}
\newcommand{\pr}{{\mathbb{P}}}
\newcommand{\x}{{x}}
\newcommand{\us}{u}
\newcommand{\sdpl}{\xi}
\newcommand{\meen}{m}
\newcommand{\ple}{{\alpha}}
\newcommand{\power}[1]{\mathrm{P}_{#1}}
\newcommand{\res}{\mathrm{B}}
\newcommand{\dnsty}{{\lambda_\mathrm{BS}}}
\newcommand{\fsldb}{\beta}
\newcommand{\userdnsty}{\lambda_\mathrm{UE}}
\newcommand{\Q}{\mathsf{Q}} 
\newcommand{\NBS}{\mathrm{N}_\mathrm{BS}}
\newcommand{\NMS}{\mathrm{N}_\mathrm{UE}}
\newcommand{\NRF}{\mathrm{U}_\mathrm{M}}
\newcommand{\aMS}{\mathrm{\bf a}_\mathrm{UE}}
\newcommand{\aBS}{\mathrm{\bf a}_\mathrm{BS}}
\newcommand{\AOD}[3]{\theta_{#1,#2,#3}}
\newcommand{\AOA}[3]{\phi_{#1,#2,#3}}
\newcommand{\BS}{\mathrm{BS}}
\newcommand{\UE}{\mathrm{UE}}
\newcommand{\LOS}{\mathrm{LOS}}
\newcommand{\dmax}{\mathrm{D}}
\newcommand{\bbprecoder}{\mathrm{\bf F}^\mathrm{BB}_{\x}}
\newcommand{\rfprecoder}{\mathrm{\bf F}^\mathrm{RF}_\x}
\newcommand{\bbcolumn}[2]{\mathrm{\bf f}^{\mathrm{BB}}_{#2,#1}}
\newcommand{\ubbprecoder}{\mathrm{\bf W}^\mathrm{BB}_{\us}}
\newcommand{\urfprecoder}{\mathrm{\bf W}^\mathrm{RF}_\us}
\newcommand{\ubbcolumn}[2]{\mathrm{\bf w}^{\mathrm{BB}}_{#2,#1}}
\newcommand{\rfcolumn}[2]{\mathrm{\bf f}^{\mathrm{RF}}_{#2,#1}}
\newcommand{\heffective}{\overline{\bf h}^{*}_{\x,\us}}
\newcommand{\SINR}{\mathtt{SINR}}
\newcommand{\SNR}{\mathtt{SNR}}
\newcommand{\PPP}{\Phi_\mathrm{BS}}
\newcommand{\PPPu}{\Phi_\mathrm{UE}}
\newcommand{\plos}{p_{\mathrm{LOS}}}
\newtheorem{thm}{{\bf Theorem}}
\newtheorem{cor}{Corollary}
\newtheorem{rem}{Remark}
\newtheorem{lem}{Lemma}
\newtheorem{prop}{Proposition}
\theoremstyle{definition}
\newtheorem{definition}{Definition}
\begin{document}

%----------------------------------------------------------------------
% Title Information, Abstract and Keywords
%----------------------------------------------------------------------
\title{A Comparison of MIMO Techniques in Downlink Millimeter Wave Cellular Networks with Hybrid Beamforming}
\author{Mandar N. Kulkarni, Amitava Ghosh and Jeffrey G. Andrews \thanks{This work was supported by Nokia. Part of this work appeared in Proc. {\em IEEE Asilomar}, Nov. 2015\cite{KulAlk15}. M. N. Kulkarni and J. G. Andrews are with the University of Texas at Austin, TX. A. Ghosh is with Nokia Networks, Arlington Heights, IL. Email: {\tt mandar.kulkarni@utexas.edu}, {\tt amitava.ghosh@nokia.com},{\tt jandrews@ece.utexas.edu}}}
\maketitle

\begin{abstract}
Large antenna arrays will be needed in future millimeter wave (mmWave) cellular networks, enabling a large number of different possible antenna architectures and multiple-input multiple-output (MIMO) techniques. It is still unclear which MIMO technique is most desirable as a function of different network parameters. This paper, therefore, compares the coverage and rate performance of hybrid beamforming enabled multi-user (MU) MIMO and single-user spatial multiplexing (SM) with single-user analog beamforming (SU-BF). A stochastic geometry model for coverage and rate analysis is proposed for MU-MIMO mmWave cellular networks, taking into account important mmWave-specific hardware constraints for hybrid analog/digital precoders and combiners, and a blockage-dependent channel model which is sparse in angular domain. The analytical results highlight the coverage, rate and power consumption tradeoffs in multiuser mmWave networks. With perfect channel state information at the transmitter and round robin scheduling, MU-MIMO is usually a better choice than SM or SU-BF in mmWave cellular networks. This observation, however, neglects any overhead due to channel acquisition or computational complexity. Incorporating the impact of such overheads, our results can be re-interpreted so as to quantify the minimum allowable efficiency of MU-MIMO to provide higher rates than SM or SU-BF. 
\end{abstract}

\section{Introduction}
A classical question in multi-antenna wireless communications has been to determine which MIMO technique performs better in different scenarios, for example based on the channel and interference characteristics. At mmWave frequencies, several important new factors must be considered, due to different hardware constraints on the precoders/combiners and a significantly different outdoor channel, which is both blockage-dependent and sparse (low rank) \cite{Akd14,Sun14,Aya14}. In order to compensate for the large near-field path loss, SU-BF has been the primary focus of several existing system capacity evaluations for mmWave cellular networks \cite{roh14,Akd14,BaiHea14,Ghosh14}. However, recently there has been significant work on enabling MU-MIMO and SM under different antenna architectures that respect the necessary hardware constraints at mmWave frequencies. Hybrid analog-digital  precoders and combiners, receivers with low resolution analog to digital converters, and continuous aperture phase MIMO with lens-based beamformers (also called CAP-MIMO) are prominent antenna architectures being considered\cite{AlkMo14,Mo15,Say10}. Most existing studies on mmWave MIMO, except for SU-BF, rely on single cell analysis for evaluating performance of the MIMO techniques and/or system level simulations for understanding the impact of base station (BS) deployment scenarios or blockages in the environment on the coverage and rate performance. Although analytical models for studying coverage and rate in SU-BF mmWave networks have been studied\cite{BaiHea14,SinJSAC14}, these cannot be directly used for studying other MIMO techniques like MU-MIMO and SM, as will be explained in Section~\ref{sec:background}.

The goals of this paper are two-fold. First, we propose a stochastic geometry-based model to study coverage and per user rate distribution in fully-connected hybrid beamforming-enabled MU-MIMO mmWave cellular networks. Second, we use this analytical model as a tool for comparing coverage, rate and power consumption for MU-MIMO, SM and SU-BF mmWave cellular networks. 

\subsection{Background and Related Work}
\label{sec:background}
Conventionally, BSs are equipped with fully-digital baseband processing. However, this approach requires a radio frequency (RF) chain per antenna which is impractical for mmWave BSs equipped with large antenna arrays. Fully analog solutions, on the other hand, require only a single RF chain for the entire antenna array but have no capability of digital processing. Hybrid beamforming strikes a balance between these two solutions, wherein the number of RF chains can be designed to be between 1 (analog beamforming) and the number of antennas (digital beamforming). In a fully-connected architecture, each RF chain has phase shifters connected to all antennas in the array. On the other hand, in the array of sub-arrays architecture, the entire array is divided into sub-arrays and all antennas in a sub-array are connected via phase shifters to exactly one RF chain. The fully-connected architecture has higher beamforming gain than array of sub-arrays, for a fixed number of antennas. However, the power consumption and hardware complexity of precoder/combiner is lower in the latter. 
%The tradeoff in power efficiency and data rates in these two architectures has not been formally studied, although the qualitative discussion in \cite{Han15} suggests that for $\mathrm{N}$ antennas at BS, the spectral efficiency with the array of sub-arrays architecture is roughly $\mathrm{N}\log\mathrm{N}$ bits/second/Hz lower than with the fully-connected approach at the expense of increased complexity and power consumption. For large antenna arrays, this gap in spectral efficiency is significant. Further, 
With low-complexity yet near optimal precoding/combining algorithms for MU-MIMO and SM being proposed with the fully-connected architecture\cite{Aya14,Alk14}, this approach looks promising for practical implementation and is the focus of our discussion. 

In \cite{Alk14}, a joint baseband-RF precoder solution for MU-MIMO was proposed and proven to be asymptotically optimal as the number of antennas become large. Using this scheme, it was observed that MU-MIMO can offer higher sum rates than SU-BF. Another simulation-based work \cite{Voo14} highlighted that per user rates, including the cell edge rates, can be much higher with MU-MIMO with appropriate user pairing. It was observed that exploiting polarization diversity for two stream transmission to each user further enhances the gains in using MU-MIMO. This is one particular way in which SM gains can be obtained in tandem with MU-MIMO. 
%Note that for exploiting polarization diversity one would need two different sub-arrays employing vertical and horizontal polarization at the transmitter and the receiver. Note that this is a special case of array of subarrays architecture. 
Another way to get SM gains would be to rely on the scatterers in the environment\cite{Aya14,Sun14}.  The simulations in \cite{Aya14,Sun14} showed that SM and SU-BF could work in tandem to improve capacity. However, all these works implicitly neglected the aspect of power consumption at BSs and UEs when comparing the MIMO techniques. If we were to compare the coverage and rate performance of SU-BF and MU-MIMO or SM with fixed power consumption per unit area and fixed number of antennas per BS, we can deploy a much denser mmWave network with SU-BF than MU-MIMO or SM. This significantly affects the comparisons as will be shown in Section~\ref{sec:results}, since unlike in conventional cellular networks\cite{andganbac11}, densifying a mmWave network boosts the coverage and capacity notably\cite{Ghosh14,BaiHea14}. 

The above mentioned studies either rely on system-level simulations or on single cell analysis. There is no analytical model for MU-MIMO or SM mmWave networks that incorporates the impact of hybrid precoders and combiners and the channel sparsity. 
%It has been widely accepted now, that stochastic geometry models are state of the art in analysis of wireless networks taking into account spatial distribution of devices in the network \cite{andganbac11,BacBook09}. Thus, we use stochastic geometry as a tool for coverage and rate analysis of mmWave cellular networks employing MU-MIMO and SU-BF. Note that the approach in this paper can also be used for modeling coverage in hybrid beamforming enabled SM mmWave cellular networks and this is deferred as a scope for future work. The proposed analytical model is then used as a tool for comparing MU-MIMO, SU-BF and SM. 
Analysis for MIMO cellular networks has conventionally been done by capturing the impact of linear precoding and combining into the distribution of an effective small scale fading random variable\cite{Hea11,DhiKountAnd13, Zhu14, Gupta14}.  In \cite{Hea11,DhiKountAnd13}, it was shown that Gamma distribution can be used to model the small scale fading gain in MU-MIMO cellular networks employing ZF precoding. Most successive analytical studies on MU-MIMO cellular networks using stochastic geometry have relied on this result, for example \cite{Zhu14, Gupta14}. However, justifying this result assumes fully digital processing and full rank Rayleigh fading channels. At mmWave frequencies the channel is expected to be sparse and blockage-dependent\cite{Akd14,Sun14,SamRap14,Sam16}. Thus, the full rank assumption is far from reality. A recent work \cite{Bai2015} proposes an analytical model for $\SINR$ (signal to interference plus noise ratio) in MU-MIMO mmWave cellular networks but assumes fully digital processing. But, as described earlier, fully digital processing is also not realistic at mmWave. Analysis of multiuser mmWave cellular networks, thus, demands a new approach. Also, other existing analytical models for SU-BF enabled mmWave networks assume an equivalent SISO-like system with directional antenna gains by abstracting underlying signal level details\cite{BaiHea14, SinJSAC14}. Further, the analysis in these papers is done for single path channels. An analytical framework that can be used as a tool for comparing with different MIMO techniques needs to incorporate multipath in the channel, which is a primary feature enabling SM. The key contributions in this work are as follows. 

\subsection{Contributions}
\subsubsection{Tractable Model for Coverage and Rate in MU-MIMO mmWave Cellular Networks}
The analytical model captures the following mmWave-specific features: (i) precoding and combining with hybrid beamforming, and (ii) sparse blockage-dependent multipath channel model. For simplicity the channel model is assumed to be non-selective in both time and frequency to focus only on the spatial aspects. Using Monte-Carlo simulations, the model is shown to be reasonably accurate for a large number of antennas at the BSs and user equipments (UEs) in noise-limited scenarios. In interference-limited scenarios, upper and lower bounds to the distribution of the proposed approximate $\SINR$ model are derived under some assumptions and validated with Monte-Carlo simulations. The fact that our model incorporates different channel rank for line-of-sight (LOS) and non-LOS (NLOS) makes it possible to fairly compare analytical results with Monte-Carlo simulations for SM, which strongly depend on the rank of the channel. Numerical results reveal the following insights: (i)  In interference-limited scenarios, $\SINR$ coverage has a non-monotonic trend with BS density. The optimum BS density for $\SINR$ coverage decreases with increasing degree of multiuser transmission. (ii) Although $\SINR$ coverage decreases with MU-MIMO, the median and peak per user rate increases due to increasing number of time slots available per user. However, the cell edge rates suffer with round robin scheduling, which motivates that the scheduler must explicitly safeguard the rates of edge users to use MU-MIMO. 

\subsubsection{Comparison of MIMO Techniques Considering Coverage, Rate and Power Consumption Tradeoffs}
With perfect channel state information at the transmitter and neglecting channel acquisition and computational complexity overheads, MU-MIMO usually provides higher per user throughput compared to SM and SU-BF in mmWave networks for a fixed density of BSs and fixed number of antennas per BS/UE. Further note that enabling MU-MIMO requires only single RF chain at UEs, whereas enabling SM requires some baseband combining at UEs with multiple RF chains. We provide a stochastic ordering argument which highlights that $\SNR$ coverage normalized by the antenna gains is better for MU-MIMO than SM asymptotically with the number of antennas at the BSs and users. SM can outperform MU-MIMO in scenarios when SM can support more streams than the number of users that can be served with MU-MIMO. This boils down to having sufficiently low user density coupled with sufficiently large number of RF chains at UEs/BSs and multipath in the channel.
Instead of fixing the density of BSs if power consumption per unit area is fixed, a denser SU-BF network outperforms MU-MIMO and SM in terms of per user cell edge rates. However, the sum rate with MU-MIMO is still usually better than SU-BF and SM. The above results on sum or per user rates neglect the possibly increased overheads with MU-MIMO due to channel acquisition or computational complexity. Incorporating such factors, our results can be re-interpreted so as to quantify the minimum allowable efficiency for MU-MIMO to provide higher data rates than SM or SU-BF. The definition of minimum allowable efficiency is formally given in Section~\ref{sec:analysis}.
\subsection{Organization and Notation}
Section~\ref{sec:model} sets up the system model. The analytical model for coverage and rate in MU-MIMO mmWave networks is developed in Section~\ref{sec:MUMIMO}. Heuristic comparison of coverage and rate with SM is discussed in Section~\ref{sec:SM}. Section~\ref{sec:results} and \ref{sec:conclusion} discusses the numerical results and conclusions.\footnote{Variables in italics are scalar random variables. Small and capital bold letters indicate vectors and matrices, respectively. An exception are random spatial locations in $\mathbb{R}^2$, which are italicized small letters $\x,y,\us,v$ or $w$. The complex conjugate transpose and pseudo inverse of ${\bf A}$ is ${\bf A}^*$ and ${\bf A^{\dagger}}$, respectively.  Convergence in distribution is denoted by $\stackrel{d}{\to}$.}.  %1 page
% !TeX spellcheck = <none>
\section{System Model}
\label{sec:model}
Consider a downlink mmWave cellular network operating at carrier frequency $\mathrm{f}_c$ and with bandwidth $\res$. It is assumed that BSs and UEs are distributed in $\mathbb{R}^2$ as independent and homogeneous Poisson point processes (PPPs) $\PPP$ and $\PPPu$, with intensities $\dnsty$ and $\userdnsty$, respectively\cite{andganbac11}. Each BS and user is assumed to employ a uniform linear array (ULA) of size $\NBS$ and $\NMS$, respectively. Full buffer traffic is assumed in this work. %This assumption may be relaxed in future work by considering a load aware model as done in  \cite{DhiGanJ2013}. 

\subsection{Propagation Model}
Path loss from BS at $\x\in\PPP$ to a user at $\us\in\PPPu$ is given in dB by 
\begin{equation}
L(\x,\us) = \beta + 10\alpha \log_{10}(||\x-\us||) + S_{\x,\us},
\end{equation}
where $\beta= 20\log_{10}\left(\frac{4\pi}{\lambda_c}\right)$ is the reference distance path loss at 1 meter, $\lambda_c$ is the wavelength in meters, $\alpha$ is the path loss exponent, $S_{\x,\us} \sim \mathit{N}(0,\xi^2)$ denotes Gaussian distribution with zero mean and standard deviation $\xi$. Note that $\alpha$ and $\xi$ are different for LOS and NLOS links. A subscript `$\mathrm{L}$' and `$\mathrm{N}$' to  $\alpha$ and $\xi$ denote the respective parameters for LOS and NLOS links, respectively. A probabilistic blockage model proposed and validated in \cite{SinJSAC14,Kul14} is used in this work. According to this model, the probability that a link of length $||x-u||$ is LOS is $\plos$ if $||x-u||\leq \dmax$, for some value of $\dmax$. All links longer than $\dmax$ are NLOS. 

MmWave channels are expected to be sparse with very few angles of arrival (AOAs) and departure (AODs) capturing most of the energy \cite{Akd14,Sam16,SamRap14,Sun14}. In this work, we assume a narrowband geometric channel model\cite{Aya14,Alk14}, where the channel matrix between BS at $\x$ and user $\us$ is given by 
\begin{equation}
\label{eq:channel1}
{\bf H}_{\x,\us} = \sqrt{\frac{\NBS\NMS}{L(\x,\us)\eta_{\x,\us}}}\sum_{i=1}^{\eta_{\x,\us}}\gamma_{i,\x,\us} \aMS(\AOA{i}{\x}{\us}) \mathrm{\bf a}^{*}_\mathrm{BS}(\AOD{i}{\x}{\us}).
\end{equation}
Here, $\eta_{\x,\us}$ is the number of paths between BS at $\x$ and user at $\us$, 
$\gamma_{i,\x,\us}$ is the small scale fading on $i^{\text{th}}$ path (assumed to be complex normal with zero mean and unit variance for both LOS and NLOS to enhance analytical tractability), $\AOD{i}{\x}{\us}$ is the virtual AOD and $\AOA{i}{\x}{\us}$ is the virtual AOA for the $i^{\text{th}}$ path. The number of paths $\eta_{\x,\us}$ equals $\eta_\mathrm{L}$ or $\eta_\mathrm{N}$ depending on whether the link is LOS or NLOS, respectively\footnote{$\eta_\mathrm{L}>1$ indicates more than 1 LOS {\em like} paths. In this work, we either have LOS or NLOS multipaths. A more general channel model would incorporate scenarios with 1 or more LOS like paths along with NLOS paths. However, an optimal power allocation would nearly allocate all power to LOS-like paths, thus, justifying our model. For simplicity, it is assumed that each scatterer gives rise to a single dominant path \cite{Say02,Alk14,Aya12}. Extension to a clustered model\cite{Aya14,Say02} is desirable in future.}.
%depends on whether the link is LOS or NLOS. 
%\begin{equation*}
%\eta_{\x,\us} = \begin{cases}
%\eta_\mathrm{L} & \text{if link is LOS}\\
%\eta_\mathrm{N} & \text{otherwise.}\\
%\end{cases}
%\end{equation*}
It is expected that $\eta_\mathrm{N}>\eta_\mathrm{L}$\cite{Sun14,Sam16,SamRap14}. The virtual AOA or AOD are related to the corresponding physical angles as 
%\begin{equation*}
%\label{eq:virtualphymap}
$\theta = 2\pi \text{d} \sin(\varphi)/\lambda_c$,
%\end{equation*}
where d is the inter-antenna spacing (chosen to be $\lambda_c/2$), $\varphi$ is the physical angle and $\theta$ is the virtual angle. The array response vectors for ULAs, $\aBS$ and $\aMS$, are of the form
%\begin{equation*}
%\label{eq:channel}
$\mathrm{\bf a}(\theta) =  [1\; e^{-j \theta} \ldots e^{-j (\text{N}-1)\theta}]^{*}/\sqrt{\mathrm{N}}$,
%\end{equation*}
where $\mathrm{N}\in\{\NBS,\NMS\}$. We assume that for every BS-UE link, scatterers in environment are uniformly distributed in $[0,2\pi]$ and thus, the physical angles are also uniformly distributed in $[0,2\pi]$. We call this the ``physical model'', which will be the basis of our  simulation results, whereas for tractable analysis we leverage the virtual channel approximation\cite{Say02} in Section~\ref{sec:analysis}.
%From \cite{Alk14}, we expect such an approximation to be tight in {\em large dimension regime}, that is when the number of antennas at the BSs and UEs is large.

\subsection{Fully Connected Hybrid Beamforming Architecture}
A fully-connected two layer hybrid beamforming architecture with $\mathrm{N}^{\mathrm{BS}}_\mathrm{RF}$ and $\mathrm{N}^{\mathrm{UE}}_\mathrm{RF}$ RF chains at the BS and UE, respectively, is shown in Figure~\ref{fig:hybridarchitecture}. A BS at $\x$ sends a total of $\mathrm{N}^\mathrm{BS}_\mathrm{s}$ streams of data, which may include data sent to multiple users in the network. The transmit signals first go through a $\mathrm{N}^\mathrm{BS}_\mathrm{RF} \times \mathrm{N}^\mathrm{BS}_\mathrm{s}$ baseband precoder matrix $\bbprecoder = [\bbcolumn{1}{\x}\ldots \bbcolumn{\mathrm{N}^\mathrm{BS}_\mathrm{s}}{\x}]$ followed by a $\mathrm{N}_\mathrm{BS} \times \mathrm{N}^\mathrm{BS}_\mathrm{RF}$ RF precoder $\rfprecoder = [\rfcolumn{1}{\x}\ldots \rfcolumn{\mathrm{N}^{\mathrm{BS}}_\mathrm{RF}}{\x}]$. Note that the RF precoder is generally implemented using phase-shifters \cite{Aya14,Alk14}, although there have been attempts trying to explore alternative methods\cite{rial15}. 
Let us denote the RF combiner at user $u$ by $\urfprecoder$ and the baseband combiner by $\ubbprecoder = [\ubbcolumn{1}{\us},\ldots,\ubbcolumn{\mathrm{N}^{\mathrm{UE}}_\mathrm{s}}{\us}]$. Note that SM, MU-MIMO and SU-BF can all be implemented with this generic architecture. The problem of jointly optimizing over $\rfprecoder$, $\bbprecoder$, $\urfprecoder$ and $\ubbprecoder$ to maximize sum rate or per user rate for SM and MU-MIMO is still an open problem\cite{Alk14,Aya14}. In the following sections, we  use recently proposed near optimal algorithms for designing of precoders and combiners in \cite{Aya14} and \cite{Alk14} to employ SM and MU-MIMO, respectively as baseline for our simulations and analysis. 
\ifSC
\begin{figure}
 	\centering
 		\includegraphics[width=\columnwidth]{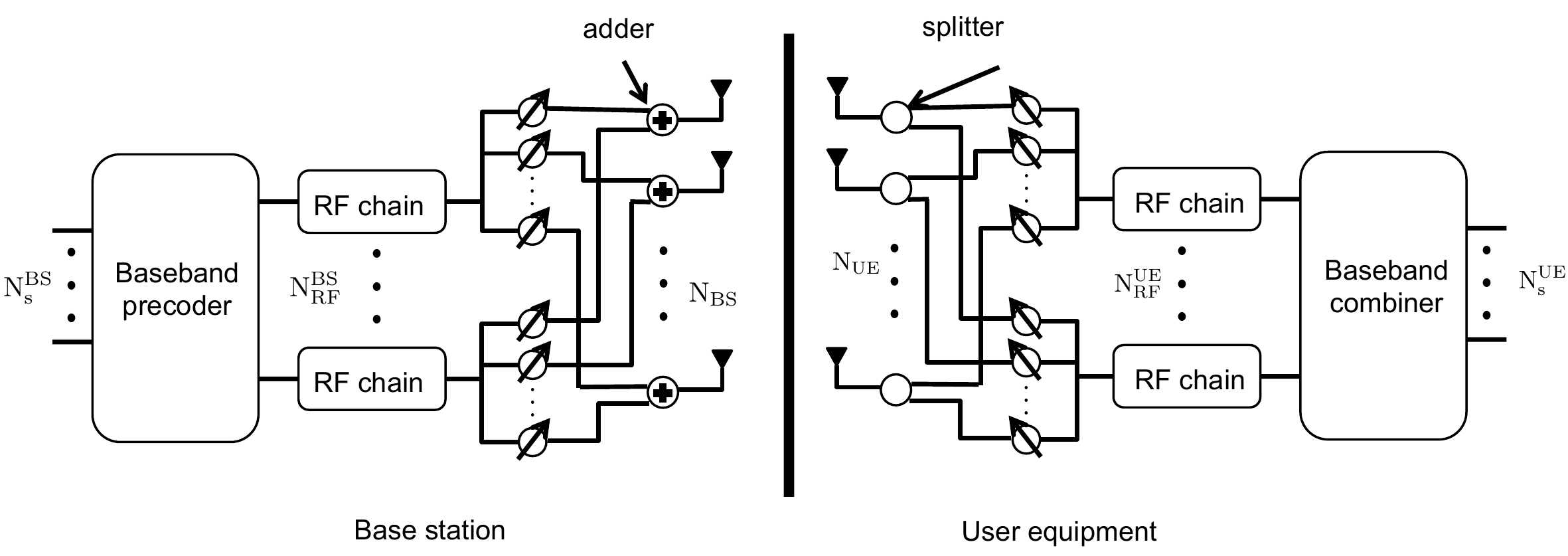}
 		\caption{Fully-connected hybrid architecture at the BSs and UEs.}
 	\label{fig:hybridarchitecture}
 \end{figure} 
\else
\begin{figure*}
 	\centering
 		\includegraphics[width=1.9\columnwidth]{hybridarchitecture}
 		\caption{Fully-connected hybrid architecture at the BSs and UEs.}
 	\label{fig:hybridarchitecture}
 \end{figure*} 
\fi
%Such an architecture was proposed in \cite{Vook14, Aya14} and later used for analysis of limited feedback multiuser mmWave cellular networks in \cite{Alk14}. 

  %1.5 page
\section{Multiuser MIMO in mmWave Cellular Networks}
\label{sec:MUMIMO}
For MU-MIMO, it is assumed that each BS serves multiple users with a single stream per user. Thus, analog beamforming with a single RF chain suffices at each UE. 
%The block diagram in Fig.~\ref{fig:hybridarchitecture} simplifies to Fig.~\ref{fig:mumimo}. 
Let $\mathcal{U}_\x$ be the set of all users in $\PPPu$ which are scheduled by the BS at $\x$ in the same time slot, and the cardinality of $\mathcal{U}_\x$ be $U_\x$. We assume $U_\x = \min(\NRF,N_\x)$, where $N_\x$ is the total number of users connected to the BS and $\NRF$ is the maximum number of users that can be scheduled in a time slot. A more sophisticated algorithm for deciding how many and which users to schedule in a resource block may be implemented as in \cite{Adh14,Voo14} but we neglect this aspect here for tractability. We further assume $\NRF = \mathrm{N}_\mathrm{RF}$, and that unless the load on the BS is less than the number of RF chains, $\NRF$ users are served in a time slot. Also, when $\NRF>U_\x$ only $U_\x$ RF chains are used for processing, which means that $\bbprecoder= [\bbcolumn{1}{\x}\ldots \bbcolumn{U_\x}{\x}]$ is of dimension $U_\x \times U_\x$ and $\rfprecoder=[\rfcolumn{1}{\x}\ldots \rfcolumn{U_\x}{\x}]$ is of dimension $\NBS \times U_\x$.
%\begin{figure}
% 	\centering
% 		\includegraphics[width=\columnwidth]{MUMIMO}
% 		\caption{MU-MIMO system under consideration.}
% 	\label{fig:mumimo}
 %\end{figure}

Under the narrowband assumption, the received signal at user $\us$ from BS at $\x$ after passing through ${\bf w}_\us$, the RF combiner at the user, is given by
\begin{equation*}
y_\us = \heffective \bbcolumn{\us}{\x} s_\us + \sum_{{v}\in\mathcal{U}_\x, {v}\neq u} \heffective \bbcolumn{{ v}}{\x} s_{{v}} + \mathrm{OCI} + \text{noise}, 
\end{equation*}
where $\heffective = {\bf w}^{*}_\us {\bf H}_{\x,\us} \rfprecoder$ and OCI is the out-of-cell interference. Here, $s_{(.)}$ are the transmit symbols with energy $\power{}/U_\x$. Thus, the total transmit power of the BS is  $\power{}$. In this work, we will follow the precoding/combining algorithm in \cite{Alk14} and also assume an infinite resolution codebook at BSs and UEs for tractability. The first step is to design the RF precoders and combiners that maximize the received signal power on each of the BS-UE links. Thus, ${\bf w}_\us$ and $\rfcolumn{\us}{\x}$ are designed such that $
({\bf w}_\us, \rfcolumn{\us}{\x}) = \underset{{\bf w,\;f}}{\arg \max}|{\bf w}^{*}{\bf H}_{\x,\us}{\bf f}|.
$

\begin{lem}[from \cite{Aya12}]
\label{lem:precodcombin}
The left and right singular vectors corresponding to non-zero eigenvalues  of ${\bf H}_{\x,\us}$ with $\eta_{\x,\us}\ll\min(\NBS,\NMS)$ converge in chordal distance to $\mathrm{\bf a}_\mathrm{UE}(\AOA{i}{\x}{\us})$ and $\mathrm{\bf a}_\mathrm{BS}(\AOD{i}{\x}{\us})$, for $1\leq i\leq \eta_{\x,\us}$. The corresponding singular values converge to $\frac{\NBS\NMS}{L(\x,\us)\eta_{\x,\us}}|\gamma_{i,\x,\us}|^2$.
\end{lem}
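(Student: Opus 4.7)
The plan is to exploit the near-orthogonality of array response vectors that emerges as $\NBS$ and $\NMS$ become large. I would first rewrite the channel in matrix form by letting ${\bf A}_\mathrm{UE}$ and ${\bf A}_\mathrm{BS}$ be matrices of sizes $\NMS \times \eta_{\x,\us}$ and $\NBS \times \eta_{\x,\us}$ whose $i^\text{th}$ columns are $\aMS(\AOA{i}{\x}{\us})$ and $\aBS(\AOD{i}{\x}{\us})$, and setting ${\bf \Gamma} = \mathrm{diag}(\gamma_{1,\x,\us},\ldots,\gamma_{\eta_{\x,\us},\x,\us})$. Then
$$ {\bf H}_{\x,\us} = \sqrt{\tfrac{\NBS\NMS}{L(\x,\us)\eta_{\x,\us}}}\,{\bf A}_\mathrm{UE}\,{\bf \Gamma}\,{\bf A}_\mathrm{BS}^{*}, $$
which immediately shows the rank is at most $\eta_{\x,\us}$, so only $\eta_{\x,\us}$ non-zero singular values need to be tracked.

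Next I would establish asymptotic orthogonality of the array response vectors. For distinct virtual angles $\theta_1,\theta_2$, the inner product $\aBS(\theta_1)^{*}\aBS(\theta_2) = \frac{1}{\NBS}\sum_{n=0}^{\NBS-1} e^{jn(\theta_1-\theta_2)}$ is a Dirichlet-type sum whose modulus is bounded by $1/(\NBS|\sin((\theta_1-\theta_2)/2)|)$. Since the physical angles are drawn from a continuous distribution on $[0,2\pi]$, almost surely no two virtual AODs (or AOAs) coincide, so every off-diagonal entry of ${\bf A}_\mathrm{BS}^{*}{\bf A}_\mathrm{BS}$ is $O(1/\NBS)$. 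Hence ${\bf A}_\mathrm{BS}^{*}{\bf A}_\mathrm{BS} \to {\bf I}_{\eta_{\x,\us}}$ and similarly ${\bf A}_\mathrm{UE}^{*}{\bf A}_\mathrm{UE} \to {\bf I}_{\eta_{\x,\us}}$.

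For the singular values, I would then compute the Gram matrix
$$ {\bf H}_{\x,\us}{\bf H}_{\x,\us}^{*} = \tfrac{\NBS\NMS}{L(\x,\us)\eta_{\x,\us}}\,{\bf A}_\mathrm{UE}{\bf \Gamma}\bigl({\bf A}_\mathrm{BS}^{*}{\bf A}_\mathrm{BS}\bigr){\bf \Gamma}^{*}{\bf A}_\mathrm{UE}^{*}, $$
whose non-zero eigenvalues agree (up to the prefactor) with those of ${\bf \Gamma}\bigl({\bf A}_\mathrm{BS}^{*}{\bf A}_\mathrm{BS}\bigr){\bf \Gamma}^{*}\bigl({\bf A}_\mathrm{UE}^{*}{\bf A}_\mathrm{UE}\bigr)$. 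By the previous step this converges to $|{\bf \Gamma}|^2$, so Weyl's inequality yields convergence of the $i^\text{th}$ squared singular value to $\tfrac{\NBS\NMS}{L(\x,\us)\eta_{\x,\us}}|\gamma_{i,\x,\us}|^2$, matching the claim.

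For the singular vectors I would invoke Wedin's perturbation theorem (or Davis--Kahan). Let $\tilde{\bf H}$ be the ``ideal'' channel obtained by replacing ${\bf A}_\mathrm{UE}$ and ${\bf A}_\mathrm{BS}$ by their orthonormal counterparts (e.g.\ via QR), so that $\tilde{\bf H}$ has $\aMS(\AOA{i}{\x}{\us})$ and $\aBS(\AOD{i}{\x}{\us})$ as exact left and right singular vectors with singular values $\sqrt{\NBS\NMS/(L\eta_{\x,\us})}\,|\gamma_{i,\x,\us}|$. The perturbation $\lVert {\bf H}_{\x,\us} - \tilde{\bf H} \rVert$ is dominated by the off-diagonal mass of ${\bf A}_{(\cdot)}^{*}{\bf A}_{(\cdot)}$ and thus scales as $O(\sqrt{\NBS\NMS}/\min(\NBS,\NMS))$, while the non-zero singular values of $\tilde{\bf H}$ scale as $\Theta(\sqrt{\NBS\NMS})$. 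Wedin's theorem then gives vanishing chordal distance between the true singular subspaces and their ideal counterparts. The main obstacle I anticipate is handling near-degenerate singular values, since Wedin requires a positive spectral gap; this I would resolve by noting that the $\gamma_{i,\x,\us}$ are i.i.d.\ continuous random variables, so $|\gamma_{i,\x,\us}|^2$ are almost surely distinct and a (random) positive gap exists.
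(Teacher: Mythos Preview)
The paper does not actually prove this lemma: it is quoted verbatim from reference \cite{Aya12} (note the ``[from \cite{Aya12}]'' in the lemma header), and no argument is supplied in the present paper. So there is nothing to compare your proposal against here.

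That said, your approach is essentially the standard one and is sound. Writing ${\bf H}_{\x,\us}$ in the factored form ${\bf A}_\mathrm{UE}{\bf \Gamma}{\bf A}_\mathrm{BS}^{*}$, using the Dirichlet-kernel bound to get ${\bf A}_{(\cdot)}^{*}{\bf A}_{(\cdot)}\to{\bf I}$, and then reading off the singular values from the $\eta_{\x,\us}\times\eta_{\x,\us}$ reduced Gram matrix via Weyl is exactly the right idea; the singular-vector step via Wedin/Davis--Kahan with the almost-sure spectral gap from the continuous $|\gamma_{i,\x,\us}|^2$ is also the natural route. One small wrinkle: your description of $\tilde{\bf H}$ is slightly muddled, since replacing ${\bf A}_{(\cdot)}$ by its QR orthonormalization makes the $Q$-columns, not the original array response vectors, the exact singular vectors of $\tilde{\bf H}$. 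The cleanest fix is to observe that once ${\bf A}_\mathrm{UE}^{*}{\bf A}_\mathrm{UE}$ and ${\bf A}_\mathrm{BS}^{*}{\bf A}_\mathrm{BS}$ are near identity, the factorization ${\bf A}_\mathrm{UE}\,\mathrm{diag}(|\gamma_i|)\,({\bf A}_\mathrm{BS}\,\mathrm{diag}(e^{-j\arg\gamma_i}))^{*}$ is itself an approximate SVD, and then apply Wedin directly to the difference between ${\bf H}_{\x,\us}$ and this approximate SVD; the perturbation still has the $O(1/\min(\NBS,\NMS))$ relative size you identified.
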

This lemma indicates that for large number of antennas 
${\bf w}_\us = \mathrm{\bf a}_\mathrm{UE}(\AOA{i_m}{\x}{\us})$ and $\rfcolumn{\us}{\x} = \mathrm{\bf a}_\mathrm{BS}(\AOD{i_m}{\x}{\us})$, where $i_m = \underset{i}{\arg \max} |\gamma_{i,\x,\us}|$. This observation will be crucial in developing a tractable model for coverage and rate. Next, the baseband precoder is designed such that the multiuser interference is cancelled.
Using a zero forcing (ZF) baseband precoder, 
$\bbprecoder = \overline{\bf H}^{\dagger}_{\x} {\bf \Lambda}
$,
where $\bf \Lambda$ is a diagonal matrix whose entries are chosen such that $||\rfprecoder \bbcolumn{\us}{\x}|| = 1$. Here,  $\overline{\bf H}_{\x} = [\overline{\bf h}_{\x,\us_1}\ldots \overline{\bf h}_{\x,\us_{U_\x}}]^{*}$ with $\mathcal{U}_\x = \{\us_1,\ldots,\us_{U_\x}\}$. Note that 
 $\overline{\bf H}^{\dagger} = \overline{\bf H}^{*} \left(\overline{\bf H} \overline{\bf H}^{*}\right)^{-1}$, if $\overline{\bf H}$ is full rank.

\subsection{$\SINR$ and Rate Model}
The $\SINR$ of the user at $\us\in\PPPu$ served by a BS at $\x\in\PPP$ connected to $U_\x$ total users is given by 
\begin{equation}
\label{eq:sinr1}
\SINR_{\x,\us} = \frac{\frac{ ||\heffective \bbcolumn{\us}{\x}||^2}{U_\x}}{\frac{\sigma^2_n}{\power{}}+  \sum\limits_{\substack{v\in\mathcal{U}_x
\\{v}\neq \us}}\frac{ ||\heffective \bbcolumn{v}{\x}||^2 }{U_\x}+ \sum\limits_{\substack{y\in\PPP\\y\neq \x}} \sum\limits_{w\in\mathcal{U}_y} \frac{||\overline{\bf h}^*_{y,\us} \bbcolumn{w}{y}||^2}{U_y}}.
\end{equation}
The second term in the denominator is zero, owing to the ZF precoder and the fact that $\overline{\bf H}_\x$ is almost surely full rank for independently distributed channel gains from BS at $\x$ to users in $\mathcal{U}_\x$. %Note that in reality it may not possible to obtain narrowband channel even with schemes like SC-FDM (single-carrier frequency division multiplexing) or OFDM (orthogonal frequency division multiplexing) due to the very large bandwidth of mmWave networks.  
The per user rate (in bits per second or bps) of user $u$ served by BS at $\x$ is defined as
\begin{equation}
\label{eq:peruserrate}
R_{\x,\us} = \omega_\x \frac{\res\, U_\x}{N_\x} \log_2(1+\SINR_{\x,\us}),
\end{equation}
where $\omega_\x < 1$ models the efficiency in implementing MU-MIMO in terms channel acquisition or computational complexity or cyclic prefix while implementing OFDM \cite{Bai2015,Marzetta10}. The above model implies that each user gets $U_\x/N_\x$ fraction of resources, which can be achieved using round robin scheduling. The sum rate is defined as
\begin{equation}
R_{\x} = \omega_\x \res \sum_{u\in\mathcal{U}_x} \log_2(1+\SINR_{\x,\us}),
\end{equation}
which is basically the total number of bits per second (bps) transmitted by the BS, whereas the per user rate is the rate achieved by a typical user in a scheduling cycle. 

In general, the efficiency factors vary for different BSs and are dependent on $\NRF,\NBS,\NMS$,$\eta_\mathrm{N}$, $\eta_\mathrm{L}$ and OFDM cyclic prefix penalty. For simplicity we assume $\omega_\x = \omega_\mathrm{MU}, \forall \x\in\PPP $. One can interpret $\omega_\mathrm{MU} = \min_\x \omega_\x$ to get a lower bound on the rate. Since we expect the overhead to increase with $U_\x$, $\omega_\mathrm{MU}$ corresponds to the efficiency of BSs serving $\NRF$ users. Note that $\omega_\mathrm{MU}$  for $\NRF=1$ is the overhead for SU-BF.  
%Both of these metrics are essential to study the benefits of MU-MIMO. Although the sum rate helps to capture the fact that the BSs are in fact serving multiple users, it tends to hide the information about rates achieved by the cell edge users. This can be better studied using the per user rate as the metric. In the following section, we will discuss the analytical model for $\SINR$ and per user rate distribution based on the virtual channel model approximation. Analysing the distribution of sum rate requires to have a joint distribution of $\SINR$ of all users scheduled in a time slot, which is not tractable.

\subsection{Coverage and Rate Analysis}
\label{sec:analysis}
Consider a {\em typical} UE at origin, wherein the notion of typicality for stationary point process is defined through Palm probability \cite{BacBook09}, and it associates with the BS at $\x$ offering minimum path loss $L(\x,0)$. We call this the {\em tagged} BS. We evaluate the $\SINR$ coverage defined as $\mathbb{P}\left(\SINR_{x,0}>\tau\right)$, which is the $\SINR$ distribution of the {\em typical} user at origin. Rate coverage is similarly defined.  The $\SINR$ expression in (\ref{eq:sinr1}), although exact, is not tractable in terms of finding its distribution. We, thus, provide an accurate yet tractable approximation that captures the dependency of the several parameters in the following analysis. %The notion of stochastic ordering will be useful, and thus we provide its definition here. 
\begin{definition}
A random variable $\mathrm{Z}_1$ stochastically dominates another random variable $\mathrm{Z}_2$, if $\pr(Z_1>z)\geq \pr(Z_2>z)$ for all $z\in\mathbb{R}$. We denote this as $Z_1\stackrel{\text{st}}{\geq}Z_2$.
\end{definition}
\subsubsection{Rate Distribution in a Noise-limited Network}
We first focus on finding the rate distribution in a network with negligible interference effects. Throughout the discussion, the virtual angles of departure/arrival are quantized to take values in $\{\theta: \theta = -\pi+\frac{2\pi i}{\mathrm{N}_a},  1\leq i \leq \mathrm{N}_a\}$.
\begin{lem}
\label{lem:angle}
If antenna spacing is half wavelength and the physical AOAs/AODs are uniformly distributed in 0 to $2\pi$, the distribution of the quantized virtual angles is given by
\begin{equation*}
\ifSC
q_{a,i} = \left(\arcsin\left(-1+\frac{2i+1}{\mathrm{N}_a}\right)-\arcsin\left(-1+\frac{2i-1}{\mathrm{N}_a}\right)\right)/\pi,
\else
q_{a,i} = \frac{\left(\sin^{-1}\left(-1+\frac{2i+1}{\mathrm{N}_a}\right)-\sin^{-1}\left(-1+\frac{2i-1}{\mathrm{N}_a}\right)\right)}{\pi},
\fi
\end{equation*}
for $a \in \{\mathrm{UE}, \mathrm{BS}\}$ and $i\in\{1,\ldots,\mathrm{N}_a-1\}$.  Further, $q_{a,\mathrm{N}_a} = 1-\sum_{j=1}^{\mathrm{N}_a-1}q_{a,j}$.
\end{lem}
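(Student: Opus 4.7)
The plan is to translate the event that the quantized virtual angle equals $\theta_i = -\pi + 2\pi i/\mathrm{N}_a$ back into an event on the physical angle $\varphi$, and then evaluate the resulting probability using the distribution of $\sin(\varphi)$ under the uniform assumption.

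First I would use the fact that with inter-antenna spacing $d = \lambda_c/2$, the mapping from physical to virtual angle reduces to $\theta = \pi \sin(\varphi)$, so $\theta \in [-\pi, \pi]$. Because the quantization grid $\{-\pi + 2\pi i/\mathrm{N}_a\}$ is uniform with step $2\pi/\mathrm{N}_a$, the nearest-neighbor bin centered at $\theta_i$ is $[\theta_i - \pi/\mathrm{N}_a, \theta_i + \pi/\mathrm{N}_a]$. Pulling this back through the monotone map $s \mapsto \pi s$ identifies the corresponding range of $\sin(\varphi)$ as $[-1 + (2i-1)/\mathrm{N}_a,\; -1 + (2i+1)/\mathrm{N}_a]$. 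For $1 \le i \le \mathrm{N}_a-1$ this interval lies in $[-1, 1]$ and no wraparound is needed.

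Next I would compute the CDF of $\sin(\varphi)$ for $\varphi$ uniform on $[0, 2\pi]$. Splitting the event $\{\sin\varphi \le s\}$ over its two preimages $\arcsin s$ and $\pi - \arcsin s$ in $[0, 2\pi]$ yields $F_S(s) = 1/2 + \arcsin(s)/\pi$ for $s \in [-1,1]$. Equivalently, the density of $\sin(\varphi)$ is $f_S(s) = 1/(\pi\sqrt{1-s^2})$ on $(-1,1)$. Evaluating $F_S$ at the two bin endpoints and subtracting immediately produces the stated formula for $q_{a,i}$ with $1 \le i \le \mathrm{N}_a - 1$.

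Finally I would handle the wraparound at $\theta = \pm \pi$. Because the discrete virtual angles live on a circle (modulo $2\pi$), the bin centered at $\theta_{\mathrm{N}_a} = \pi$ collects $\sin(\varphi)$ values in the union $[-1,\; -1 + 1/\mathrm{N}_a] \cup [1 - 1/\mathrm{N}_a,\; 1]$, i.e., both ends of the $\sin$ range. Rather than expressing this as a sum of two $\arcsin$ differences, I would close the argument by noting that the $q_{a,i}$ form a probability mass function on $\{1,\ldots, \mathrm{N}_a\}$ and must sum to one, which yields $q_{a,\mathrm{N}_a} = 1 - \sum_{j=1}^{\mathrm{N}_a-1} q_{a,j}$. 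The only delicate point is keeping track of this wraparound at the boundary of the circle; apart from that, the derivation is a one-line change of variables.
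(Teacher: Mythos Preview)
Your proposal is correct and follows exactly the route the paper sketches: the paper's proof consists of just the observation that $\theta = \pi\sin(\varphi)$ for half-wavelength spacing together with the uniformity of $\varphi$, and you have filled in the change-of-variables details (the bin endpoints, the CDF $F_S(s)=1/2+\arcsin(s)/\pi$, and the wraparound at $\theta_{\mathrm{N}_a}=\pi$) that the paper leaves implicit.
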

\begin{proof}
Note that $\theta = \pi \sin(\varphi)$ for half wavelength antenna spacing. 
%Thus, 
%\begin{equation*}
%q_{a,i} = 
%\pr\left(\pi \sin(\varphi)\in\left[-\pi + \frac{\pi (2i-1) }{\mathrm{N}_a}, -\pi + \frac{\pi (2i+1) }{\mathrm{N}_a}\right)\right).
%\end{equation*}
Thus, the required probability can be computed by using that $\varphi$ is uniformly distributed in 0 to $2\pi$. 
\end{proof}
%\begin{lem}
%\label{lem:virtual}
%The probability that all paths between a BS-user link are {\em distinct} goes to 1 as number of antennas goes to infinity.
%\end{lem}
%\begin{proof}
%If $\eta$ is the number of paths between a BS and a user. The required probability is equal to $\prod_{i=1}^{\eta}(1-q_{\mathrm{UE},i})\prod_{j=1}^{\eta}(1-j/\NBS)$, which approaches 1 as $\NBS$ and $\NMS$ go to infinity for fixed $\eta$.
%\end{proof}
%We will need this Lemma for justifying Proposition~\ref{prop:1} (mentioned next).
\begin{prop}
\label{prop:1}
Let $\mathcal{U}_\x = \{\us_1,\ldots,\us_{\mathrm{U_\x}}\}$ be the users served by the BS at $\x$. Assuming $\eta_N \ll \min(\NBS,\NMS)$, $\NRF \ll \min(\NBS,\NMS)$ and a dense network deployment, $\SNR$ at user $\us_1$ can be modelled as 
\begin{equation}
\label{eq:sinr2}
\SNR_{\x,\us_1} \approx \frac{\mathrm{G}}{\eta_{\x,\us_1} U_\x \sigma^2_n} |\gamma_{i_m,\x,\us_1}|^2  L(\x,\us_1)^{-1} p_{\mathrm{ZF}},
\end{equation}
where $\mathrm{G} = \power{} \NBS \NMS$, $i_m$ is the index corresponding to $\underset{i}{\arg\max}|\gamma_{i,\x,\us_1}|$, $p_{\mathrm{ZF}}$ is a random variable that captures reduction in signal power due to the ZF penalty and has distribution that stochastically dominates $p_{\mathrm{MU}}$, which is a Bernoulli random variable with success probability $\zeta(\eta_{\x,\us_1},U_\x)$,
%\begin{equation*}
%\label{eq:ZFpenalty}
%\ifSC
%p_{\mathrm{MU}} = \begin{cases}
%1 & \text{ w. p. } \zeta(\eta_{\x,\us_1},U_\x) = \sum_{j=1}^{\NBS} q_{\BS,j}\mathrm{B}_j(\eta_{\x,\us_1},U_\x) \left(\mathrm{p}_{\LOS} \mathrm{A}_j(\eta_\mathrm{L}) + (1-\mathrm{p}_\LOS)\mathrm{A}_j(\eta_\mathrm{N}) \right)^{U_\x-1}\\
%0 & \text{ otherwise,}
%\end{cases}
%\else
%p_{\mathrm{MU}} = \begin{cases}
%1 & \text{ w. p. } \zeta(\eta_{\x,\us_1},U_\x)\\
%0 & \text{ otherwise,}
%\end{cases}
%\fi
%\end{equation*}
%\begin{equation*}
%,
%\end{equation*}
where 
\ifSC
$
\zeta(\eta_{\x,\us_1},U_\x) = \sum_{j=1}^{\NBS} q_{\BS,j}\mathrm{B}_j(\eta_{\x,\us_1},U_\x) \left(\mathrm{p}_{\LOS} \mathrm{A}_j(\eta_\mathrm{L})\right. \left.+ (1-\mathrm{p}_\LOS)\mathrm{A}_j(\eta_\mathrm{N}) \right)^{U_\x-1},
$
\else
\begin{multline*}
\zeta(\eta_{\x,\us_1},U_\x) = \sum_{j=1}^{\NBS} q_{\BS,j}\mathrm{B}_j(\eta_{\x,\us_1},U_\x) \left(\mathrm{p}_{\LOS} \mathrm{A}_j(\eta_\mathrm{L})\right.\\ \left.+ (1-\mathrm{p}_\LOS)\mathrm{A}_j(\eta_\mathrm{N}) \right)^{U_\x-1},
\end{multline*}
\fi
$\mathrm{A}_j(\eta) = \mathrm{C}(\eta) + (1-q_{\BS,j})^{\eta-1} -(1-q_{\BS,j})^{\eta-1} \mathrm{C}(\eta) $, $\mathrm{B}_j(\eta, ,U_\x) = \mathrm{C}(\eta)(1-q_{\BS,j})^{U_\x-1} +  \mathrm{D}_j(\eta,U_\x) - \mathrm{D}_j(\eta,U_\x)\mathrm{C}(\eta)$, $\mathrm{C}(\eta) = \sum_{i=1}^{\NMS} q_{\UE,i} (1-q_{\UE,i})^{\eta-1}$, and 
\begin{align*}
\iftoggle{SC}{}{\hspace{-0.5cm}}\mathrm{D}_j(\eta,U_\x)= \sum\limits_{i_1,\ldots,i_{\eta-1}=1}^{\NBS-1} \prod\limits_{n=1}^{\eta-1} l_{i_n,j} \left(1-\sum\limits_{\text{unique}(i_{(.)})} l_{i_n,j}\right)^{U_\x-1},
\end{align*}
where $l_{n,j} =\frac{q_{\BS,n}}{1-q_{\BS,j}}$ if $n<j$ and $l_{n,j} = \frac{q_{\BS,n+1}}{1-q_{\BS,j}}$ if $n\geq j$ and unique$(i_{(.)})$ represents the unique values in the set $\{i_1,\ldots,i_{\eta-1}\}$. 
\end{prop}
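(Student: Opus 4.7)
The plan is to combine Lemma~\ref{lem:precodcombin} on the asymptotic form of the hybrid beamforming choices with the quantised-angle statistics of Lemma~\ref{lem:angle}, thereby reducing the analysis of the ZF penalty to a combinatorial event concerning angle collisions among the scheduled users. Stochastic dominance of $p_{\mathrm{ZF}}$ over the Bernoulli $p_{\mathrm{MU}}$ will then follow by isolating a sufficient event on which the ZF precoder costs no signal power.

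First I would invoke Lemma~\ref{lem:precodcombin} to replace, for large arrays with sparse channels, the RF combiner of each user $\us_k$ by $\aMS$ evaluated at that user's strongest-path AoA, and the $k$-th RF precoder column by $\aBS$ evaluated at that user's strongest-path AoD. Substituting these into~(\ref{eq:channel1}) expresses every entry $(i,k)$ of $\overline{\bf H}_\x$ as a sum, over the $\eta_{\x,\us_i}$ paths of user $\us_i$, of products of ULA steering-vector inner products evaluated at the UE and the BS. For a ULA on a quantised half-wavelength grid these inner products concentrate on indicators of angle collisions, so the diagonal entry $(\us_1,\us_1)$ reduces to the strongest-path amplitude $\sqrt{\NBS\NMS/(L(\x,\us_1)\eta_{\x,\us_1})}\,\gamma_{i_m,\x,\us_1}$; accounting for the per-stream transmit power $\power{}/U_\x$ and the noise $\sigma_n^2$ then yields the prefactor $\mathrm{G}|\gamma_{i_m,\x,\us_1}|^2/(\eta_{\x,\us_1}U_\x\sigma_n^2 L(\x,\us_1))$ in~(\ref{eq:sinr2}).

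Second, I would isolate the event $\mathcal{E}$ on which $\overline{\bf H}_\x$ is effectively diagonal, so that the ZF baseband precoder coincides with conjugate beamforming on user $\us_1$'s row and $p_{\mathrm{ZF}} = 1$. For this it suffices that for every $k \neq 1$ no path of user $\us_1$ leaks into the $k$-th RF beam and no path of user $\us_k$ leaks into the first RF beam; each such leakage reduces, via the steering-vector orthogonality, to either (i) the strongest-path AoA of some user being unique among that user's AoAs, or (ii) none of that user's non-strongest paths having AoD in the same BS angle bin as $\us_1$'s strongest AoD. Conditioning on that bin index $j$ (marginal probability $q_{\BS,j}$ by Lemma~\ref{lem:angle}) and exploiting independence of angles, LOS/NLOS indicators and fading across paths and users, $\pr(\mathcal{E})$ factorises as a product of an intra-user-$\us_1$ term $\mathrm{B}_j(\eta_{\x,\us_1},U_\x)$ and $U_\x-1$ identically-distributed per-other-user terms.

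Third, I would derive the closed forms. The quantity $\mathrm{C}(\eta) = \sum_i q_{\UE,i}(1-q_{\UE,i})^{\eta-1}$ is precisely the probability, under $\eta$ i.i.d.\ draws from the UE-angle distribution $\{q_{\UE,i}\}$, that the strongest-index draw is unique. Applying inclusion--exclusion to the union of the two independent sufficient events for an ``other'' user produces $\mathrm{A}_j(\eta) = \mathrm{C}(\eta) + (1-q_{\BS,j})^{\eta-1} - \mathrm{C}(\eta)(1-q_{\BS,j})^{\eta-1}$, and averaging over LOS/NLOS with weights $\plos$ and $1-\plos$ on $\eta_\mathrm{L}$, $\eta_\mathrm{N}$ yields the bracketed factor in $\zeta$ raised to the $U_\x-1$ power. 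The expression $\mathrm{B}_j$ is obtained analogously for user $\us_1$, but must additionally enumerate the interaction of $\us_1$'s non-strongest paths with the other users' strongest AoDs, giving the constrained tuple sum $\mathrm{D}_j(\eta,U_\x)$. Summing over $j$ with weights $q_{\BS,j}$ then gives $\zeta(\eta_{\x,\us_1},U_\x)$; since $p_{\mathrm{ZF}} = 1$ on $\mathcal{E}$ and $p_{\mathrm{ZF}} \geq 0$ always, $p_{\mathrm{ZF}} \stackrel{\text{st}}{\geq} p_{\mathrm{MU}}$ follows.

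The main obstacle is the bookkeeping in the second and third steps: $\overline{\bf H}_\x$ has cross-couplings parameterised by every AoA/AoD bin of every path of every scheduled user, and the ``no-collision'' event must be delicate enough both to actually force a diagonal effective channel and to factorise over independent coordinates, so that the stated product form emerges cleanly. The assumptions $\eta_\mathrm{N},\NRF \ll \min(\NBS,\NMS)$ and the i.i.d.\ angle model inherited from Lemma~\ref{lem:angle} are precisely what make this enumeration tractable while guaranteeing near-orthogonality of the steering-vector inner products.
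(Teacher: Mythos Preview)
Your approach is essentially the paper's: invoke Lemma~\ref{lem:precodcombin} to fix the RF stage, isolate the angle-collision event on which the first row and column of $\overline{\bf H}_\x$ vanish so that the ZF baseband precoder acts as conjugate beamforming on $\us_1$, and then lower-bound the probability of that event by passing to sufficient sub-events and applying inclusion--exclusion, which produces $\mathrm{C}$, $\mathrm{A}_j$, $\mathrm{B}_j$, $\mathrm{D}_j$ and hence $\zeta$; the dense-network assumption is what justifies weighting the other users' path counts by $\plos$ and $1-\plos$. Two small points to tighten: (i) only block-diagonality (first row and column zero off the $(1,1)$ entry) is needed, not full diagonality of $\overline{\bf H}_\x$; (ii) your event $\mathcal{E}$ must also include the condition that none of $\us_1$'s non-strongest paths share \emph{both} the strongest AoA and AoD bins (the paper's $E_3$, folded into its $A_3$), since otherwise the $(1,1)$ entry is not the clean $\gamma_{i_m}$ term you assert in your first step.
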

\begin{proof}
Without loss of generality, $i_m=1$. From Lemma~\ref{lem:precodcombin}, ${\bf w}_\us = \aMS(\AOA{1}{\x}{\us})$ and $\rfcolumn{\us}{\x} = \aBS(\AOD{1}{\x}{\us})$. Using the orthogonality of the array response vectors with quantized virtual angles, it can be shown that $\overline{\bf H}_\x$ takes the form
 \begin{equation}
 \label{eq:equivalentchannel}
 \overline{\bf H}_\x= \left[ \begin{array}{cc}
 \sqrt{\frac{\NBS\NMS}{L(\x,\us_1)\eta_{\x,\us_1}}}\gamma_{1,\x,\us_1}\textbf{}&{\bf 0}\\
 {\bf 0}&\tilde{\bf P}_\x
  \end{array}\right],
 \end{equation}
with probability at least $\zeta(\eta_{x,\us_1},U_\x)$. See Appendix~\ref{app:prop1} for details. Note that here $\tilde{\bf P}_\x$ is a submatrix of $\overline{\bf H}_\x$ of dimension $U_\x-1 \times U_\x-1$. In this case,
\begin{equation*}
\overline{{\bf H}}_\x^\dagger =\left[ \begin{array}{cc}
\sqrt{\frac{L(\x,\us_1)\eta_{\x,\us_1}}{\NBS\NMS}}\gamma^{-1}_{1,\x,\us_1}&{\bf 0}\\
 {\bf 0}&\tilde{\bf P}^{\dagger}_\x
  \end{array}\right].
\end{equation*}
We know that $\bbprecoder = \overline{\bf H}_\x^\dagger {\bf \Lambda}$, for diagonal matrix ${\bf \Lambda}$ that helps satisfy the power constraints. Thus, 
the first column of the baseband precoder is of the form $\bbcolumn{\us_1}{\x} = [c\; 0\ldots0]$, for some constant $c$ such that $||\rfprecoder \bbcolumn{\us_1}{\x}||=1$. Thus, 
$\bbcolumn{\us_1}{\x} = [1\; 0\ldots0]$ since each term in $\rfprecoder$ is unit norm. In this case, the received signal power of $\us_1$ is equal to 
$\frac{G}{\eta_{\x,0}U_\x}|\gamma_{1,\x,\us_1}|^2 L(\x,\us_1)^{-1}$, which corresponds to the case when $p_{\mathrm{MU}}=1$ in (\ref{eq:sinr2}). Since the event that ${\bf P}_\x$ is not of this form is of low probability and results in intractable expressions the signal power is lower bounded by 0 in this case. Under virtual channel approximation, (\ref{eq:sinr2}) is a lower bound on $\SNR$. 
% Since this event occurs with very low probability for large $\NBS$, we lower bound the received signal power to 0 in this case by choosing $p_{\mathrm{MU}}=0$. % From this discussion it is clear that by assuming the virtual channel approximation, the $\SNR$ charecterization in (\ref{eq:sinr2}) is a lower bound on the actual $\SNR$.
\end{proof}
\begin{rem}
\label{rem:simple}
If the quantized virtual angles are distributed uniformly in their range, instead of the distribution in Lemma~\ref{lem:angle},  %$\zeta(\eta_{x,\us_1},U_\x)$ 
$\mathrm{D}_j(\eta, U_\x)$ takes a much simplified form given by 
%\ifSC
%\begin{equation*}
%\zeta(\eta_{x,\us_1},U_\x) = \mathrm{B}(\eta_{x,\us_1},U_\x) \left(\mathrm{p}_{\LOS} \mathrm{A}(\eta_\mathrm{L}) + (1-\mathrm{p}_\LOS)\mathrm{A}(\eta_\mathrm{N}) \right)^{U_\x-1},
%\end{equation*} 
%where $\mathrm{B}(\eta,U_\x) = (1-1/\NMS)^{U_x-1} (1-1/\NBS)^{U_x-1} + \mathrm{D}(\eta,U_\x)- \mathrm{D}(\eta,U_\x)(1-1/\NMS)^{U_x-1}$,
%$\mathrm{D}(\eta,U_\x) = \sum_{d=1}^{\eta-1}{\NBS-1 \choose d} (\NBS-1-d)^{U_\x-1}\sum_{i=0}^{d} (-1)^{i} (d-i)^{\eta-1}{d \choose i}$ and $\mathrm{A}(\eta)=(1-1/\NMS)^{\eta-1} + (1-1/\NBS)^{\eta-1} -(1-1/\NBS)^{\eta-1} (1-1/\NMS)^{\eta-1}$. 
%\else
%\begin{multline*}
%\zeta(\eta_{x,\us_1},U_\x) = \mathrm{B}(\eta_{x,\us_1},U_\x) \left(\mathrm{p}_{\LOS} \mathrm{A}(\eta_\mathrm{L}) \right. \\
%\left.+ (1-\mathrm{p}_\LOS)\mathrm{A}(\eta_\mathrm{N}) \right)^{U_\x-1},
%\end{multline*}
%\begin{multline*}
%\mathrm{B}(\eta,U_\x) = (1-1/\NMS)^{U_x-1} (1-1/\NBS)^{U_x-1} + \\
%\mathrm{D}(\eta,U_\x)- \mathrm{D}(\eta,U_\x)(1-1/\NMS)^{U_x-1}
%\end{multline*}
 %
 \ifSC
$\mathrm{D}(\eta,U_\x) = \sum_{d=1}^{\eta-1}{\NBS-1 \choose d} (\NBS-1-d)^{U_\x-1}\sum_{i=0}^{d} (-1)^{i} (d-i)^{\eta-1}{d \choose i} .$ See Appendix~\ref{app:prop1} for details. 
\else
$
\mathrm{D}(\eta,U_\x) = \sum_{d=1}^{\eta-1}{\NBS-1 \choose d} (\NBS-1-d)^{U_\x-1}\\\times\sum_{i=0}^{d} (-1)^{i} (d-i)^{\eta-1}{d \choose i} .%, \text{ and }
$
\fi
%\begin{multline*}
%\mathrm{A}(\eta)=(1-1/\NMS)^{\eta-1} + (1-1/\NBS)^{\eta-1} \\ - (1-1/\NBS)^{\eta-1} (1-1/\NMS)^{\eta-1}.
%\end{multline*} 
%\fi
\end{rem}
\begin{rem}
To simplify evaluation of Proposition~\ref{prop:1}, the following can be used  

$(1-q_{\mathrm{BS},j})^{U_\x-1} \sum_{i=1}^{\NMS}q_{\mathrm{UE},i}(1-q_{\mathrm{UE},i})^{\eta-1} \leq \mathrm{B}_{j}(\eta,U_\x)\leq (1-q_{\mathrm{BS},j})^{U_\x-1}$.
\end{rem}

\begin{rem}
\label{rem:convergence}
It can be shown that   $\sum_{i=1}^{\NBS}q_{\mathrm{BS},i}(1-q_{\mathrm{BS},i})^{r} \to 1$
as $\NBS\to \infty$ for any $r\geq 0$, which is true since $\max_i q_{\mathrm{BS},i}\to 0$ as $\NBS\to \infty$. Similar result holds for $q_{\mathrm{UE},j}$ with $\NMS\to \infty$. All these imply that $\zeta\to 1$ with $\NBS\to \infty$ and $\NMS\to \infty$. 
\end{rem}

To find the $\SNR$ coverage, we need to find the distribution of several random variables in Proposition~\ref{prop:1}. First we focus on the finding the probability mass function (PMF) of the number of multiuser streams of BS at $y\in\PPP$ given by $U_y = \min\{\NRF,N_y\}$. We use an approximation proposed in \cite{SinDhiAnd13} to model the distribution of $N_y$, which are actually correlated random variables for $y\in\PPP$ and particularly known to be intractable since finding the volume of Voronoi association cells is itself an unsolved problem\cite{Ferenc2007}. With notably different propagation channels for LOS and NLOS links, the cell association regions in mmWave networks are not even Voronoi and more irregular \cite{SinJSAC14}. The PMF of $N_y$ is denoted by $\kappa(n)$ is modelled as follows\cite{SinJSAC14}. Let $\rho = \userdnsty/\dnsty$, then if $y=\x$, that is the BS is serving the typical user, 
\ifSC
\begin{equation}
\label{eq:kappa}
\kappa(n)=\frac{3.5^{3.5}}{(n-1)!}\frac{\Gamma(n+3.5)}{\Gamma(3.5)}\rho^{n-1} \left(3.5+\rho\right)^{-n-3.5}, \text{  for $n\geq 1$ and $\kappa(0)=0$.}
\end{equation}
For interfering BSs, 
\begin{equation}
\label{eq:kappa2} \kappa(n)=\frac{3.5^{3.5}}{n!}\frac{\Gamma(n+3.5)}{\Gamma(3.5)}\rho^{n} \left(3.5+\rho \right)^{-n-3.5},  \text{   for $n\geq 0$.}
\end{equation}
\else
$\kappa(n) $ is given by 
\begin{equation}
\label{eq:kappa}
\frac{3.5^{3.5}}{(n-1)!}\frac{\Gamma(n+3.5)}{\Gamma(3.5)}\rho^{n-1} \left(3.5+\rho\right)^{-n-3.5},
\end{equation}
for $n\geq 1$ and $\kappa(0)=0$. For interfering BSs, $\kappa(n)=$
\begin{equation}
\label{eq:kappa2} \frac{3.5^{3.5}}{n!}\frac{\Gamma(n+3.5)}{\Gamma(3.5)}\rho^{n} \left(3.5+\rho\right)^{-n-3.5},
\end{equation}
for $n\geq 0$.
\fi

Assuming $N_y$ to be independently and identically distributed, we model the PMF of $U_y$ by 
%\begin{equation}
%\label{eq:pn}
%\mathbb{P}\left(U_\x = n\right) =  \begin{cases}
%\kappa(n) & \text{ if }0\leq  n\leq \NRF-1\\
%1-\sum_{i=1}^{\NRF-1}\kappa(n) & \text{if }n = \NRF\\
%0 & \text{otherwise,}
%\end{cases}
%\end{equation}
\ifSC
\begin{equation}
\label{eq:pn}
\mathbb{P}\left(U_y = n\right) =  \mathds{1}_{\{0\leq  n\leq \NRF-1\}}
\kappa(n) + \left(1-\sum_{i=1}^{\NRF-1}\kappa(n)\right) \mathds{1}_{\{n=\NRF\}}
\end{equation}
\else
\begin{multline}
\label{eq:pn}
\mathbb{P}\left(U_y = n\right) =  \mathds{1}_{\{0\leq  n\leq \NRF-1\}}
\kappa(n) \\+ \left(1-\sum_{i=1}^{\NRF-1}\kappa(n)\right) \mathds{1}_{\{n=\NRF\}}
\end{multline}
\fi

We first model the point process $\PPP$ to be superposition of the point processes $\Phi_\mathrm{L}$ and $\Phi_\mathrm{N}$ with intensities $\dnsty \plos \mathds{1}_{\{||x||\leq\mathrm{D}\}}$ and $\dnsty (1-\plos) \mathds{1}_{\{||x||\leq\mathrm{D}\}}+\dnsty\mathds{1}_{\{||x||> \mathrm{D}\}}$, respectively. These two point processes correspond to LOS and NLOS BSs. 
%Given that there is atleast one LOS BS, we define $x^*_{\mathrm{L}} = \underset{y\in\Phi_\mathrm{L}}{\arg\min}\,{L(y,0)}$. Probability that there is atleast one LOS BS is given by $\mathrm{B}_\mathrm{L} = 1-\exp\left(-\dnsty \plos\pi \mathrm{D}^2\right)$. Similarly, given that there is \\
% one NLOS BS, we define $x^*_{\mathrm{L}} = \underset{y\in\Phi_\mathrm{L}}{\arg\min}\,{L(y,0)}$. Note that the probability that there is at least one NLOS BS can be found to be $\mathrm{B}_\mathrm{N}=1$. 
%Note that if $\x\in\PPP$ is the {\em tagged} BS, then $x = \underset{x\in\{\x^*_\mathrm{L},\x^*_\mathrm{N}\}}{\arg\min}(L(\x^*_\mathrm{L},0),L(\x^*_\mathrm{N},0))$.
The corresponding propagation processes \cite{BlaKarKee12} are given as
$\mathcal{N}_\mathrm{L} = \{{||y||^{\alpha_\mathrm{L}}}/{S_{y,\mathrm{L}}} : y\in\Phi_\mathrm{L}\}$, and 
$\mathcal{N}_\mathrm{N} = \{{||y||^{\alpha_\mathrm{N}}}/{S_{y,\mathrm{N}}} : y\in\Phi_\mathrm{N}\}$.
%The notion of propagation process was proposed in \cite{BlaKarKee12} for the analysis of cellular networks. Using propagation process as a tool to analyze $\SINR$ coverage helps incorporate log-normal shadow fading in the channel and in user association in a tractable way. 

\begin{lem}
\label{lem:mdnstyl}
$\mathcal{N}_\mathrm{L}$ is a non-homogeneous PPP with intensity $\Lambda_\mathrm{L}([0,t))= \dnsty \mathrm{M}_\mathrm{L}(t)$,
where 
\ifSC
\begin{equation}
\mathrm{M}_\mathrm{L}(t) = \pi \plos\left[\dmax^2\Q\left(\Upsilon_{\mathrm{L}}(t)\right)  +  t^{\frac{2}{{\ple_{\mathrm{L}}}}}\exp\left(\frac{2\sigma_{\mathrm{L}}^2}{\ple_{\mathrm{L}}^2}+ \frac{2\meen}{\ple_{\mathrm{L}}} \right) \Q\left({\frac{ 2\sigma_{\mathrm{L}}^2}{\ple_{\mathrm{L}}\sigma_\mathrm{L}} -\Upsilon_{\mathrm{L}}(t)}\right)
\right]. 
\end{equation}
\else
$\mathrm{M}_\mathrm{L}(t)$ is given by (\ref{eq:MLt}). 
\begin{figure*}
\begin{equation}
\label{eq:MLt}
\mathrm{M}_\mathrm{L}(t) = \pi \plos\left[\dmax^2\Q\left(\Upsilon_{\mathrm{L}}(t)\right)  +  t^{\frac{2}{{\ple_{\mathrm{L}}}}}\exp\left(\frac{2\sigma_{\mathrm{L}}^2}{\ple_{\mathrm{L}}^2}+ \frac{2\meen}{\ple_{\mathrm{L}}} \right) \Q\left({\frac{ 2\sigma_{\mathrm{L}}^2}{\ple_{\mathrm{L}}\sigma_\mathrm{L}} -\Upsilon_{\mathrm{L}}(t)}\right)
\right]. 
\end{equation}
\end{figure*}
\fi
Here, $m = -0.1\fsldb\ln10$, $\sigma_{\mathrm{L}}=0.1 \sdpl_\mathrm{L}\ln10$, $\Upsilon_{j}(t) = \frac{\ln(\frac{\dmax^{\ple_{j}}}{t})-\meen}{\sigma_{j}}$ for $j \in \{\mathrm{L},\mathrm{N}\}$and $\Q(.)$ is the Q-function (Standard  Gaussian CCDF).
\end{lem}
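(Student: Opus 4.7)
The plan is to recognize $\mathcal{N}_\mathrm{L}$ as the image of the marked PPP $\{(y,S_{y,\mathrm{L}})\}_{y\in\Phi_\mathrm{L}}$ under the deterministic map $(y,s)\mapsto \|y\|^{\alpha_\mathrm{L}}/s$, and then compute the intensity measure directly.

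First, since the shadowing marks $\{S_{y,\mathrm{L}}\}_{y\in\Phi_\mathrm{L}}$ are i.i.d.\ and independent of $\Phi_\mathrm{L}$, the mapping theorem (equivalently, the displacement theorem) for Poisson processes implies that $\mathcal{N}_\mathrm{L}$ is a non-homogeneous PPP on $\mathbb{R}_+$ with intensity measure
\begin{equation*}
\Lambda_\mathrm{L}([0,t)) \;=\; \int_{\mathbb{R}^2} \pr\!\left(\|y\|^{\alpha_\mathrm{L}}/S_{y,\mathrm{L}} < t\right)\, \lambda_\mathrm{L}(y)\, \mathrm{d}y,
\end{equation*}
where $\lambda_\mathrm{L}(y) = \dnsty\, \plos\, \mathds{1}_{\{\|y\|\leq \dmax\}}$. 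Because $\ln S_{y,\mathrm{L}} \sim \mathit{N}(m,\sigma_\mathrm{L}^2)$ (with $m = -0.1\fsldb\ln 10$ and $\sigma_\mathrm{L} = 0.1\sdpl_\mathrm{L}\ln 10$ encoding the reference path loss and log-normal shadowing), a short calculation gives $\pr(r^{\alpha_\mathrm{L}}/S_{y,\mathrm{L}} < t) = Q\!\left(\frac{\ln(r^{\alpha_\mathrm{L}}/t) - m}{\sigma_\mathrm{L}}\right)$, and converting to polar coordinates yields
\begin{equation*}
\Lambda_\mathrm{L}([0,t)) = 2\pi\,\dnsty\,\plos \int_0^{\dmax} Q\!\left(\frac{\ln(r^{\alpha_\mathrm{L}}/t)-m}{\sigma_\mathrm{L}}\right) r\, \mathrm{d}r.
\end{equation*}

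The main work is evaluating this integral in closed form. I would integrate by parts with $u = Q(\cdot)$ and $\mathrm{d}v = r\,\mathrm{d}r$: the boundary term at $r=\dmax$ produces the $\tfrac{\dmax^2}{2}\Q(\Upsilon_\mathrm{L}(t))$ contribution, the contribution at $r=0$ vanishes, and the remaining integral involves $r\,\phi\!\left(\frac{\ln(r^{\alpha_\mathrm{L}}/t)-m}{\sigma_\mathrm{L}}\right)$. Substituting $v = \ln r$ turns this into a Gaussian integral in an affine variable; completing the square in the exponent produces the prefactor $\exp\!\left(\frac{2\sigma_\mathrm{L}^2}{\ple_\mathrm{L}^2} + \frac{2m}{\ple_\mathrm{L}}\right)$ together with a Gaussian tail, which rewrites as $Q\!\left(\frac{2\sigma_\mathrm{L}}{\ple_\mathrm{L}} - \Upsilon_\mathrm{L}(t)\right) = Q\!\left(\frac{2\sigma_\mathrm{L}^2}{\ple_\mathrm{L}\sigma_\mathrm{L}} - \Upsilon_\mathrm{L}(t)\right)$. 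Combining the two contributions with the factor $2\pi\dnsty\plos$ and absorbing the $\tfrac12$ gives the claimed $\dnsty\,\mathrm{M}_\mathrm{L}(t)$.

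The conceptual step (mapping theorem plus polar integration) is routine; the genuine obstacle is the bookkeeping in the integration-by-parts + complete-the-square calculation, where one has to keep track of the log-normal parameters $m,\sigma_\mathrm{L}$ and confirm that the upper limit $\dmax$ feeds consistently through the substitution into the definition of $\Upsilon_\mathrm{L}(t)$. Once these algebraic manipulations are carried out, the stated expression for $\mathrm{M}_\mathrm{L}(t)$ follows.
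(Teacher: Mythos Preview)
Your proposal is correct and follows exactly the standard derivation: apply the displacement/mapping theorem to obtain a one-dimensional PPP, write the intensity measure as a radial integral of a Gaussian tail, and evaluate via integration by parts and completing the square. The paper itself does not give any details, simply noting that the result is a special case of Appendix~A of \cite{SinJSAC14}; your sketch is precisely the computation that reference carries out.
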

\begin{proof}
%See Appendix~\ref{app:propprocess}. Note that this is a special case of the result in Appendix~\ref{app:propprocess} of \cite{SinJSAC14}.
Special case of Appendix~A of \cite{SinJSAC14} and is therefore skipped for brevity.
\end{proof}

\begin{lem}
\label{lem:mdnstyn}
$\mathcal{N}_\mathrm{N}$ is a non-homogeneous PPP with intensity
$
\Lambda_\mathrm{N}([0,t)) = \dnsty \mathrm{M}_\mathrm{N}(t)
$, where 
\ifSC
\begin{equation}
\mathrm{M}_\mathrm{N}(t) = -\pi \plos \dmax^2\Q\left(\Upsilon_\mathrm{N}(t)\right)  + \pi t^\frac{2}{{\ple_{\mathrm{N}}}}\exp\left(\frac{2\sigma_{\mathrm{N}}^2}{\ple_{\mathrm{N}}^2}+ \frac{2 m}{\ple_{\mathrm{N}}} \right) \left[1-\plos \Q\left( \frac{2\sigma_{\mathrm{N}}^2}{\ple_{\mathrm{N}}\sigma_{\mathrm{N}}}-\Upsilon_\mathrm{N}(t)\right)\right].
\end{equation}
\else
$\mathrm{M}_\mathrm{N}(t)$ is given by (\ref{eq:MNt}). 
\begin{figure*}
\begin{equation}
\label{eq:MNt}
\mathrm{M}_\mathrm{N}(t) = -\pi \plos \dmax^2\Q\left(\Upsilon_\mathrm{N}(t)\right)  + \pi t^\frac{2}{{\ple_{\mathrm{N}}}}\exp\left(\frac{2\sigma_{\mathrm{N}}^2}{\ple_{\mathrm{N}}^2}+ \frac{2 m}{\ple_{\mathrm{N}}} \right) \left[1-\plos \Q\left( \frac{2\sigma_{\mathrm{N}}^2}{\ple_{\mathrm{N}}\sigma_{\mathrm{N}}}-\Upsilon_\mathrm{N}(t)\right)\right].
\end{equation}
\end{figure*}
\fi
\end{lem}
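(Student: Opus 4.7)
The plan is to use the mapping (displacement) theorem for Poisson point processes. Since $\Phi_\mathrm{N}$ is a PPP on $\mathbb{R}^2$ with intensity $\lambda_\mathrm{N}(y) = \dnsty(1-\plos)\mathds{1}_{\{\|y\|\leq \dmax\}} + \dnsty\mathds{1}_{\{\|y\|>\dmax\}}$ and each point $y$ carries an independent shadowing mark $S_{y,\mathrm{N}}$ that is log-normal (with $\ln S_{y,\mathrm{N}} \sim N(-m,\sigma_\mathrm{N}^2)$ under the reparametrization $m = -0.1\fsldb\ln 10$, $\sigma_\mathrm{N} = 0.1\sdpl_\mathrm{N}\ln 10$), the transformed set $\mathcal{N}_\mathrm{N} = \{\|y\|^{\alpha_\mathrm{N}}/S_{y,\mathrm{N}} : y\in\Phi_\mathrm{N}\}$ is again a (one-dimensional, non-homogeneous) PPP on $\mathbb{R}_+$. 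Hence it suffices to compute its intensity measure $\Lambda_\mathrm{N}([0,t))$.

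First, by Campbell's theorem,
\begin{equation*}
\Lambda_\mathrm{N}([0,t)) = \int_{\mathbb{R}^2} \lambda_\mathrm{N}(y)\,\mathbb{P}\!\left(\frac{\|y\|^{\alpha_\mathrm{N}}}{S_{y,\mathrm{N}}}<t\right) dy.
\end{equation*}
I would then rewrite $\lambda_\mathrm{N}(y) = \dnsty - \dnsty\plos \mathds{1}_{\{\|y\|\leq \dmax\}}$, which decomposes $\Lambda_\mathrm{N}([0,t))$ as a ``full NLOS on all of $\mathbb{R}^2$'' contribution minus a ``LOS correction inside the disc of radius $\dmax$''. This mirrors the derivation that yields $\mathrm{M}_\mathrm{L}(t)$ in Lemma~\ref{lem:mdnstyl}, but with the roles of $\plos$ and $1-\plos$ reversed and with an added unrestricted outer contribution.

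Next, using the log-normal CDF, $\mathbb{P}(\|y\|^{\alpha_\mathrm{N}}/S_{y,\mathrm{N}}<t) = \Q\!\left((\ln(\|y\|^{\alpha_\mathrm{N}}/t) - m)/\sigma_\mathrm{N}\right)$, I would pass to polar coordinates and apply the standard change of variables $r = t^{1/\alpha_\mathrm{N}}\exp((\sigma_\mathrm{N} z + m)/\alpha_\mathrm{N})$ that brings $r\,dr$ into a Gaussian exponent. For the unrestricted piece this collapses to the prefactor $\pi t^{2/\alpha_\mathrm{N}}\exp\!\left(2\sigma_\mathrm{N}^2/\alpha_\mathrm{N}^2 + 2m/\alpha_\mathrm{N}\right)$ times $1$, while for the restricted LOS-correction piece the upper integration limit $\dmax$ becomes the Q-function cutoff $\Upsilon_\mathrm{N}(t) = (\ln(\dmax^{\alpha_\mathrm{N}}/t) - m)/\sigma_\mathrm{N}$, generating both the $-\pi\plos\dmax^2 \Q(\Upsilon_\mathrm{N}(t))$ term (from the non-shifted Gaussian) and the bracketed $-\plos\,\Q\!\left(2\sigma_\mathrm{N}^2/(\alpha_\mathrm{N}\sigma_\mathrm{N}) - \Upsilon_\mathrm{N}(t)\right)$ term (from the Gaussian shifted by $2\sigma_\mathrm{N}/\alpha_\mathrm{N}$ during the substitution). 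Adding the two pieces and factoring out $\pi t^{2/\alpha_\mathrm{N}}\exp(2\sigma_\mathrm{N}^2/\alpha_\mathrm{N}^2 + 2m/\alpha_\mathrm{N})$ reproduces the claimed expression for $\mathrm{M}_\mathrm{N}(t)$.

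The only delicate step is bookkeeping: the shift $2\sigma_\mathrm{N}/\alpha_\mathrm{N}$ in the Gaussian exponent during the substitution must be tracked carefully so that the second Q-function argument emerges as $2\sigma_\mathrm{N}^2/(\alpha_\mathrm{N}\sigma_\mathrm{N}) - \Upsilon_\mathrm{N}(t)$ with the correct sign, and the $\dmax^2$ prefactor in the leading negative term must come from evaluating the non-shifted Gaussian integral at $r=\dmax$. Beyond this, the derivation is a direct parallel of the LOS case already proven in Appendix~A of \cite{SinJSAC14}, so I would simply cite that computation for the algebraic skeleton and point out the substitutions $\plos \leftrightarrow (1-\plos)$ inside the disc together with the extra unrestricted outer term.
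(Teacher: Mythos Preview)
Your proposal is correct and follows exactly the route the paper has in mind: the paper's own ``proof'' simply says the argument proceeds as in Lemma~\ref{lem:mdnstyl}, which in turn is deferred to Appendix~A of \cite{SinJSAC14}, i.e., the displacement/mapping theorem applied to the marked PPP followed by the log-normal integral in polar coordinates. Your decomposition $\lambda_\mathrm{N}(y) = \dnsty - \dnsty\plos\,\mathds{1}_{\{\|y\|\le\dmax\}}$ is precisely what makes the NLOS case differ from the LOS one and is the reason the unrestricted ``$1$'' appears inside the bracket alongside the $-\plos\,\Q(\cdot)$ correction; aside from a minor sign convention in the mean of $\ln S_{y,\mathrm{N}}$ (it should be $m$, not $-m$, for $\Upsilon_\mathrm{N}(t)$ to come out as stated), the sketch is complete.
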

\begin{proof}
Proceeds very similarly to Lemma~\ref{lem:mdnstyl} and thus is omitted. 
\end{proof}
Note that here $\Lambda_\mathrm{L}([0,\infty)) = \dnsty \pi \plos\mathrm{D}^2$. The probability that there is no point in the interval $[0,\infty)$ is equal to $ \exp\left(-\dnsty \pi \plos\mathrm{D}^2\right)$. This is exactly the probability that there is no point in $\Phi_\mathrm{L}$.  Let us call the probability that there is at least one point in $\mathcal{N}_\mathrm{L}$ to be $\mathrm{B}_\mathrm{L}$ .The event that number of points in $\Phi_\mathrm{N}$ is zero is empty and thus, $\mathrm{B}_\mathrm{N} = 1$.

\begin{cor}
Let $\mathcal{N}$ be the point process of propagation losses corresponding to $\PPP$. This point process is a PPP with intensity $\Lambda((0,t]) = \dnsty (\mathrm{M}_\mathrm{L}(t)+\mathrm{M}_\mathrm{N}(t))= \dnsty \mathrm{M}(t)$. 
\end{cor}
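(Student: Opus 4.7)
The plan is to obtain the result as a direct consequence of the superposition theorem for independent Poisson point processes. Recall that $\PPP$ was split into $\Phi_\mathrm{L}$ and $\Phi_\mathrm{N}$ via an independent marking of each point as LOS or NLOS according to the blockage model. By the standard independent thinning theorem, $\Phi_\mathrm{L}$ and $\Phi_\mathrm{N}$ are themselves independent PPPs, and their intensities add up to $\dnsty$, the intensity of $\PPP$. The shadowing random variables $S_{y,\mathrm{L}}$ and $S_{y,\mathrm{N}}$ attached to each BS are also independent across points, so the propagation processes $\mathcal{N}_\mathrm{L}$ and $\mathcal{N}_\mathrm{N}$, obtained from $\Phi_\mathrm{L}$ and $\Phi_\mathrm{N}$ through independent mark-driven mappings, inherit this independence.

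Next, I would invoke Lemmas~\ref{lem:mdnstyl} and~\ref{lem:mdnstyn}, which already establish that $\mathcal{N}_\mathrm{L}$ and $\mathcal{N}_\mathrm{N}$ are non-homogeneous PPPs on $(0,\infty)$ with intensity measures $\dnsty \mathrm{M}_\mathrm{L}(\cdot)$ and $\dnsty \mathrm{M}_\mathrm{N}(\cdot)$, respectively. Since the propagation point process corresponding to $\PPP$ is by definition
\begin{equation*}
\mathcal{N} \;=\; \mathcal{N}_\mathrm{L} \cup \mathcal{N}_\mathrm{N},
\end{equation*}
the superposition theorem for independent PPPs immediately yields that $\mathcal{N}$ is a PPP with intensity measure equal to the sum of the two component intensities, i.e., $\Lambda((0,t]) = \dnsty(\mathrm{M}_\mathrm{L}(t) + \mathrm{M}_\mathrm{N}(t)) = \dnsty \mathrm{M}(t)$.

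There is no real obstacle here; the only subtlety worth flagging explicitly in the write-up is the independence between $\mathcal{N}_\mathrm{L}$ and $\mathcal{N}_\mathrm{N}$, which follows from (a) the independent LOS/NLOS marking used to define $\Phi_\mathrm{L}$ and $\Phi_\mathrm{N}$, and (b) the independence of the shadowing marks $\{S_{y,\mathrm{L}}\}$ and $\{S_{y,\mathrm{N}}\}$ across points. Once independence is justified, the corollary is a one-line application of superposition and can be stated without further calculation.
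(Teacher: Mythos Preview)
Your proposal is correct and is essentially the same approach as the paper's own proof, which simply states that the result follows directly from the superposition property of PPPs. Your write-up adds a useful explicit justification of the independence between $\mathcal{N}_\mathrm{L}$ and $\mathcal{N}_\mathrm{N}$ (via independent thinning and independent shadowing marks), which the paper leaves implicit, but the underlying argument is identical.
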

\begin{proof}
Follows directly from the Superposition property of PPPs\cite[Proposition 1.3.3]{BacBook09}.
\end{proof}
\begin{figure*}
\begin{multline}
\label{eq:MLdash}
\mathrm{M}^{'}_\mathrm{L}(t) =  \pi\plos\Bigg\{
\exp\left(\frac{2\sigma_{\mathrm{L}}^2}{\ple_{\mathrm{L}}^2}+ \frac{2m}{\ple_{\mathrm{L}}} \right) t^{\frac{2}{\ple_{\mathrm{L}}}-1} \Bigg[\frac{2}{\ple_\mathrm{L}}\Q\left(\frac{2\sigma_{\mathrm{L}}^2}{\ple_{\mathrm{L}} \sigma_\mathrm{L}} -\Upsilon_\mathrm{L}(t)\right)\Bigg.\\ \left.-\frac{1}{\sqrt{2\pi\sigma_\mathrm{L}^2}}\exp\left(-\left(\frac{\sqrt{2}\sigma_{\mathrm{L}}}{\ple_{\mathrm{L}}} - \frac{\Upsilon_{\mathrm{L}}(t)}{\sqrt{2}}\right)^2\right)\right]
+ \frac{\dmax^2}{\sqrt{2\pi} t\sigma_\mathrm{L}}\exp\left(-\frac{\Upsilon^2_\mathrm{L}(t)}{2}\right)\Bigg\}.
\end{multline}
\end{figure*}
\begin{figure*}
\begin{multline}
\label{eq:MNdash}
\mathrm{M}^{'}_\mathrm{N}(t) = \pi\plos\Bigg\{
\exp\left(\frac{2\sigma_{\mathrm{N}}^2}{\ple_{\mathrm{N}}^2}+ \frac{2m}{\ple_{\mathrm{N}}} \right) t^{\frac{2}{\ple_{\mathrm{N}}}-1}
\Bigg[\frac{2}{\plos\ple_{\mathrm{N}}}-\frac{2}{\ple_\mathrm{N}}\Q\left({\frac{2\sigma_{\mathrm{N}}^2}{\ple_{\mathrm{N}}\sigma_\mathrm{N}} -\Upsilon_\mathrm{N}(t)}\right)\Bigg.\\ \left.+\frac{1}{\sqrt{2\pi\sigma_\mathrm{N}^2}}\exp\left(-\left({\frac{\sqrt{2}\sigma_{\mathrm{N}}}{\ple_{\mathrm{N}}} - \frac{\Upsilon_\mathrm{N}(t)}{\sqrt{2}}}\right)^2\right)\right]
-\frac{\dmax^2}{\sqrt{2\pi} t\sigma_\mathrm{N}}\exp\left(-\frac{\Upsilon^2_\mathrm{N}(t)}{2}\right)\Bigg\}.
\end{multline}
\end{figure*}
\begin{lem}
\label{lem:mdnsty}
Given that $\mathcal{N}_\mathrm{L}$ and $\mathcal{N}_\mathrm{N}$ are not empty, the probability density function (PDF) of the distance to the point nearest to origin in these point processes is given by 
$f_\mathrm{L}(t)  = \dnsty\exp\left(-\dnsty \mathrm{M}_\mathrm{L}(t)\right) \mathrm{M}^{'}_\mathrm{L}(t)/\mathrm{B}_\mathrm{L}$ and $f_\mathrm{N}(t)  = \dnsty \exp\left(-\dnsty \mathrm{M}_\mathrm{N}(t)\right)
\mathrm{M}^{'}_\mathrm{N}(t)/\mathrm{B}_\mathrm{N}$,
where $\mathrm{M}^{'}_\mathrm{L}(t)$ and $\mathrm{M}^{'}_\mathrm{N}(t)$ are given in (\ref{eq:MLdash}) and (\ref{eq:MNdash}), respectively.
\end{lem}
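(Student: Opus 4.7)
The plan is to apply the standard void-probability characterization of the contact (nearest-point) distance for a non-homogeneous PPP, and then carry out the explicit differentiation of $\mathrm{M}_\mathrm{L}(t)$ and $\mathrm{M}_\mathrm{N}(t)$ furnished by Lemmas~\ref{lem:mdnstyl} and \ref{lem:mdnstyn}. The conditioning on non-emptiness will enter only through a normalization by $\mathrm{B}_\mathrm{L}$ and $\mathrm{B}_\mathrm{N}$.

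First, for any non-homogeneous PPP on $\mathbb{R}_+$ with mean measure $\Lambda([0,t))$, the void probability yields $\pr(T > t) = \exp(-\Lambda([0,t)))$ for the distance $T$ to the nearest atom. Differentiating, the (defective) density of $T$ is $\Lambda'([0,t))\exp(-\Lambda([0,t)))$, which integrates to $1 - \exp(-\Lambda([0,\infty)))$, i.e., to the probability that the point process is non-empty. Since the event $\{T \leq t\}$ for finite $t$ is contained in the event of non-emptiness, dividing this defective density by that total mass gives the conditional PDF. Specializing to $\mathcal{N}_\mathrm{L}$ (resp.\ $\mathcal{N}_\mathrm{N}$) with mean measure $\dnsty \mathrm{M}_\mathrm{L}(t)$ (resp.\ $\dnsty \mathrm{M}_\mathrm{N}(t)$) and normalizer $\mathrm{B}_\mathrm{L}$ (resp.\ $\mathrm{B}_\mathrm{N}$) immediately yields the claimed form, provided $\mathrm{M}'_\mathrm{L}$ and $\mathrm{M}'_\mathrm{N}$ are identified with (\ref{eq:MLdash}) and (\ref{eq:MNdash}).

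The second step, which is the main obstacle, is the differentiation. I would record the chain-rule ingredient $\Upsilon_j'(t) = -1/(\sigma_j t)$ for $j \in \{\mathrm{L},\mathrm{N}\}$ and the Gaussian identity $\Q'(x) = -(2\pi)^{-1/2}\exp(-x^2/2)$. Differentiating $\mathrm{M}_\mathrm{L}(t)$ term-by-term, the $\dmax^2 \Q(\Upsilon_\mathrm{L}(t))$ term produces the boundary contribution $\dmax^2(\sqrt{2\pi}\,t\sigma_\mathrm{L})^{-1}\exp(-\Upsilon_\mathrm{L}^2(t)/2)$. The second term requires the product rule applied to $t^{2/\ple_\mathrm{L}}$ and $\Q(2\sigma_\mathrm{L}^2/(\ple_\mathrm{L}\sigma_\mathrm{L}) - \Upsilon_\mathrm{L}(t))$, yielding the two bracketed contributions in (\ref{eq:MLdash}); a small algebraic step rewrites $\tfrac{1}{2}(2\sigma_\mathrm{L}^2/(\ple_\mathrm{L}\sigma_\mathrm{L}) - \Upsilon_\mathrm{L}(t))^2$ as $(\sqrt{2}\sigma_\mathrm{L}/\ple_\mathrm{L} - \Upsilon_\mathrm{L}(t)/\sqrt{2})^2$, matching the exponent in (\ref{eq:MLdash}). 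The derivation of (\ref{eq:MNdash}) proceeds identically, except that the constant contribution $\pi t^{2/\ple_\mathrm{N}}\exp(2\sigma_\mathrm{N}^2/\ple_\mathrm{N}^2 + 2m/\ple_\mathrm{N})$ in $\mathrm{M}_\mathrm{N}(t)$ (the `$1$' inside the bracket there) differentiates to the extra $2/(\plos \ple_\mathrm{N})$ term appearing in (\ref{eq:MNdash}), and the $-\pi \plos \dmax^2 \Q(\Upsilon_\mathrm{N}(t))$ term produces the negative boundary contribution.

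Once the two derivatives are in hand, the conditional PDFs follow by directly substituting into $f_j(t) = \dnsty \mathrm{M}'_j(t)\exp(-\dnsty \mathrm{M}_j(t))/\mathrm{B}_j$ for $j\in\{\mathrm{L},\mathrm{N}\}$. The bookkeeping in the differentiation step is the only real obstacle; everything else is a direct invocation of the contact-distance formula for a PPP together with the definition of conditional probability.
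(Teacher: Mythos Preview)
Your proposal is correct and follows essentially the same route as the paper: obtain the survivor function of the nearest point via the PPP void probability, normalize by the non-emptiness probability $\mathrm{B}_j$, and differentiate. The paper computes the conditional CCDF $\pr(l^*>t\mid\mathcal{N}_j\neq\emptyset)=(\exp(-\Lambda_j([0,t)))-\exp(-\Lambda_j([0,\infty))))/\mathrm{B}_j$ directly and then takes the negative derivative, whereas you pass through the defective density and then normalize; these are equivalent. Your explicit term-by-term differentiation to recover (\ref{eq:MLdash}) and (\ref{eq:MNdash}) is an addition the paper omits, and the chain-rule pieces you record ($\Upsilon_j'(t)=-1/(\sigma_j t)$, $\Q'(x)=-(2\pi)^{-1/2}e^{-x^2/2}$) are exactly what is needed.
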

\begin{proof}
If $l^*$ is the point nearest to origin in the point process $\mathcal{N}_L$, 
\ifSC
\begin{align*}
&\mathbb{P}\left(l^*>t\big|\mathcal{N}_L([0,\infty))>0\right)
= \mathbb{P}\left(\mathcal{N}_L([0,t))=0\big|\mathcal{N}_L([0,\infty))>0\right)\\
&=\mathbb{P}\left(\mathcal{N}_L([0,t))=0\cap\mathcal{N}_L([0,\infty))>0\right)/\mathbb{P}\left(\mathcal{N}_L([0,\infty))>0\right)
=\mathbb{P}\left(\mathcal{N}_L([0,t))=0\cap\mathcal{N}_L([t,\infty))>0\right)/\mathrm{B}_\mathrm{L}
\end{align*}
\begin{align*}
&=\mathbb{P}\left(\mathcal{N}_L([0,t))=0\right)\pr \left(\mathcal{N}_L([t,\infty))>0\right)/\mathrm{B}_\mathrm{L}
=\exp\left(-\Lambda_\mathrm{L}([0,t)]\right)\left(1-\exp\left(-\Lambda_\mathrm{L}([t,\infty)]\right)\right)/\mathrm{B}_\mathrm{L}\\
&=\left(\exp\left(-\Lambda_\mathrm{L}([0,t)]\right)-\exp\left(-\Lambda_\mathrm{L}([0,\infty)]\right)\right)/\mathrm{B}_\mathrm{L}.
\end{align*}
\else
\begin{align*}
&\mathbb{P}\left(l^*>t\big|\mathcal{N}_L([0,\infty))>0\right)  \\&
= \mathbb{P}\left(\mathcal{N}_L([0,t))=0\big|\mathcal{N}_L([0,\infty))>0\right)\\
 &=\frac{\mathbb{P}\left(\mathcal{N}_L([0,t))=0\cap\mathcal{N}_L([0,\infty))>0\right)}{\mathbb{P}\left(\mathcal{N}_L([0,\infty))>0\right)}
\\&
=\mathbb{P}\left(\mathcal{N}_L([0,t))=0\cap\mathcal{N}_L([t,\infty))>0\right)/\mathrm{B}_\mathrm{L} %\\
 \end{align*}
 \begin{align*}
&=\mathbb{P}\left(\mathcal{N}_L([0,t))=0\right)\mathbb{P} \left(\mathcal{N}_L([t,\infty))>0\right)/\mathrm{B}_\mathrm{L}
 \\&
=\exp\left(-\Lambda_\mathrm{L}([0,t)]\right)\left(1-\exp\left(-\Lambda_\mathrm{L}([t,\infty)]\right)\right)/\mathrm{B}_\mathrm{L}\\
&=\left(\exp\left(-\Lambda_\mathrm{L}([0,t)]\right)-\exp\left(-\Lambda_\mathrm{L}([0,\infty)]\right)\right)/\mathrm{B}_\mathrm{L}.
\end{align*}
\fi

Thus, taking the negative derivative of the above expression we get the PDF $f_\mathrm{L}(t) $. Similarly, we can derive the PDF for the NLOS case.
\end{proof}

\begin{thm}
\label{thm:sinr}
The $\SNR$ coverage of a typical user in the network is given by
\begin{equation}
\label{eq:sinrthmnoiselimit}
\mathcal{\overline{S}}(\tau) \triangleq \pr(\SNR_{x,0} > \tau)= \cexpect{\mathcal{S}(\tau,U_\x)}{U_\x}, \text{ where }
\end{equation}
\begin{multline*}
\mathcal{S}(\tau,\mathrm{U}) \approx \sum_{j\in\{\mathrm{L,N}\}} \mathrm{B}_j \zeta(\eta_j,\mathrm{U}) \sum_{n=1}^{\eta_j}(-1)^{n+1}{\eta_j \choose n}
\ifSC
\else \\\times
\fi
\int\limits_{0}^{\infty}\exp\left(-\frac{\eta_j\tau n \mathrm{U} l \sigma^2_n}{\mathrm{G}}-\dnsty \mathrm{M}_{\overline{j}}(l)
\right) f_{j}(l)\mathrm{d}l,
\end{multline*} 
where $\mathrm{G} = \power{} \NBS \NMS$, $\overline{j} = \mathrm{L}$ if $j = \mathrm{N}$ and vice versa. 
The terms $\zeta(.)$, $\mathrm{M}_j(.)$ and $f_{(.)}$ are derived in Proposition~\ref{prop:1}, Lemma~\ref{lem:mdnstyl}, Lemma~\ref{lem:mdnstyn} and Lemma~\ref{lem:mdnsty}.
\end{thm}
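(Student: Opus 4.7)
The plan is to decompose the coverage event by conditioning on three things in sequence: (i) the number of users $U_\x$ served by the tagged BS (giving the outer expectation), (ii) whether the tagged BS is LOS or NLOS (giving the sum over $j\in\{\mathrm{L},\mathrm{N}\}$), and (iii) the propagation loss $l$ of the tagged BS (giving the integral against $f_j(l)$). The approximate SNR expression from Proposition~\ref{prop:1} is the workhorse, and Lemmas~\ref{lem:mdnstyl}--\ref{lem:mdnsty} supply the statistics of the tagged BS.

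First I would substitute the Proposition~\ref{prop:1} approximation $\SNR_{\x,0}\approx \frac{\mathrm{G}}{\eta_{\x,0} U_\x \sigma_n^2}|\gamma_{i_m,\x,0}|^2 L(\x,0)^{-1} p_{\mathrm{MU}}$, replacing $p_{\mathrm{ZF}}$ by its stochastic lower bound $p_{\mathrm{MU}}\sim\mathrm{Bern}(\zeta(\eta_j,U_\x))$ independent of $|\gamma_{i_m}|$ and $L$. Then, conditioning on the tagged BS being of type $j$ with propagation loss $l$, the joint ``density'' of this event is $\mathrm{B}_j f_j(l)\exp(-\dnsty \mathrm{M}_{\overline{j}}(l))$: here $\mathrm{B}_j f_j(l)$ comes from Lemma~\ref{lem:mdnsty} (density of the nearest point in $\mathcal{N}_j$ weighted by the probability $\mathcal{N}_j$ is non-empty), while $\exp(-\dnsty \mathrm{M}_{\overline{j}}(l))$ is the void probability from the complementary point process $\mathcal{N}_{\overline{j}}$ on $[0,l)$ — exactly the condition that no BS of the other class offers a smaller propagation loss. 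Independence of these two point processes (Lemmas~\ref{lem:mdnstyl}, \ref{lem:mdnstyn}) makes this factorization clean.

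Next I would compute the conditional coverage given $j$, $l$, and $U_\x$. Because $\{\gamma_{i,\x,0}\}$ are i.i.d.\ $\mathcal{CN}(0,1)$, each $|\gamma_i|^2$ is $\mathrm{Exp}(1)$, so $|\gamma_{i_m}|^2$ is the maximum of $\eta_j$ i.i.d.\ unit exponentials with CCDF
\begin{equation*}
\pr(|\gamma_{i_m}|^2>y) = 1-(1-e^{-y})^{\eta_j} = \sum_{n=1}^{\eta_j}(-1)^{n+1}\binom{\eta_j}{n}e^{-ny},
\end{equation*}
by the binomial theorem. Setting the threshold $y = \eta_j U_\x l \sigma_n^2 \tau/\mathrm{G}$ and multiplying by $\zeta(\eta_j,U_\x)$ (the independent Bernoulli contribution from $p_{\mathrm{MU}}$) gives the integrand inside the sum over $n$. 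Integrating over $l$ against $\mathrm{B}_j f_j(l)\exp(-\dnsty \mathrm{M}_{\overline{j}}(l))$, summing over $j$, and finally taking the expectation over $U_\x$ (whose PMF is given by \eqref{eq:pn}) assembles the displayed formula.

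The main obstacle is that strictly speaking Proposition~\ref{prop:1} only gives a stochastic lower bound on the signal power (the event that $\overline{\bf H}_\x$ fails to have the block-diagonal form in \eqref{eq:equivalentchannel} is assigned signal power $0$), so the resulting expression is an approximation rather than an equality — this is why the statement uses ``$\approx$'' rather than ``$=$''. A secondary subtlety is justifying that the number of users $U_\x$ served by the tagged BS is effectively independent of the serving link's propagation loss and of $\gamma_{i_m}$: this leans on the independence of $\PPPu$ from $\PPP$ and on the modelling assumption \eqref{eq:pn} borrowed from \cite{SinDhiAnd13}, which lets $U_\x$ factor out as an outer expectation. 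Everything else is routine manipulation of PPP functionals and the binomial expansion.
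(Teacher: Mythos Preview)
Your proposal is correct and follows essentially the same approach as the paper's proof: condition on $U_\x$, split into LOS/NLOS association, integrate over the propagation loss using the joint density $\mathrm{B}_j f_j(l)\exp(-\dnsty \mathrm{M}_{\overline{j}}(l))$, then apply Proposition~\ref{prop:1} together with the Bernoulli law of $p_{\mathrm{MU}}$ and the binomial expansion for the CCDF of the maximum of $\eta_j$ unit exponentials. The only cosmetic difference is that the paper first defines the association probabilities $\mathrm{A}_j$ and the conditional densities $\tilde{f}_j(l)=\frac{\mathrm{B}_j}{\mathrm{A}_j}f_j(l)\exp(-\dnsty \mathrm{M}_{\overline{j}}(l))$ and then lets the $\mathrm{A}_j$ factors cancel, whereas you go directly to the joint density; the resulting computation is identical.
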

\begin{proof}
See Appendix~\ref{app:sinr}
\end{proof}
\begin{cor}
Assuming that user density is much larger than BS density, the $\SNR$ coverage can be approximated by $\mathcal{S}(\tau,\mathrm{U}_\mathrm{M})$.
\end{cor}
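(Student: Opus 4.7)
The plan is to start from Theorem~\ref{thm:sinr}, where $\overline{\mathcal{S}}(\tau) = \cexpect{\mathcal{S}(\tau,U_\x)}{U_\x}$, expand this expectation using the PMF of $U_\x$ given in~(\ref{eq:pn}), and argue that when $\rho = \userdnsty/\dnsty \to \infty$ the mass of $U_\x$ concentrates on its maximum value $\NRF$, collapsing the expectation to a single term. Concretely, I would first write
\begin{equation*}
\cexpect{\mathcal{S}(\tau,U_\x)}{U_\x} = \sum_{n=1}^{\NRF-1} \kappa(n)\,\mathcal{S}(\tau,n) + \left(1-\sum_{n=1}^{\NRF-1}\kappa(n)\right)\mathcal{S}(\tau,\NRF),
\end{equation*}
using the tagged-BS form of $\kappa(n)$ in~(\ref{eq:kappa}).

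Next I would analyze the behavior of $\kappa(n)$ as $\rho \to \infty$ for each fixed $n \geq 1$. The $\rho$-dependence in~(\ref{eq:kappa}) is entirely through $\rho^{n-1}(3.5+\rho)^{-n-3.5}$, which is $O(\rho^{-4.5})$ as $\rho \to \infty$ for any fixed $n$. Consequently $\kappa(n) \to 0$ for every $n < \infty$, and therefore $\sum_{n=1}^{\NRF-1}\kappa(n) \to 0$, so that $\pr(U_\x = \NRF) \to 1$. Since $\mathcal{S}(\tau,n)$ is a probability and hence bounded by $1$ uniformly in $n$, the first sum in the display above is $O(\sum_{n=1}^{\NRF-1}\kappa(n)) = O(\rho^{-4.5})$ and the prefactor of $\mathcal{S}(\tau,\NRF)$ tends to $1$. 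Passing to the limit (or, for finite but large $\rho$, treating the $O(\rho^{-4.5})$ remainder as negligible) yields $\overline{\mathcal{S}}(\tau) \approx \mathcal{S}(\tau,\NRF)$, which is the claim.

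The main obstacle is essentially bookkeeping rather than anything deep: one must confirm that the approximation of the load distribution by~(\ref{eq:kappa}) remains the right object to plug into Theorem~\ref{thm:sinr} (it is, because the theorem already uses~(\ref{eq:pn})), and that the bound $\mathcal{S}(\tau,n) \leq 1$ is enough to control the error uniformly in $\tau$. There is no need to invoke dominated convergence in any stronger form, because the sum runs only over the finite range $n \in \{1,\dots,\NRF-1\}$. I would therefore present the corollary as a direct limiting statement with an $O(\rho^{-4.5})$ error term, which also makes precise the qualitative phrase ``user density much larger than BS density.''
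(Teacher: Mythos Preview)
Your argument is correct and is precisely the reasoning the paper has in mind; the paper states this corollary without proof, and your expansion of $\cexpect{\mathcal{S}(\tau,U_\x)}{U_\x}$ via (\ref{eq:pn}) together with the observation that $\kappa(n)=O(\rho^{-4.5})$ for each fixed $n<\NRF$ is exactly the intended justification. There is nothing to add.
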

%\begin{cor}
%The $\SNR$ coverage of a typical user, if it has a single omni-directional antenna is as given in Theorem~\ref{thm:sinr} but with %$\zeta_{(.), \mathrm{UE}}(.)=1$ and $\NMS=1$.
%\end{cor}

\begin{thm}
\label{thm:rate}
In a noise-limited network, the per user rate distribution (or rate coverage) of a {\em typical} user at origin served by a BS at $\x$ is given by
\ifSC
\begin{align*}
\mathcal{R}(\tau_r)\triangleq \mathbb{P}\left(R_{\x,0}>\tau_r\right)
= \sum_{n\geq 1}\kappa(n)\mathcal{S}\left(2^{\frac{\tau_r n}{\omega_{\mathrm{MU}} \res \min\left(n,\NRF\right)}}-1,\min\left(n,\NRF\right)\right),
 \end{align*}
\else 
\begin{align*}
\mathcal{R}(\tau_r)&\triangleq \mathbb{P}\left(R_{\x,0}>\tau_r\right) \\
&= \sum_{n\geq 1}\kappa(n)\mathcal{S}\left(2^{\frac{\tau_r n}{\omega_{\mathrm{MU}} \res \min\left(n,\NRF\right)}}-1,\min\left(n,\NRF\right)\right),
\end{align*}
\fi
where $\mathcal{S}(.)$ was defined in Theorem~\ref{thm:sinr} and $\kappa(n)$ is given in (\ref{eq:kappa}).
\end{thm}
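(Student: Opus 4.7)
The plan is to condition on the load $N_x$ of the tagged BS and then reduce each conditional rate-coverage probability to a conditional SNR-coverage probability, which is exactly what $\mathcal{S}(\cdot,\cdot)$ in Theorem~\ref{thm:sinr} computes. First, from the per-user rate definition (\ref{eq:peruserrate}) with $\omega_x = \omega_{\mathrm{MU}}$ constant and $U_x = \min(N_x, \NRF)$, the event $\{R_{x,0} > \tau_r\}$ is equivalent to
\begin{equation*}
\SNR_{x,0} > 2^{\frac{\tau_r N_x}{\omega_{\mathrm{MU}} \res \min(N_x, \NRF)}} - 1,
\end{equation*}
where $\SINR$ collapses to $\SNR$ by the noise-limited assumption of the theorem.

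I would then apply the law of total probability over $N_x$. Because the tagged BS serves at least the typical user, $N_x \geq 1$, and its PMF is $\kappa(n)$ as given in (\ref{eq:kappa}) with $\kappa(0)=0$. Conditional on $N_x = n$, the number of co-scheduled streams becomes the deterministic value $\min(n, \NRF)$, so the threshold in the event above is also deterministic. Substituting into Theorem~\ref{thm:sinr} the conditional SNR coverage $\mathcal{S}(\tau, U)$ at $\tau = 2^{\tau_r n /(\omega_{\mathrm{MU}} \res \min(n,\NRF))} - 1$ and $U = \min(n, \NRF)$, and summing over $n\geq 1$, yields the claimed expression.

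The only subtle point, and thus the place where I would spend some care, is to justify that conditional on $N_x = n$ the SNR coverage of the typical user depends on $n$ only through $U_x = \min(n, \NRF)$. This is what legitimizes reusing Theorem~\ref{thm:sinr} (whose final expression is itself an expectation over $U_x$). Under the noise-limited approximation there is no out-of-cell interference term to track, and in the signal-power approximation of Proposition~\ref{prop:1} the load $N_x$ enters only through $U_x$: in the normalization $1/U_x$, and through the ZF-penalty bound $\zeta(\eta_{x,\us_1}, U_x)$. The path-loss and fading terms of the serving link are independent of the load given the typical-user association. Finally, I would note explicitly that the result inherits the load-PMF approximation of \cite{SinDhiAnd13} that is already baked into $\kappa(n)$, so no additional approximation is introduced at this step.
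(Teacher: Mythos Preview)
Your proposal is correct and follows the same approach as the paper: rearranging the rate definition (\ref{eq:peruserrate}) under the noise-limited assumption $\SINR=\SNR$, conditioning on the load $N_\x$, and invoking the conditional $\SNR$ coverage $\mathcal{S}(\cdot,\cdot)$ from Theorem~\ref{thm:sinr}. The paper's proof is a one-line ``follows by rearranging (\ref{eq:peruserrate}) and using $\SNR=\SINR$''; your additional care in justifying that $N_\x$ enters only through $U_\x=\min(n,\NRF)$ in Proposition~\ref{prop:1} is a welcome elaboration but not a departure from the paper's argument.
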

\begin{proof}
Follows by re-arranging (\ref{eq:peruserrate}) and using $\SNR=\SINR$. 
\end{proof}
Although the above expression is an infinite summation, as verified earlier in \cite{SinDhiAnd13,SinJSAC14}, it can be accurately represented as a finite summation. For the results in this work, considering the first $\lfloor 12\lambda_\mathrm{UE}/\lambda_\mathrm{BS}\rfloor $ terms is sufficient. The following definition will be useful when comparing the rate coverage of MU-MIMO with SM and SU-BF.
\begin{definition}
\label{def:releffi}
The minimum allowable efficiency of scheme A such that it is guaranteed to outperform scheme B in terms of per user rate for $p$ percentile users (that is users with rate coverage $p$), is given by 
$
\mathcal{O}_{\mathrm{A},\mathrm{B}}(p) = \frac{\mathcal{R}^{-1}_\mathrm{B}(p)}{\mathcal{R}^{-1}_\mathrm{A}(p)},
$
where $\mathcal{R}^{-1}_\mathrm{A}$ and $\mathcal{R}^{-1}_\mathrm{B}$ are inverse of the rate coverage at $p$ (that is rate thresholds $\tau$ corresponding to $\mathcal{R}(\tau) = p$) for schemes A and B after setting $\omega_\mathrm{A} = \omega_\mathrm{B} = 1$, where $\omega_{(.)}$ are the efficiency factors for the respective MIMO techniques as defined in (\ref{eq:peruserrate}). The per user rate of A cannot stochastically dominate that of B, unless the efficiency of A is at least $\min_p \mathcal{O}_{\mathrm{A},\mathrm{B}}(p)$.
\end{definition}

Note that MU-MIMO implementations with different $\NRF$ are considered as separate MIMO schemes in the above definition since they have different efficiency factors. It is clear that  $\frac{\mathcal{O}_{\mathrm{A},\mathrm{B}}(p)}{\omega_\mathrm{A}}$ is an upper bound on the ratio $\mathcal{R}^{-1}_\mathrm{B}(p)$ and $\mathcal{R}^{-1}_\mathrm{A}(p)$ for non-unity efficiency for scheme A and $\omega_\mathrm{B}=1$, since $\omega_\mathrm{A}$ is the minimum efficiency over all BSs in the network. Note that from  (\ref{eq:peruserrate}), setting $\omega_\mathrm{B}=1$ gives upper bound on rate for scheme B. For scheme A to outperform scheme B for $p$ percentile users, we need  $\frac{\mathcal{O}_{\mathrm{A},\mathrm{B}}(p)}{\omega_\mathrm{A}}\leq 1$ with equality giving minimum allowable $\omega_\mathrm{A}$. If the network is such that all BSs see the same overheads $\omega_\mathrm{A}$ or $\omega_\mathrm{B}$, then $\mathcal{O}_{\mathrm{A},\mathrm{B}}(p)$ is the minimum allowable relative efficiency (that is $\omega_\mathrm{A}/\omega_\mathrm{B}$) of scheme A over scheme B. This gives tighter estimates for allowable $\omega_\mathrm{A}$ especially when comparing MU-MIMO for different $\NRF$ or SM for different number of streams. 

\subsubsection{Rate Distribution in an Interference-limited Network}
Until now, our analysis focused on noise-limited mmWave cellular networks. In this section, we will discuss how to model interference in these networks.

From (\ref{eq:sinr1}), the OCI power at user $\us$ served by a BS at $\x$ is modelled as 
\ifSC
\begin{equation*}
I_\us =\power{} \sum\limits_{\substack{y\in\PPP\\y\neq \x}} \sum\limits_{w\in\mathcal{U}_y} \frac{||\overline{\bf h}^*_{y,\us} \bbcolumn{w}{y}||^2}{U_y}= \power{} \sum\limits_{\substack{y\in\PPP\\y\neq \x}} \sum\limits_{w\in\mathcal{U}_y} \frac{||{\bf w}^{*}_\us {\bf H}_{y,\us} {\mathrm{\bf F}^\mathrm{RF}_y} \bbcolumn{w}{y}||^2}{U_y} .
\end{equation*}
\else
\begin{align*}
I_\us &=\power{} \sum\limits_{\substack{y\in\PPP\\y\neq \x}} \sum\limits_{w\in\mathcal{U}_y} \frac{||\overline{\bf h}^*_{y,\us} \bbcolumn{w}{y}||^2}{U_y}\\&= \power{} \sum\limits_{\substack{y\in\PPP\\y\neq \x}} \sum\limits_{w\in\mathcal{U}_y} \frac{||{\bf w}^{*}_\us {\bf H}_{y,\us} {\mathrm{\bf F}^\mathrm{RF}_y} \bbcolumn{w}{y}||^2}{U_y} .
\end{align*}
\fi
Here, ${\bf w}_\us = \aMS(\phi_{\x,\us})$, ${\bf H}_{y,\us} = \sqrt{\frac{\NBS\NMS}{L(y,\us)\eta_{y,\us}}}\sum_{i=1}^{\eta_{y,\us}}\gamma_{i,y,\us} \aMS(\AOA{i}{y}{\us}) \mathrm{\bf a}^{*}_\mathrm{BS}(\AOD{i}{y}{\us})$,  ${\mathrm{\bf F}^\mathrm{RF}_y}$ has columns equal to $\aBS(\theta_{y,w})$ for all $w\in\mathcal{U}_y$,  and $\bbcolumn{w}{y}$ is designed so as to cancel the multiuser interference of the BS at $y$. All the AOAs and AODs in the above expression are independent of each other. Leveraging the virtual channel approximation for large number of antennas at the BS and UE, interference due to the link between BS at $y$ and user at $w$ on the user $u$ is non-zero if and only if $\phi_{\x,\us}$ is equal to at least one of the AOA of ${\bf H}_{y,\us}$ and $\theta_{y,w}$ equals the corresponding AOD. Note that since multiuser interference was cancelled by the ZF precoder, the virtual approximation with an ON/OFF model for inner product of two beam steering vectors gave us a tractable and accurate tool to study $\SNR$ distribution in the previous section. However, this model may not be accurate when OCI is incorporated. 

The virtual channel approximation quantized the angular space into $\mathrm{N}$ sectors, where $\mathrm{N}$ is the number of antennas. If two angles lie on either sides of a sector boundary, the inner product of beam steering vectors is zero, which can be a main cause of underestimated interference. In order to alleviate this problem, we introduce the notion of sidelobe gain which was also used in \cite{BaiHea14, SinJSAC14}. We still assume that the virtual angle space is quantized into $\mathrm{N}$ sectors with the angle bisector being a representative of each sector, but the inner product between two beam steering vectors is defined as:
\begin{equation}
\label{eq:innerprodapprox}
\aBS^{*}(\theta_1) \aBS(\theta_2)\triangleq \begin{cases}
1 & \text{ if }\theta_1= \theta_2\\
\rho_\mathrm{BS} & \text{ otherwise,}
\end{cases}
\end{equation}
where $\rho_{\mathrm{BS}}<1$ introduces a sidelobe gain into the model. Similarly, we model the inner product for beam steering vectors at UEs with parameter $\rho_\mathrm{UE}$. Note that setting $\rho_{\mathrm{BS}}=\rho_{\mathrm{UE}}=0$ reverts back to the virtual channel approximation. 

To characterize the interference distribution, we neglect the effect of ZF on interfering links and dependence in $p_{\mathrm{MU}}$ and $I_u$ through $\{{\bf w}_u\}$ for tractability and show that for a fairly large number of antennas this is a reasonable approximation. 
First we deal with the $\eta_\mathrm{L} = \eta_\mathrm{N} = 1$ case.
\begin{prop}
\label{prop:2}
Assuming the inner product of any two beam steering vectors at BS or UE follow the law given by (\ref{eq:innerprodapprox}),  $\eta_\mathrm{L} = \eta_\mathrm{N} = 1$ and propagation loss on the service link is $l$, the OCI power at the typical user can be modelled as 
$
I_{0} = \sum\limits_{\substack{y\in\PPP,y\neq\x} } \mathrm{G}|\gamma_{y,0}|^2 L(y,0)^{-1} \chi_y/U_y,
$
where  $\gamma_{y,0}$ is complex normal random variable with unit variance and zero mean, $U_y$ are i.i.d random variables with distribution given in (\ref{eq:pn}) and $\chi_y$ is defined as 
\begin{multline*}
\iftoggle{SC}{}{\hspace{-0.5cm}}\chi_y = \begin{cases}
k+(U_y-k)\rho^2_\mathrm{BS} & {\hspace{-0.5cm}}\text{w.p. } (\sum_{i=1}^{\NMS}q^2_{\mathrm{UE},i}) \iftoggle{SC}{{{U_y}\choose{k}}}{\times\\&{\hspace{-1.5cm}}} {{U_y}\choose{k}}\sum_{j=1}^{\NBS} q^{k+1}_{\mathrm{BS},j} (1-q_{\mathrm{BS},j})^{U_y-k} \\
\rho^2_\mathrm{UE} (k+(U_t-k)\rho^2_\mathrm{BS})  & {\hspace{-0.25cm}}\text{w.p. } (1-\sum_{i=1}^{\NMS}q^2_{\mathrm{UE},i}) \iftoggle{SC}{{{U_y}\choose{k}}}{\times\\&{\hspace{-1.5cm}}}{{U_y}\choose{k}} \sum_{j=1}^{\NBS} q^{k+1}_{\mathrm{BS},j} (1-q_{\mathrm{BS},j})^{U_y-k}, \\
\end{cases}
\end{multline*}
for $k=1,2,\ldots,U_y$.
\end{prop}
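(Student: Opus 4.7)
The plan is to start from the exact OCI expression in the display immediately preceding the proposition, specialize it to the single-path setting $\eta_\mathrm{L}=\eta_\mathrm{N}=1$, and then invoke the two simplifying steps announced before the statement: (a) neglect the effect of the interfering BS's baseband ZF precoder on the interferer's effective beam, so that $\mathrm{\bf F}^{\mathrm{RF}}_y\,\bbcolumn{w}{y}$ can be replaced by the unit-norm steering vector $\aBS(\theta_{y,w})$; and (b) fix the combiner at ${\bf w}_0=\aMS(\phi_{x,0})$ via Lemma~\ref{lem:precodcombin} applied to the tagged BS, decoupling it from $I_0$. Under $\eta_{y,0}=1$, the channel matrix is rank-one, so ${\bf H}_{y,0}=\sqrt{\NBS\NMS/L(y,0)}\,\gamma_{y,0}\,\aMS(\phi_{y,0})\,\aBS^*(\theta_{y,0})$, and the per-interferer contribution factors as $\mathrm{G}\,|\gamma_{y,0}|^2 L(y,0)^{-1}\,\bigl|\aMS^*(\phi_{x,0})\aMS(\phi_{y,0})\bigr|^2\sum_{w\in\mathcal{U}_y}\bigl|\aBS^*(\theta_{y,0})\aBS(\theta_{y,w})\bigr|^2/U_y$, which isolates the angular factor to be identified with $\chi_y/U_y$.

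Next I would apply the sidelobe rule (\ref{eq:innerprodapprox}) to both inner products. The UE factor equals $1$ when the quantized virtual angles $\phi_{x,0}$ and $\phi_{y,0}$ coincide and $\rho_\mathrm{UE}^2$ otherwise; by mutual independence of the two angles and Lemma~\ref{lem:angle}, the coincidence probability is $\sum_{i=1}^{\NMS} q_{\mathrm{UE},i}^2$. For the BS factor, let $K_y := |\{w\in\mathcal{U}_y: \theta_{y,w}=\theta_{y,0}\}|$; the sum over $w$ then collapses to $K_y+(U_y-K_y)\rho_\mathrm{BS}^2$, reproducing the two multiplicative forms in the statement. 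To obtain the PMF of $K_y$, I would condition on $\{\theta_{y,0}=j\}$, which has probability $q_{\mathrm{BS},j}$; since the interferer's AOD to the origin and its AODs to its $U_y$ scheduled users are mutually independent draws from $q_{\mathrm{BS},\cdot}$, $K_y$ is conditionally $\mathrm{Binomial}(U_y,q_{\mathrm{BS},j})$. Deconditioning yields $\mathbb{P}(K_y=k)=\binom{U_y}{k}\sum_{j=1}^{\NBS} q_{\mathrm{BS},j}^{k+1}(1-q_{\mathrm{BS},j})^{U_y-k}$, which is precisely the row-common factor in the claimed distribution of $\chi_y$. Combining with the independent UE coincidence event gives the two-row PMF. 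The load $U_y$ is then treated as i.i.d.\ across $y$ with PMF (\ref{eq:pn}), consistent with the approximation used in the noise-limited analysis.

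The main obstacle is not the combinatorics but the legitimacy of step (a): the baseband ZF precoder couples the $U_y$ columns of $\mathrm{\bf F}^{\mathrm{RF}}_y$ through an $U_y\times U_y$ matrix inverse, so strictly speaking $\mathrm{\bf F}^{\mathrm{RF}}_y\,\bbcolumn{w}{y}$ is not a single steering vector. Retaining that coupling would inject a random gain penalty depending on the angle spread of the other users served by $y$, destroying the clean ON/OFF structure that produces the closed-form $\chi_y$. The approximation is defensible because (i) the dominant rank-one component of the interferer's beam toward user $w$ is still $\aBS(\theta_{y,w})$, (ii) the sidelobe parameter $\rho_\mathrm{BS}$ partially absorbs any off-beam leakage, and (iii) the authors flag it explicitly and defer quantitative validation to the Monte-Carlo comparisons. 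A secondary subtlety is the conditional independence of the $\chi_y$'s across interferers, which follows from independence of angles and loads across distinct BSs once one conditions on $\PPP$.
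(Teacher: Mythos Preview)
Your proposal is correct and follows essentially the same route as the paper's proof: specialize the OCI expression to the single-path channel so that each ${\bf H}_{y,0}$ is rank one, replace the interferer's effective beam by the steering vector $\aBS(\theta_{y,w})$ (the ZF-neglect approximation announced just before the proposition), apply the sidelobe rule (\ref{eq:innerprodapprox}) to both inner products, and then read off the distribution of $\chi_y$ from the independence of the quantized virtual angles in Lemma~\ref{lem:angle}. The paper's proof is terse---it simply writes down the factored single-path OCI expression and says ``using the inner product rule in (\ref{eq:innerprodapprox}) and the fact that all the virtual angles are independent and distributed according to Lemma~\ref{lem:angle}, the proposition can be proved''---whereas you actually carry out the conditioning on $\theta_{y,0}=j$ to obtain the binomial structure of $K_y$ and the coincidence probability $\sum_i q_{\mathrm{UE},i}^2$; but the underlying argument is identical.
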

\begin{proof}
For single path channel, the out-of-cell interference is given by 
\begin{align*}
I_{0} &= \sum_{y\in\PPP,y\neq x}\frac{\mathrm{G}|\gamma_y|^2 L(y,u)^{-1}}{{U_y}} \iftoggle{SC}{}{\times\\&\hspace{0.5cm}} \sum_{w\in\mathcal{U}_y} || \aMS^*(\phi_{x,u} )\aMS(\phi_{y,u})\aBS^*(\theta_{y,u})\aBS(\theta_{y,w})||^2.
\end{align*}
Now using the inner product rule in (\ref{eq:innerprodapprox}) and the fact that all the virtual angles in the above equation are independent and distributed according to Lemma~\ref{lem:angle}, the proposition can be proved.
\end{proof}
%In reality, $U_y$, all the AOAs and AODs are not independent random variables. However, this assumption enables tractable analysis. Numerical validation in \cite{SinDhiAnd13} for conventional cellular networks and that in \cite{SinJSAC14} for mmWave networks provides confidence for the assumption on independent $U_y$. 

\begin{lem}
\label{lem:int}
The Laplace functional of the interference power in Proposition~\ref{prop:2} conditioned on path loss to the {\em typical} user at origin from serving BS is $L(\x,\us) = l$, is given by 
\begin{align*}
\iftoggle{SC}{}{&}L_{I_0,l}(s)  \iftoggle{SC}{&}{}\triangleq \expect{\exp\left(-s I_{0}\right)|L(\x,0) = l} \\
&=
\mathrm{exp}\left(-\dnsty\sum_{n=0}^{\NRF}\tilde{p}(n)\sum_{k=0}^{n}{n\choose k} \sum_{i=1}^{\NBS} q^{k+1}_{\mathrm{BS},i}(1-q_{\mathrm{BS},i})^{n-k} \right.\\
&\iftoggle{SC}{}{\hspace{2cm}}
\left.\left\lbrace\left(\sum_{i=1}^{\NMS}q^2_{\mathrm{UE},i}\right) \int_{t\geq l}\frac{\mathrm{M}^{'}(t) \mathrm{d}t}{1+\frac{tn}{s\mathrm{G} (k+(n-k)\rho^2_\mathrm{BS})}}
\iftoggle{SC}{}{\right.\right.\\&\hspace{0.5cm}\left.\left.} +\left(1-\sum_{i=1}^{\NMS}q^2_{\mathrm{UE},i}\right) \int_{t\geq l}\frac{\mathrm{M}^{'}(t) \mathrm{d}t}{1+\frac{tn}{s\mathrm{G} \rho^2_\mathrm{UE}(k+(n-k)\rho^2_\mathrm{BS})}}\right\rbrace\right).
\end{align*}
where $\tilde{p}(.)$ is the distribution of $U_y$ for interfering BSs given in (\ref{eq:pn}).
\end{lem}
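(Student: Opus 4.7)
The plan is to recognize $I_0$ as a shot-noise functional on the propagation-loss PPP $\mathcal{N}$ and then apply the standard PPP Laplace functional (PGFL), after which everything reduces to computing a single one-point expectation.

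\textbf{Step 1: Set up the shot noise on $\mathcal{N}$.} By the mapping theorem, the propagation losses $\{L(y,0):y\in\PPP\}$ form a PPP $\mathcal{N}$ on $(0,\infty)$ with intensity $\dnsty\,\mathrm{M}'(t)\,dt$ (from the Corollary to Lemmas~\ref{lem:mdnstyl}--\ref{lem:mdnstyn}). Since the tagged BS offers the minimum propagation loss $l$, conditioning on $L(\x,0)=l$ leaves a PPP of interferers on $[l,\infty)$ with the same intensity. Proposition~\ref{prop:2} then lets me rewrite
\[
I_0 \;=\; \sum_{t\in\mathcal{N}\cap[l,\infty)} \frac{\mathrm{G}\,|\gamma_t|^2\,\chi_t}{t\,U_t},
\]
where, independently across points $t$, $|\gamma_t|^2\sim\mathrm{Exp}(1)$, $U_t$ has PMF $\tilde p(\cdot)$ from (\ref{eq:pn}), and $\chi_t$ has the two-branch distribution specified by Proposition~\ref{prop:2}.

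\textbf{Step 2: Apply the PPP Laplace functional.} Using independence of the marks and the PGFL,
\[
L_{I_0,l}(s) \;=\; \exp\!\Bigl(-\dnsty\int_l^\infty \bigl(1-\mathbb{E}[e^{-s\mathrm{G}|\gamma|^2\chi/(tU)}]\bigr)\mathrm{M}'(t)\,dt\Bigr).
\]
The inner expectation is over the three independent quantities $|\gamma|^2,\chi,U$.

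\textbf{Step 3: Evaluate the mark expectation.} First integrate out $|\gamma|^2\sim\mathrm{Exp}(1)$:
\[
\mathbb{E}_\gamma\bigl[e^{-s\mathrm{G}|\gamma|^2\chi/(tU)}\bigr] = \frac{1}{1+s\mathrm{G}\chi/(tU)}.
\]
Hence $1-\mathbb{E}_\gamma[\cdot]=\bigl(1+\frac{tU}{s\mathrm{G}\chi}\bigr)^{-1}$. Next, condition on $U=n$ (with probability $\tilde p(n)$) and use the $\chi$-distribution from Proposition~\ref{prop:2}: for each $j\in\{1,\dots,\NBS\}$ and $k\in\{0,\dots,n\}$, the value $\chi=k+(n-k)\rho_{\mathrm{BS}}^2$ occurs with probability $\bigl(\sum_i q_{\mathrm{UE},i}^2\bigr)\binom{n}{k}q_{\mathrm{BS},j}^{k+1}(1-q_{\mathrm{BS},j})^{n-k}$, and $\chi=\rho_{\mathrm{UE}}^2(k+(n-k)\rho_{\mathrm{BS}}^2)$ occurs with the complementary UE probability. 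Collecting these branches gives two nested sums (over $n$, then $j$ and $k$), each weighted by a factor of the form $\bigl(1+\tfrac{tn}{s\mathrm{G}\cdot(\text{value of }\chi)}\bigr)^{-1}$.

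\textbf{Step 4: Exchange sum and integral.} Finally, swap the $t$-integration with the finite $n,j,k$ sums (permissible since all integrands are nonnegative and bounded by $\mathrm{M}'(t)$) and exponentiate to recover the stated formula. The $\dnsty$ pulls out to the front and the dependence on $s,t,\rho_{\mathrm{BS}},\rho_{\mathrm{UE}}$ lines up with the two integrals in the statement.

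\textbf{Main obstacle.} The only nontrivial step is the bookkeeping in Step~3: keeping track of which factors of $q_{\mathrm{BS},j}$ and $q_{\mathrm{UE},i}$ come from (a) the choice of the tagged BS-to-user-$0$ beam direction, (b) how many of the $n$ co-scheduled users happen to share that direction, and (c) whether the AoA at the typical UE matches the interferer's AoA. The rest (Campbell's formula, exponential fading, conditioning on the nearest propagation-loss point) is mechanical. A secondary subtlety worth noting is that the lemma statement already invokes the approximation, mentioned just before Proposition~\ref{prop:2}, of neglecting the ZF projection on interfering links and the coupling between $p_{\mathrm{MU}}$ and $I_0$ via the receive combiner $\mathbf{w}_u$; without that decoupling, the fading terms $\gamma_{y,0}$ and the marks $\chi_y,U_y$ would not be independent across $y$ and the PGFL step would not apply directly.
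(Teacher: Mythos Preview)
Your proposal is correct and follows essentially the same approach as the paper: map the interferers to the propagation-loss PPP $\mathcal{N}$, integrate out the exponential fading to get $(1+\psi_t)^{-1}$, apply the PGFL, and then expand using the distributions of $U_y$ and $\chi_y$ from Proposition~\ref{prop:2}. The paper's proof performs these steps in a slightly different order (it carries the marks through a chain of equalities before invoking the PGFL) but the substance is identical, and your explicit remark about the decoupling approximation needed to justify mark independence is a point the paper only states informally before the proposition.
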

\begin{proof}
Appendix~\ref{app:int}.
\end{proof}
\begin{thm}
\label{thm:sinr2}
The $\SINR$ coverage of the typical user is given by (\ref{eq:sinrthmnoiselimit}) with an extra term $L_{I_0,l}\left(\frac{\eta_{j} \tau n \mathrm{U} l}{\mathrm{G}}\right)$ inside the integral over $\mathrm{d}l$.
%\begin{equation}
%\mathcal{\overline{S}}(\tau) \triangleq \pr(\SINR_{x,0} > \tau)= \cexpect{\mathcal{S}(\tau,U_\x)}{U_\x}, \text{ where}
%\end{equation}
%\begin{multline}
%\mathcal{S}(\tau,\mathrm{U}) \approx
 %\sum_{j\in\{\mathrm{L,N}\}}\mathrm{B}_j (\zeta_{\mathrm{UE},j}(\eta_j) +\zeta_{\mathrm{BS},j}(\eta_j,\mathrm{U}) -\zeta_{\mathrm{UE},j}(\eta_j) %\zeta_{\mathrm{BS},j}(\eta_j,\mathrm{U}) )\times\\
%\sum_{n=1}^{\eta_j}(-1)^{n+1}{\eta_j \choose n}\int_{0}^{\infty}\exp\left(-\frac{\eta_j\tau n \mathrm{U} l \sigma^2_n}{\mathrm{G}}-\dnsty %\mathrm{M}_{\overline{j}}(l)
%\right) L_{I_0,l}\left(\frac{\eta_{j} \tau n \mathrm{U} l}{\mathrm{G}}\right) f_{j}(l)\mathrm{d}l.
%\end{multline} 
%where $\mathrm{G} = \power{} \NBS \NMS$, $\overline{j} = \mathrm{L}$ if $j = \mathrm{N}$ and vice versa. Here, $\zeta_{\mathrm{UE},j}(\eta) = %\sum_{j=1}^{\NMS}q_{\mathrm{UE},j}(1-q_{\mathrm{UE},j})^{\eta-1}$, 
%\begin{equation}
%\zeta_{\mathrm{BS},j}(\eta,\mathrm{U}) = %\sum_{j=1}^{\NBS}q_{\mathrm{BS},j}(1-q_{\mathrm{BS},j})^{\eta-1}\left[\plos\left(1-q_{\mathrm{BS},j}\right)^{\eta_\mathrm{L}}+
%(1-\plos)\left(1-q_{\mathrm{BS},j}\right)^{\eta_\mathrm{N}}
%\right]^{\mathrm{U}-1}.
%\end{equation}
%The terms $\mathrm{M}_j(.)$ and $f_j(.)$ are derived in Lemma~\ref{lem:mdnstyl}, Lemma~\ref{lem:mdnstyn} and Lemma~\ref{lem:mdnsty}. 
\end{thm}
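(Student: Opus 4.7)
The strategy is to reuse the derivation of Theorem~\ref{thm:sinr}, carrying the interference term $I_0$ through the step where previously only the thermal noise $\sigma_n^2$ appeared in the exponent of the integrand.

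First I would condition on the tagged-BS link being LOS or NLOS ($j\in\{\mathrm{L},\mathrm{N}\}$), on the service-link propagation loss $L(\x,0)=l$, and on $U_\x=\mathrm{U}$. Proposition~\ref{prop:1}'s approximate signal expression, augmented by the OCI term $I_0$ of Proposition~\ref{prop:2}, lower bounds the event $\{\SINR_{\x,0}>\tau\}$ by
\begin{equation*}
\left\{|\gamma_{i_m,\x,0}|^2 > \tfrac{\eta_j\tau \mathrm{U} l}{\mathrm{G}}\bigl(\sigma_n^2 + I_0\bigr)\right\}\cap\{p_{\mathrm{MU}}=1\},
\end{equation*}
which contributes the factor $\zeta(\eta_j,\mathrm{U})$ exactly as in the noise-limited case, since the signal is lower bounded by $0$ on the complement of the ZF success event (the same worst-case substitution used in the proof of Proposition~\ref{prop:1}).

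Next, because $|\gamma_{i_m,\x,0}|^2$ is the maximum of $\eta_j$ i.i.d.\ unit-mean exponentials (the squared magnitudes of unit-variance complex Gaussians), its CCDF admits the standard inclusion--exclusion expansion $\pr(|\gamma_{i_m,\x,0}|^2>x)=\sum_{n=1}^{\eta_j}(-1)^{n+1}\binom{\eta_j}{n}e^{-nx}$. Invoking the approximation, stated before Proposition~\ref{prop:2}, that the dependence of $I_0$ on $\{{\bf w}_\us\}$ and on the ZF precoder is neglected, $|\gamma_{i_m,\x,0}|^2$ and $I_0$ factor and the conditional coverage becomes
\begin{equation*}
\sum_{n=1}^{\eta_j}(-1)^{n+1}\binom{\eta_j}{n}\exp\!\left(-\tfrac{n\eta_j\tau \mathrm{U} l \sigma_n^2}{\mathrm{G}}\right)\expect{\exp\!\left(-\tfrac{n\eta_j\tau \mathrm{U} l}{\mathrm{G}}I_0\right)\Big|L(\x,0)=l}.
\end{equation*}
The last factor is precisely $L_{I_0,l}\!\left(\tfrac{n\eta_j\tau \mathrm{U} l}{\mathrm{G}}\right)$ of Lemma~\ref{lem:int}. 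Un-conditioning on $l$ using the densities $f_j$ of Lemma~\ref{lem:mdnsty} together with the association weight $\exp(-\dnsty\mathrm{M}_{\overline{j}}(l))\mathrm{B}_j$ (the probability that no closer BS of the opposite type exists, exactly as in the proof of Theorem~\ref{thm:sinr}), and finally averaging over $U_\x$ using the PMF~(\ref{eq:pn}), reproduces~(\ref{eq:sinrthmnoiselimit}) with the claimed Laplace-functional factor inserted inside the $\mathrm{d}l$ integral.

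The main obstacle is the factorisation step: both the ZF-induced distortion on interfering links and the coupling between the receive combiner ${\bf w}_0=\aMS(\phi_{\x,0})$ and the interferer AOAs at the typical user must be suppressed before one can treat $|\gamma_{i_m,\x,0}|^2$ as a clean exponential independent of the PPP of interferers. The authors flag both approximations explicitly just before Proposition~\ref{prop:2}; without them, $L_{I_0,l}$ would not admit the clean product form derived in Lemma~\ref{lem:int}. The Monte-Carlo validation in Section~\ref{sec:results} serves as the a posteriori confirmation that these approximations are quantitatively tight for the antenna array sizes of interest.
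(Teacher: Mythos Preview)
Your proposal is correct and follows essentially the same route as the paper: the authors' proof of Theorem~\ref{thm:sinr2} simply states that it proceeds exactly as Theorem~\ref{thm:sinr} with the Laplace functional $L_{I_0,l}(\cdot)$ from Lemma~\ref{lem:int} (or its multipath bounds in Appendix~\ref{app:int2}) supplying the extra factor inside the $\mathrm{d}l$ integral. Your explicit unpacking of the conditioning, the inclusion--exclusion for $|\gamma_{i_m}|^2$, and the factorisation under the stated independence approximations is precisely the content of ``same lines as Theorem~\ref{thm:sinr}.''
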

\begin{proof}
Exactly on same lines as Theorem~\ref{thm:sinr}. The Laplace functional $L_{I_0,l}(.)$ has been derived in Lemma~\ref{lem:int} for single path channel. Upper and lower bounds on $L_{I_0,l}(.)$ for a general number of paths can be found in Appendix \ref{app:int2}.%, but here we use the modified virtual channel approximation to model the out of cell interference. The received signal power is the same as Proposition~\ref{prop:1}. 
\end{proof}
From this expression of $\SINR$ coverage, the rate coverage can be found similar to Theorem~\ref{thm:rate}. 
We will validate these analytical results in Section~\ref{sec:results}. In the next section, we will take a brief look into the coverage and rate for SM enabled mmWave cellular networks. Before that though, we provide a brief discussion on how to choose $\rho_\mathrm{UE}$ and $\rho_\mathrm{BS}$. Recall that $\NBS\rho^2_{\mathrm{BS}}$ and $\NMS\rho^2_{\mathrm{UE}}$ are the sidelobe gains for beam pattern at BSs and UEs, respectively. An obvious question is whether these parameters depend on the number of antennas and if yes, how should their dependence be modelled? 
%\begin{figure}
%\centering
%\includegraphics[width=0.5\columnwidth, trim = 0.9cm 0.3cm 1cm 0.6cm]{Fig9}
%\caption{Interference power distribution for varying number of antennas in a single path channel.}
%\label{fig:intsim}
%\end{figure}

%To address this question, we first plot the interference power for different number of antennas at the BS in Fig.~\ref{fig:intsim}. 
%As can be seen from the figure, the interference power reduces with increasing $\NBS$. This is in line with 
If $\rho_{(.)}$ were to be a constant, the sidelobe gain will also scale up with an increasing number of antennas. This will violate Lemma~\ref{lem:precodcombin}. Since virtual channel approximation asymptotically tracks physical channel model, $\rho_\mathrm{BS}$ and $\rho_\mathrm{UE}$ should decrease and eventually vanish with increasing $\NBS$ and $\NMS$, respectively. For a uniform linear array with $\mathrm{N}$ antennas, the ratio of the gain of the $i^\text{th}$ sidelobe to the main lobe is equal to 
$|\frac{\sin\left(0.5\pi (2i+1)\right)}{\mathrm{N}\sin\left(0.5\pi (2i+1)/\mathrm{N}\right)}|^2$\cite{balanis}, for $i=1, 2, \ldots, \lfloor \frac{\mathrm{N}}{2}-1\rfloor$. For $i\ll\mathrm{N}$, this ratio is independent of $\mathrm{N}$ using the small angle approximation $\sin \theta\approx \theta$. For $i$ on the order of $\mathrm{N}$, this ratio decreases approximately as square of $\mathrm{N}$. The regime in which the ratio is independent of $\mathrm{N}$ has about fixed beamwidth, which corresponds to the beamwidth in which the small angle approximation of $\sin \theta\approx \theta$ is accurate with $p$ percent relative error. For $p=1$, $\theta\approx 0.244$ radians. Since the majority of the angular space corresponds to the regime in which the above ratio varies inversely with the square of $\mathrm{N}$, we model $\rho_{(.)}$ to linearly decrease with $\mathrm{N}$. We choose $\rho_\mathrm{BS} = 1/(\sin(0.244) \NBS)$ and $\rho_\mathrm{UE} = 1/(\sin(0.244) \NMS)$, however in future it is desirable to re-investigate the scaling factor to get a better fit. 

\section{Single User Spatial Multiplexing in mmWave Cellular Networks}
\label{sec:SM}
For spatial multiplexing (SM), we consider a scenario in which every BS transmits more than one stream of data to a single user per resource block. Thus, $\mathrm{N}^{\mathrm{UE}}_s=\mathrm{N}^{\mathrm{BS}}_s=\mathrm{N}_s$, where $\mathrm{N}_s$ is the multiplexing gain.  In this section, we will focus mainly on the multipath diversity approach for SM\cite{Aya14,Sun14} and not on the polarization approach\cite{Sun14,Ghosh14}. 
\subsection{Spatial Multiplexing: UHF versus mmWave}
We begin with a brief recap of the theoretically optimal implementation of closed-loop SM in conventional cellular networks, which motivates the main challenges in precoding/combining for SM in mmWave networks. Under unitary power constraint, given the singular value decomposition of the channel matrix ${\bf H} = {\bf U}{\bf \Sigma} {\bf V}^*$, the transmitter pre-multiplies the input symbols with matrix ${\bf V}$ and the receiver combines the received signal on all its antennas with matrix ${\bf U}^*$, to effectively achieve $\mathrm{N}_s$ parallel channels, where $\mathrm{N}_s$ is the multiplexing gain. Since the channel matrix is either full row rank or full column rank with high probability for sub $6$ GHz frequency bands, $\mathrm{N}_s = \min\{\NBS,\NMS\}$. 

At mmWave frequencies, however, the first challenge is that it is not practically feasible to implement a fully digital precoder and combiner. Using the popular hybrid beamforming approach for mmWave networks\cite{AlkMo14}, the precoder is of the form $\rfprecoder\bbprecoder$, wherein $\rfprecoder$ is generally implemented using phase shifters and thus has constant magnitude entries. Similarly, we have a constraint for the combiner. Another challenge for implementing SM at mmWave is that the channel is  sparse \cite{Sun14,SamRap14} and thus obtaining multiplexing gain on the order of number of antennas is nearly impossible. 

We now look at a typical implementation of SM using the hybrid beamforming architecture in Fig.~\ref{fig:hybridarchitecture}. Assuming perfect channel estimation, and using the system model from Section~\ref{sec:model}, the received signal at user $\us$ from BS $\x$ is given by
\begin{equation*}
{\bf y}_\us =  {\bf H}_{\x,\us}\rfprecoder\bbprecoder {\bf s}_u +  {\bf n} + \text{OCI},
\end{equation*}
where ${\bf s}_u$ are transmit symbols of dimension $\mathrm{N}_s\times 1$ with energy per symbol equal to $\mathrm{P}/\mathrm{N}_s$, ${\bf n}$ is the noise power (complex Gaussian with zero mean and variance $\sigma_n^2$). 
We assume equal power allocation to all streams. After RF and baseband combining at the receiver, the processed signal is of the form ${\ubbprecoder}^*{\urfprecoder}^*{\bf y}_u$.
When Gaussian symbols are transmitted over the mmWave channel, the spectral efficiency can be at most \cite{Aya14} 
\begin{equation*}
\label{eq:speceff}
r = \log_2\bigg|{\bf I_{\mathrm{N}_s}} + \frac{\power{}}{\mathrm{N}_s} {\bf R}^{-1}_{\bf n} {\bf H}_\text{eff} {\bf H}^{*}_\text{eff}\bigg|,
\end{equation*}
where ${\bf H}_\text{eff} = {\ubbprecoder}^{*} {\urfprecoder}^{*}  {\bf H}_{\x,\us}\rfprecoder\bbprecoder$, ${\bf I}_{\mathrm{N}_s}$ is an identity matrix of rank $\mathrm{N}_s$ and 
\begin{equation*}
{\bf R}_{\bf n} = \sigma^2_n {\ubbprecoder}^{*} {\urfprecoder}^{*} \urfprecoder\ubbprecoder.
\end{equation*}
%This spectral efficiency can be achieved when ${\bf R}^{-1}_{\bf n} {\bf H}_\text{eff} {\bf H}^{*}_\text{eff}$ is a diagonal matrix, else there would be cross talk interference amongst the streams. 
We use the near optimal precoding-combining algorithm proposed in \cite{Aya14} for our simulations. Assuming that an equal fraction of resource is allocated to each UE connected to a BS, the per user rate is defined as $
R_{\x,\us} = \omega_\mathrm{SM}{\mathrm{B}} r/{N_\x}$, where $\omega_\mathrm{SM}$ is the efficiency factor for SM and recalling that $N_\x$ is the total number of users associated with the BS at $\x$. Note that similar to the MU case, $ \omega_\mathrm{SM}$ is dependent on several network parameters like number of antennas, the channel parameters, number of streams, etc. but we drop this in the notation for convenience. Sum rate is defined as the total bits per second transmitted by a BS in Section~\ref{sec:MUMIMO}. Based on this definition, we define the sum rate for the SM enabled mmWave network to be $R_\x = \omega_\mathrm{SM} \mathrm{B} r$. 

\subsection{Heuristic Comparison of Coverage and Rate for MU-MIMO and SM}
\label{sec:heuristic}
In this section, we denote the $\SNR$ with a superscript $\mathrm{SM}$ and $\mathrm{MU}$ to identify spatial multiplexing and MU-MIMO. Round robin scheduling and  $\omega_\mathrm{MU} = \omega_\mathrm{SM} = 1$ will be assumed in this section.
Recall from Lemma~\ref{lem:precodcombin} that for a large number of antennas the eigenvalues of the channel matrix $\mathrm{\bf H}_{\x,\us}$ converge to $\frac{\NBS \NMS  |\gamma_{i,\x,\us}|^2}{L(\x,\us)^{-1}}$. Thus,  the ratio $\frac{\SNR_{i,\x,\us}^{\mathrm{SM}}}{\mathrm{G}}\stackrel{d}{\to} \frac{|\gamma_{i,\x,\us}|^2 L(\x,\us)^{-1} }{\eta_{\x,\us}\mathrm{N}_s}$. From Remark~\ref{rem:convergence}, the ratio $\frac{\SNR_{\x,\us}^{\mathrm{MU}}}{\mathrm{G}}\stackrel{d}{\to} \frac{|\gamma_{i_m,\x,\us}|^2 L(\x,\us)^{-1} }{\eta_{\x,\us}\mathrm{N}_s}$, where $i_m = \arg\max_i \gamma_{i,\x,\us}$.  Since 
$\gamma_{i,\x,\us}\stackrel{\text{st}}{\leq}\max_{i}\gamma_{i,\x,\us}$, we can conclude that in the limit as $\NBS\to\infty$ and $\NMS\to\infty$, $\frac{\SNR_{i,\x,\us}^{\mathrm{SM}}}{\mathrm{G}}\stackrel{\text{st}}{\leq}\frac{\SNR_{\x,\us}^{\mathrm{MU}}}{\mathrm{G}}$ for all $i\in\{1,\ldots,\mathrm{N}_s\}$.

The above discussion hints that for many antennas at BS and UE, the $\SNR$ with MU-MIMO stochastically dominates the $\SNR$ on each stream of SM. If the network were to be noise-limited, the per user and sum rates with MU-MIMO will be higher than SM for a large number of antennas and the same number of streams. Now, let us consider how this result might be affected by OCI. As the number of antennas become large, the effect of zero forcing on the interfering streams is negligible for both MU-MIMO and SM (since the virtual channel approximation in \cite{Say02} starts to more closely model the actual channel). Thus, if the number of streams transmitted by the BS with SM and MU-MIMO are the same, the interference statistics with MU-MIMO and SM would be similar and one would expect that MU-MIMO still outperforms SM for a large number of antennas at BSs and UEs. 

For a finite number of antennas the ZF penalty may be non-negligible. It is expected that the ZF penalty with SM will be less than MU-MIMO since there are more sidelobes that need to be suppressed with MU-MIMO. Thus, the above $\SNR$ dominance result holds given that the number of antenna is large enough such that the effect of the smaller ZF penalty with SM does not reverse the inequalties. For a finite number of antennas, it is neither obvious nor analytically tractable to conjecture as to whether the per user and sum rate of SM would dominate or whether MU-MIMO would. We, thus, rely on Monte Carlo simulations for SM while comparing with our validated analytical model for MU-MIMO and SU-BF. %On a different note, it is important to consider that SU-BF suffices per UE for employing MU-MIMO whereas SM would require the UEs to perform some form of inter-stream interference cancellation using hybrid precoding or any other technique.  %2.5 pages
\section{Numerical Results}
\label{sec:results}
In this section, we first validate the $\SNR$, $\SINR$ and rate coverage analysis from Section III. Next, we compare the per user and sum rate for SU-BF, MU-MIMO and SM with fixed number of BSs per unit area as well as fixed power consumption per unit area.  The default parameters used for generating the results are given in Table~\ref{tab:parameters}. The efficiency factors $\omega_\mathrm{MU}$ and $\omega_\mathrm{SM}$ are implementation specific and estimating these is not the focus of this study. Thus, we set the efficiency parameters to 1 and use Definition~\ref{def:releffi}  for quantifying the allowable relative efficiency.
\begin{table}
\caption{Simulation parameters}
\centering
\label{tab:parameters}
\begin{tabulary}{\columnwidth}{|L | C| L|C|}
\hline
{\bf Parameter} & {\bf Value(s)} & {\bf Parameter} & {\bf Value(s)}\\\hline
$\mathrm{f}_c$ & 73 GHz \cite{Ghosh14} &
$\res$  & 1 GHz \cite{Ghosh14}\\\hline
$\plos,\;\dmax$ & $0.11,\; 200$ m \cite{SinJSAC14}& $\sigma^2_{n}$ & $-174 + 10\log_{10}\res + 10$ dBm \\\hline
$\alpha$ (LOS, NLOS) & 2, 3.3 \cite{Ghosh14}&
$\xi$ (LOS, NLOS) & 5.2, 7.6 \cite{Ghosh14}\\\hline
$\userdnsty$ & 500/km$^2$ &
$\dnsty$ & 60/km$^2$\\\hline
$\NMS$ & 16 \cite{Akd14, BaiAlk14}&
$\NBS$ & 64\cite{Akd14,BaiAlk14}\\\hline
  $\power{}$ & 30 dBm \cite{roh14}
& $\eta_\mathrm{L}$, $\eta_\mathrm{N}$ & 1,3 \cite{Akd14,Sam16,Alk14}\\\hline
\end{tabulary}
\end{table}
%\footnotetext{Experimental observations at 28 GHz and 73 GHz in \cite{Akd14,Sam16} imply that there are on an average 1 to 2 dominant clusters for LOS and 2 to 3 clusters for NLOS links.}
\ifSC
\begin{figure*}
  \centering
\subfloat[Comparison of $\SNR$ analysis with $\SINR$ simulation.]
{\label{fig:validation1}{\includegraphics[width=0.5\columnwidth, trim = 0.9cm 0.3cm 1cm 0.6cm,clip]{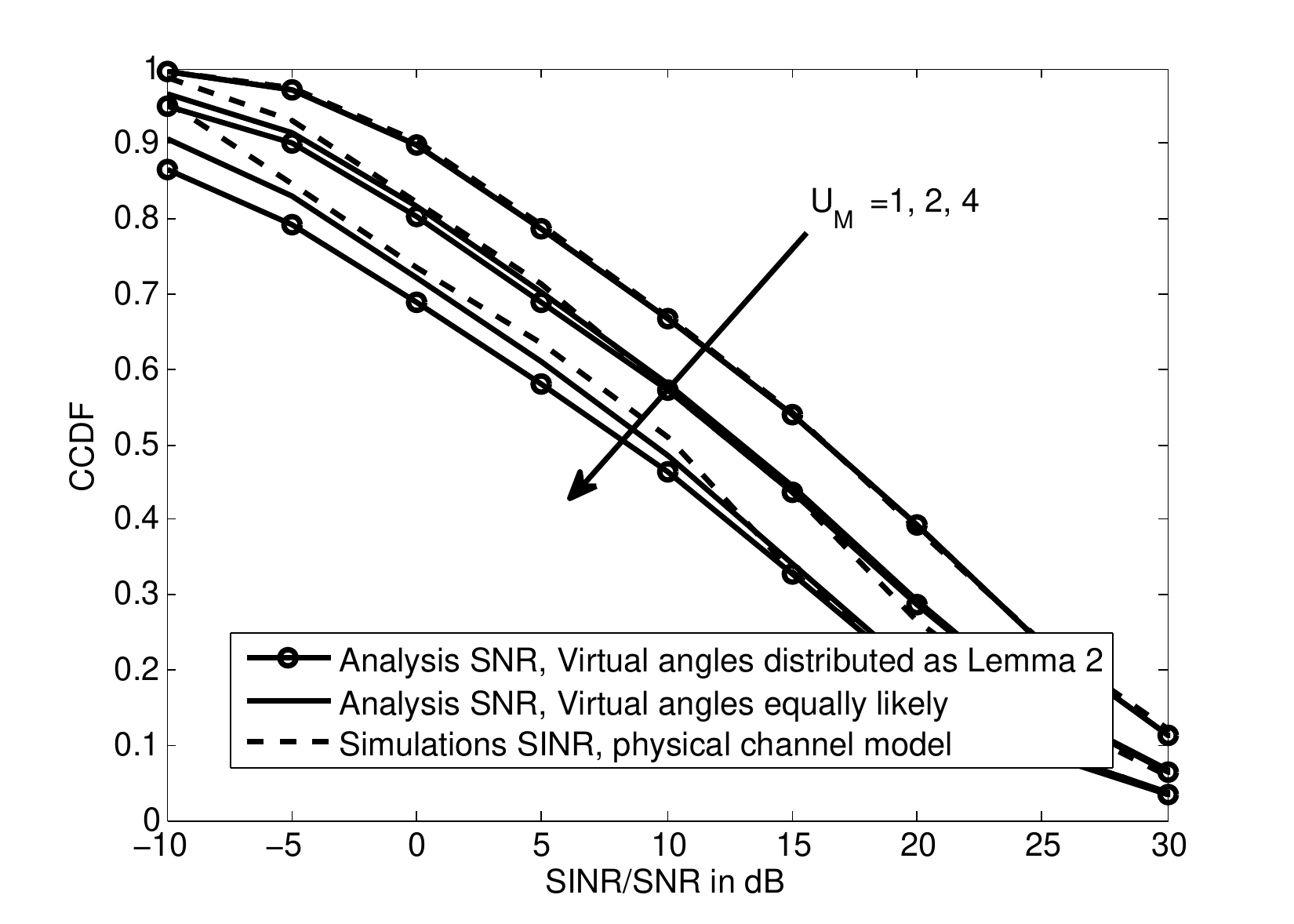}}}
\subfloat[Comparison of per user rate]
{\label{fig:validation2}{\includegraphics[width= 0.5\columnwidth, trim = 0.9cm 0.3cm 1cm 0.6cm,clip]{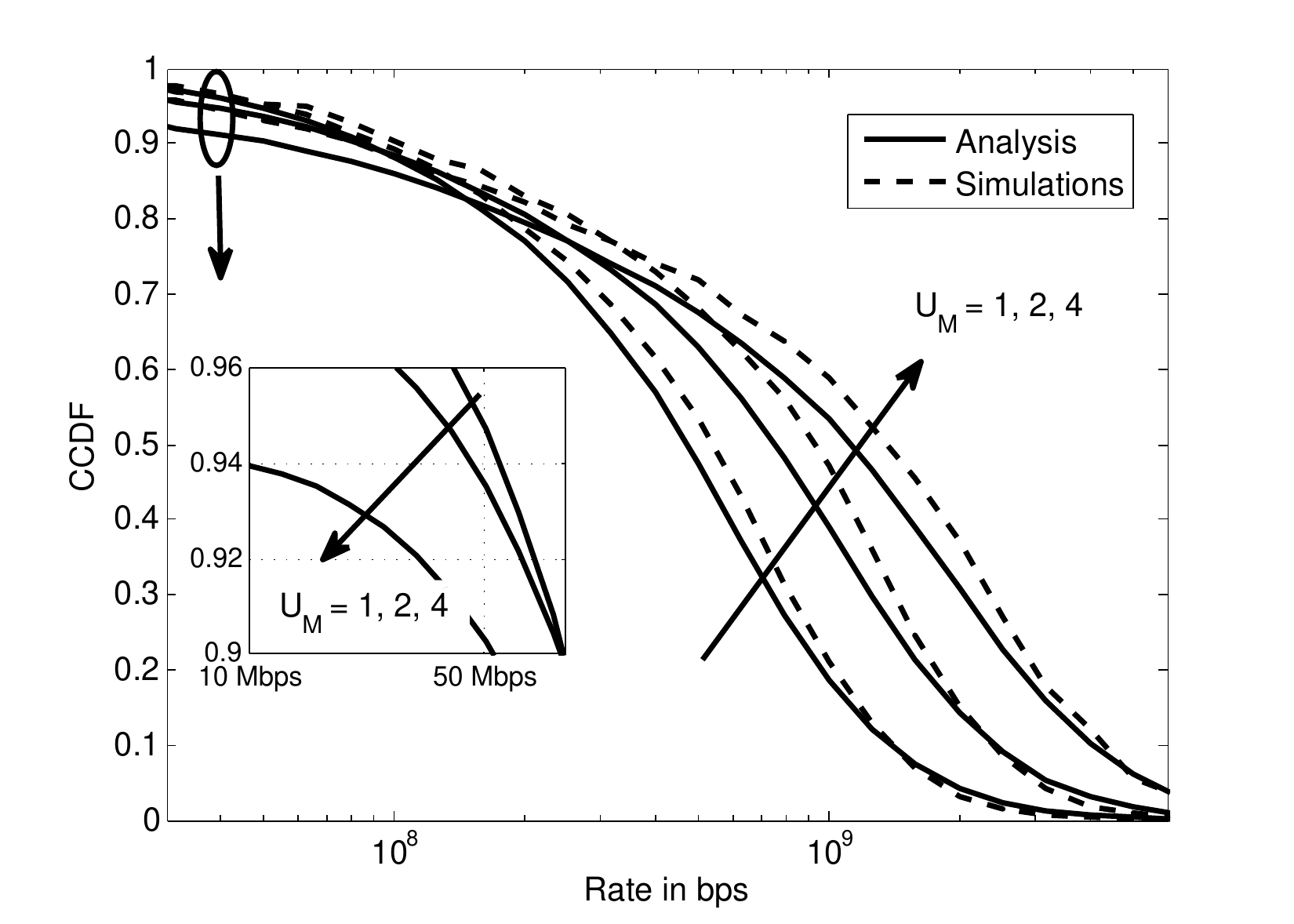}}}
\caption{Validation of $\SNR$ analysis in noise-limited scenario shows a tight match with the physical channel model simulations. Tradeoff between $\SINR$ and rate coverage is also shown with MU-MIMO.}
 \label{fig:validation}
\end{figure*}
\else
\begin{figure*}
  \centering
\subfloat[Comparison of $\SNR$ analysis with $\SINR$ simulation.]
{\label{fig:validation1}{\includegraphics[width=\columnwidth, trim = 0.9cm 0.3cm 1cm 0.6cm,clip]{Fig1}}}
\subfloat[Comparison of per user rate]
{\label{fig:validation2}{\includegraphics[width= \columnwidth, trim = 0.9cm 0.3cm 1cm 0.6cm,clip]{Fig2}}}
\caption{Validation of $\SNR$ analysis in noise-limited scenario shows a tight match with the physical channel model simulations. Tradeoff between $\SINR$ and rate coverage is also shown with MU-MIMO.}
 \label{fig:validation}
\end{figure*}
\fi

\subsection{Coverage and Rate with MU-MIMO: Validation and Trends}
\subsubsection{Cases Where Interference is Negligible}
Fig.~\ref{fig:validation1} shows the validation of the $\SNR$ coverage formula in Theorem~\ref{thm:sinr}. As can be seen from the figure, the analysis is a tight approximation with the simulations using the physical channel model even when the virtual angles are equally likely, in which case we have much simplified analytical expressions as compared to when the distribution is as given in Lemma~\ref{lem:angle}.  Henceforth, all analysis plots will be with equally likely virtual angles. As expected, the match loosens as $\NRF$ approaches $\NBS$ and $\NMS$. With increasing $\NRF$, the $\SINR$ coverage decreases since the transmit power is split amongst the multiple users served by the BS. However, as seen from Fig.~\ref{fig:validation2}, the median and peak per user rate increases with MU-MIMO. This is due to the fact that in round robin scheduling, each user connected to BS at $x$ now gets $\min\left(\NRF, N_x\right)$ times more slots to transmit. 
A re-interpretation of the above result can be made in terms of minimum allowable efficiencies. For example, $\mathcal{O}_{\{\NRF = 2\},\{\NRF = 1\}}(0.5) = 62.67\%$  and $\mathcal{O}_{\{\NRF = 4\},\{\NRF = 1\}}(0.5) = 42.73\%$. This means that if the efficiency of implementing MU-MIMO with $\NRF = 2$ is at least $62.67\%$ of the efficiency with $\NRF = 1$, then it is beneficial to employ MU-MIMO with $\NRF=2$ over SU-BF in terms of the median rates.

Since $\SINR$ decreases with $\NRF$, the trend for cell edge rates is exactly opposite to peak and median rates. Note that in \cite{Voo14}, it was shown that cell edge rates can improve with MU-MIMO. However, the main difference in their model is the user selection and scheduling. In \cite{Voo14}, there is a high priority user scheduled in a time slot and additional users are served using MU-MIMO only if the expected sum proportional fair metric does not increase due to addition of more users. This protects the rates achieved by cell edge users. The result in Fig.~\ref{fig:validation2}, thus, highlights the importance of user selection and scheduling to protect the rates achieved by cell edge users with multiuser transmission. 

%\begin{figure}
%\centering
%\includegraphics[width=0.5\columnwidth]{Fig3.eps}
%\caption{Impact of multipath combining on $\SNR$ coverage. }
%\label{fig:varyeta1}
%\end{figure}
%Before, we validate the $\SINR$ and rate coverage for the case when interference effects aren't negligible, we make a quick observation on variation of $\SNR$ coverage with number of multipath components. Fig.~\ref{fig:varyeta1} shows that with increasing multipath components the $\SNR$ coverage decreases. This is due of the fact that the channel gain is distributed over multiple AOA/AOD pairs. But the highly directional mmWave beamforming is done over only one such pair as per the system model in Section~\ref{sec:MUMIMO}.  Note that if SM or a diversity technique like maximal ratio combining is employed, beamforming weights needs to be chosen such that there are beamforming gains over multiple AOA/AOD pairs. Fig.~\ref{fig:varyeta1} shows that some form of diversity combining (like selection combining in this work) is highly essential to give significant boost in $\SNR$ coverage and mitigate the leakage in channel gain over unintended directions. Note that selection combining was done implicitly in this work, based on the precoding/combining algorithm in \cite{Alk14}. 

\ifSC
\begin{figure*}
  \centering
\subfloat[Validation of $\SINR$, $\eta_\mathrm{L} = \eta_\mathrm{N}=1$, $\NBS = 64$.]
{\label{fig:Fig5}{\includegraphics[width=0.5\columnwidth, trim = 0.9cm 0.3cm 1.0cm 0.6cm,clip]{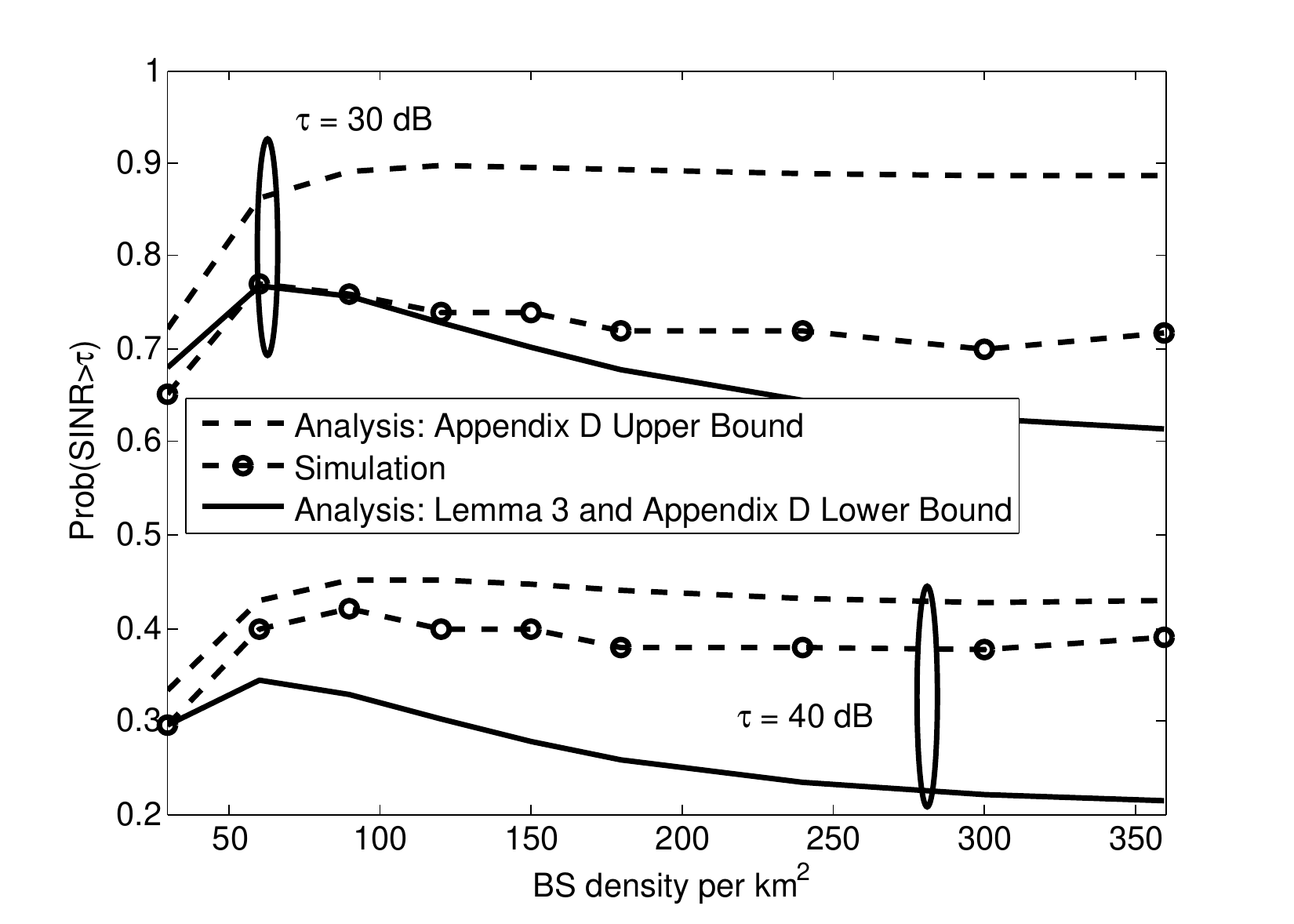}}}
\subfloat[Validation of $\rho_{(.)}$ and bounds on Laplace functional for $\eta_\mathrm{L} = 2$, $\eta_\mathrm{N}=3$.]
{\label{fig:Fig51}{\includegraphics[width= 0.5\columnwidth, trim = 0.9cm 0.3cm 1.0cm 0.6cm,clip]{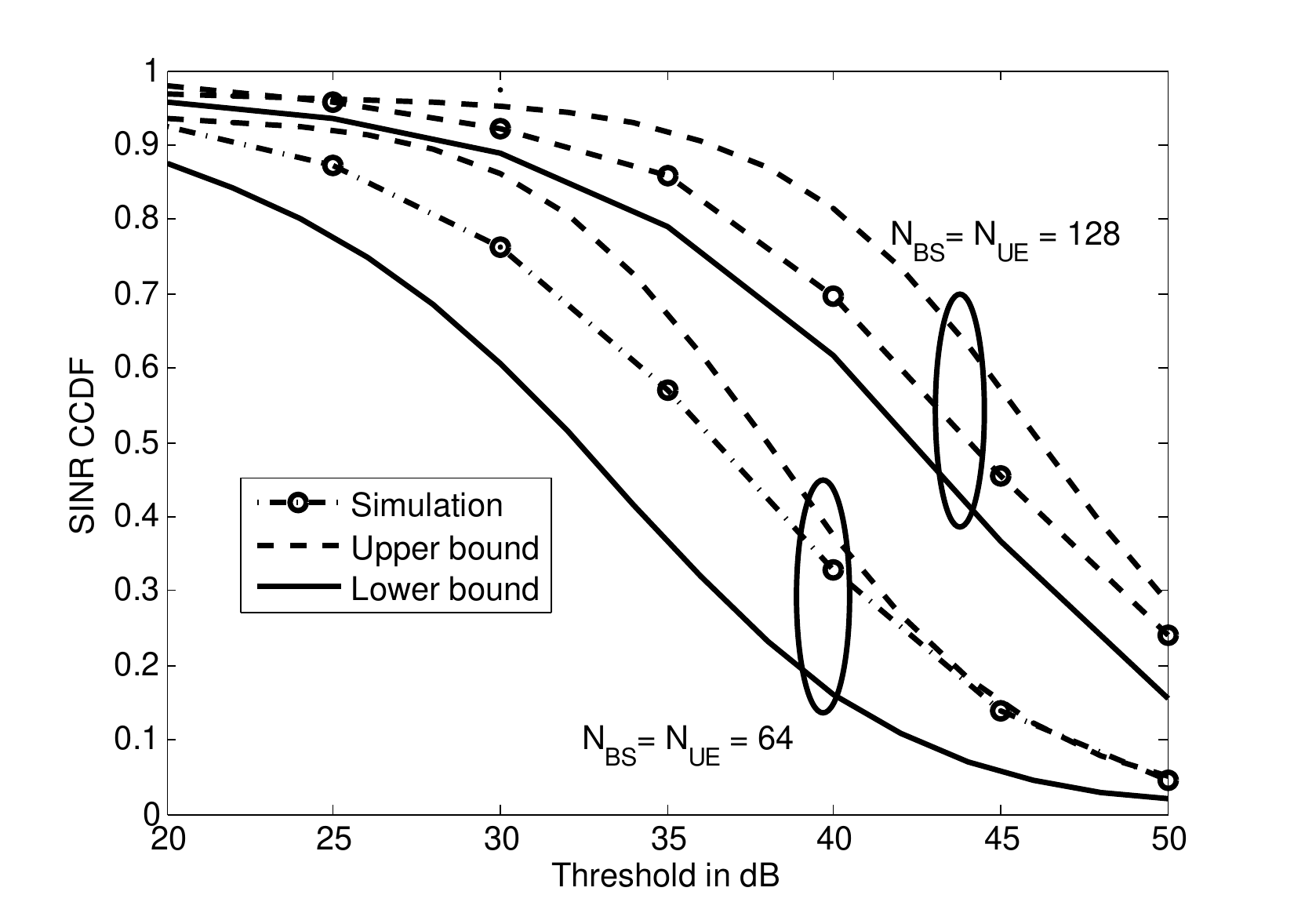}}}
\caption{Validation in interference limited setting for $\NRF = 4$ shows that the upper and lower bounds are within $\pm 5$ dB of the actual simulations. $\SINR$ coverage has a non-monotonic trend with BS density.}
 \label{fig:validationint}
\end{figure*}
\else
\begin{figure*}
  \centering
\subfloat[Validation of $\SINR$, $\eta_\mathrm{L} = \eta_\mathrm{N}=1$, $\NBS = \NMS = 64$.]
{\label{fig:Fig5}{\includegraphics[width=\columnwidth, trim = 0.9cm 0.3cm 1.0cm 0.6cm,clip]{Fig3}}}
\subfloat[Validation of $\rho_{(.)}$ and bounds on Laplace functional for $\eta_\mathrm{L} = 2$, $\eta_\mathrm{N}=3$.]
{\label{fig:Fig51}{\includegraphics[width= \columnwidth, trim = 0.9cm 0.3cm 1.0cm 0.6cm,clip]{Fig4}}}
\caption{Validation in interference limited setting for $\NRF = 4$ shows that the upper and lower bounds are within $\pm 5$ dB of the actual simulations. $\SINR$ coverage has a non-monotonic trend with BS density. }
 \label{fig:validationint}
\end{figure*}
\fi
\ifSC
\begin{figure*}
  \centering
\subfloat[Sparse multipath: MU-MIMO outperforms except for very low loads. $\eta_\mathrm{L} = 2$, $\eta_\mathrm{N} = 3$ ]
{\label{fig:lowload}\includegraphics[width=0.5\columnwidth, trim = 0.9cm 0.3cm 1cm 0.6cm,clip]{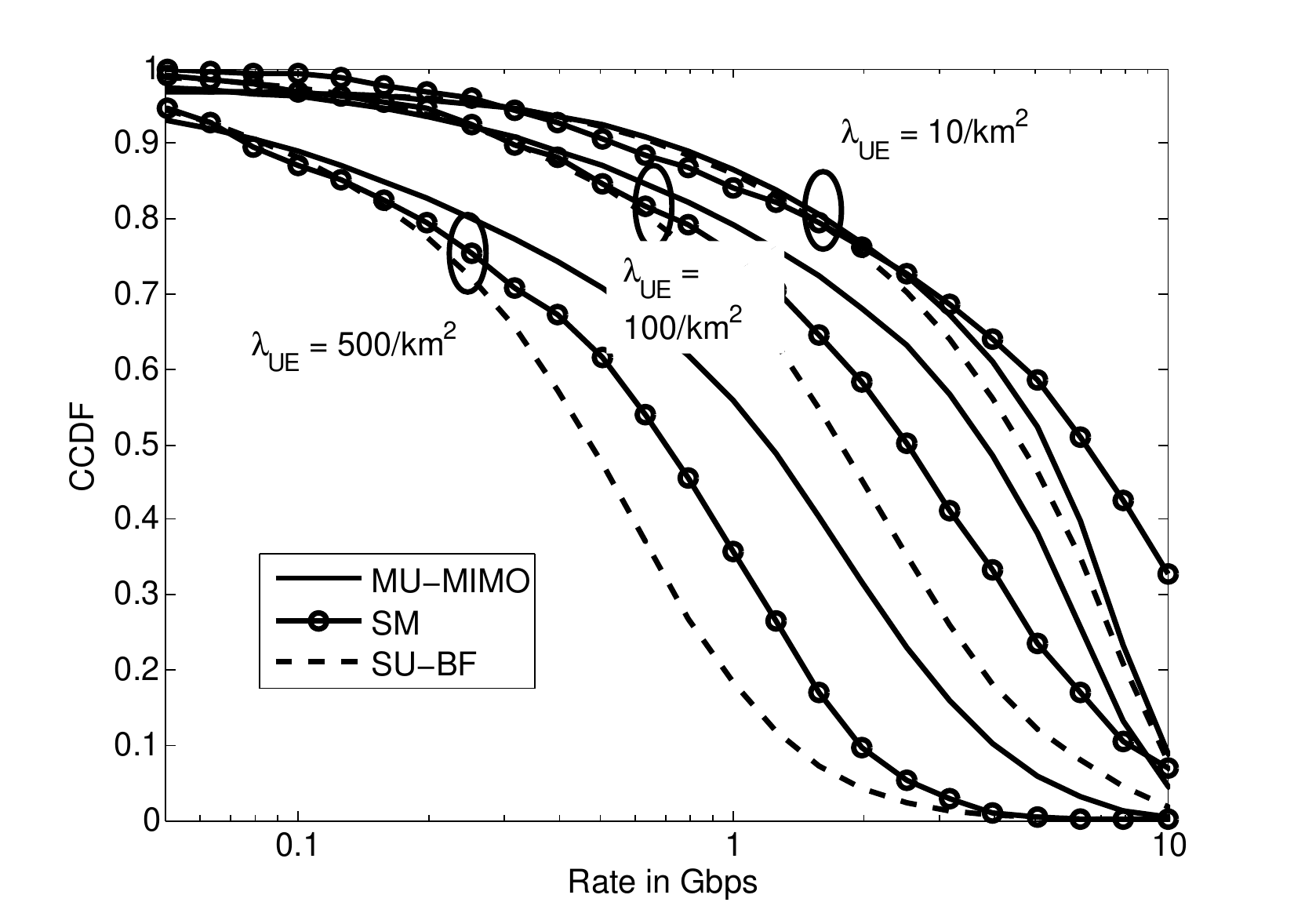}}
\subfloat[Enabling 4 stream SM for $\lambda_\mathrm{UE} = 100/$km$^2$ shows benefit over MU-MIMO for median/peak rates. $\eta_\mathrm{L} = 10$, $\eta_\mathrm{N} = 12$]
{\label{fig:highmultipath}\includegraphics[width= 0.5\columnwidth, trim = 0.9cm 0.3cm 1cm 0.6cm,clip]{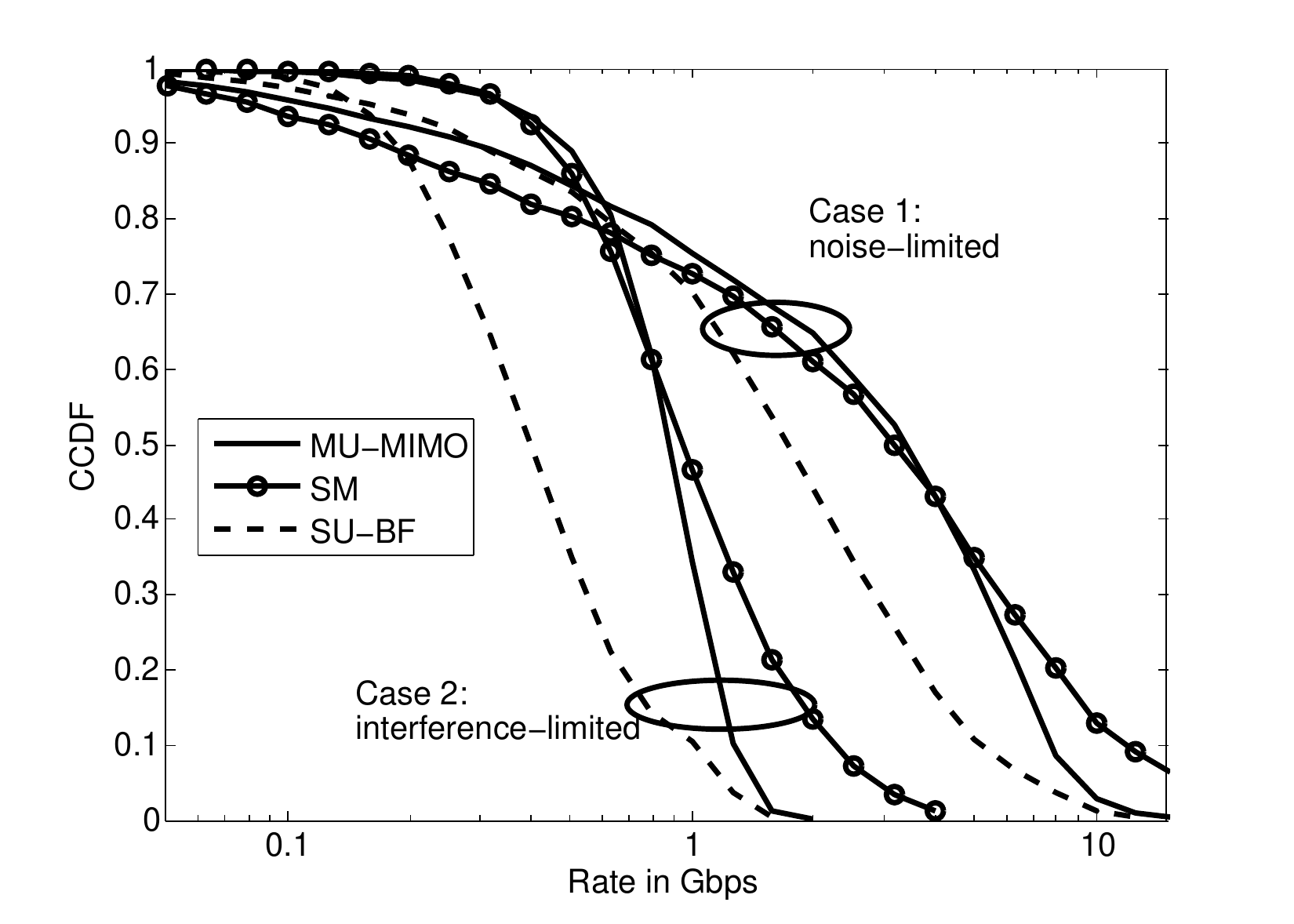}}
\caption{Comparison of MIMO techniques with fixed BS density as well as fixed number of antennas per BS/UE. $\mathrm{U}_\mathrm{M} = 4$}
 \label{fig:compare1}
\end{figure*}
\else
\begin{figure*}
  \centering
\subfloat[Sparse multipath: MU-MIMO outperforms except for very low loads. $\eta_\mathrm{L} = 2$, $\eta_\mathrm{N} = 3$ ]
{\label{fig:lowload}\includegraphics[width=\columnwidth, trim = 0.9cm 0.3cm 1cm 0.6cm,clip]{Lowload}}
\subfloat[Enabling 4 stream SM for $\lambda_\mathrm{UE} = 100/$km$^2$ shows benefit over MU-MIMO for median/peak rates. $\eta_\mathrm{L} = 10$, $\eta_\mathrm{N} = 12$, , ]
{\label{fig:highmultipath}\includegraphics[width= \columnwidth, trim = 0.9cm 0.3cm 1cm 0.6cm,clip]{Highmultipath}}
\caption{Comparison of MIMO techniques with fixed BS density. $\mathrm{U}_\mathrm{M} = 4$}
 \label{fig:compare1}
\end{figure*}
\fi
\subsubsection{Cases Where Interference is Not Negligible}
Fig.~\ref{fig:Fig5} shows the validation of $\SINR$ coverage formula in Theorem~\ref{thm:sinr2} for single path scenario. In order to present a case where interference effects are not negligible we consider a network at 28 GHz band with 200 MHz bandwidth, a less blocked scenario with $\plos = 0.5$ for $\mathrm{D}=  200$m and much higher $\userdnsty = 1000/$km$^2$. As per discussion in Section~\ref{sec:MUMIMO}, $\rho_\mathrm{BS}=1/(\mathrm{N_\mathrm{BS}} \sin(0.244))$ and $\rho_\mathrm{UE} = 1/(\mathrm{N_\mathrm{UE}} \sin(0.244))$.  Fig.~\ref{fig:Fig5} shows that increasing BS density does not necessarily improve coverage. This trend is similar to that observed in \cite{BaiHea14} and shows the presence of an optimal BS density in terms of $\SINR$ coverage. Approximate analytical results in Lemma~\ref{lem:int} and Appendix~\ref{app:int2} capture the essential non-monotonic trend shown with the simulations. Fig.~\ref{fig:Fig51} further validates the analysis in Appendix~\ref{app:int2} for multipath scenario as well as shows a decreasing gap with the physical model simulations as the number of antennas grows large. Both these plots build confidence in the analysis and derived insights. Using analysis, it can be found that optimum BS density for $\NRF = 1, 2$ and $4$ decreases as $82, 72$ and $63$ BSs/km$^2$. Thus, with increasing $\NRF$ the optimum BS density reduces due to increasing interference in the network. %Although not shown in Fig.~\ref{fig:Fig5}, as BS density approaches UE density or becomes greater than that, the coverage again starts increasing since interference in the network is saturated. 
%Another thing to note is that ultra-dense deployment of mmWave networks with large number of antenna arrays and moderate bandwidth gives significantly higher $\SINR$ coverage, unlike large bandwidth networks considered in Fig~\ref{fig:validation1}. Fig.~\ref{fig:Fig5}  shows that even an $\SINR$ on the order of $30$ to $40$ dB may be common. In order to convert these high spectral efficiencies into user perceived throughput, a major challenge would be to support higher order constellations. However, since large bandwidth available in mmWave bands serves as a brute force solution to offer high data rates, it is likely that extreme densification or very large antenna arrays at UEs would not be required at least for initial deployments.  One caveat here is that we have not considered human body blockages or foliages in our analysis which might give higher estimates of $\SINR$ in this work and it is desirable to incorporate these effects in future to see if these insights hold.

%
%\begin{table}
%\caption{Optimum BS density versus $\NRF$, with $\eta_\mathrm{L} = \eta_\mathrm{N} = 1$}
%\label{tab:optimumdensity}
%\centering
%\begin{tabulary}{\columnwidth}{|C | C|}
%\hline
%{\bf $\NRF$} & {\bf Optimum BS density (per km$^2$)}\\\hline
%1 & 82 \\\hline
%2 & 72 \\\hline
%4 & 63\\\hline
%\end{tabulary}
%\end{table}
\subsection{Comparing Per User and Sum Rate for SU-BF, MU-MIMO and SM}
The gains with SM and MU-MIMO are fundamentally driven by distinct network parameters. For example, having more number of multipaths (or larger $\eta_\mathrm{L}$ and $\eta_\mathrm{N}$) increases the rank of the channel and thus enables transmitting more number of streams with single user SM, given that there are enough RF chains at the transmitter and the receiver. However, this does not necessarily help in having more multi-user streams. On the other hand, having low load reduces the possible gain with MU-MIMO even if each BS is equipped with a large number of RF chains due to the fact that there are not many users to schedule simultaneously per BS. This does not however affect SM in terms of the number of streams per user. Thus, sufficiently low load and high multipaths may cause SM to outperform MU-MIMO given that there are enough RF chains at the BSs and UEs. This can be seen in Figures~\ref{fig:lowload} and \ref{fig:highmultipath}. The plots for MU-MIMO and SU-BF in Figure~\ref{fig:lowload} are with analysis. The plots for SM in Figure~\ref{fig:lowload} and the entire Figure~\ref{fig:highmultipath} is using Monte-Carlo simulations. Note that our analytical model is valid for $\eta_\mathrm{L},\eta_\mathrm{N}\gg \NMS$ and not for  $\eta_\mathrm{L},\eta_\mathrm{N}$ close to $\NMS$, which is the case in Figure~\ref{fig:highmultipath}. 
\ifSC
\begin{figure}
\centering
\includegraphics[width= 0.5\columnwidth, trim = 0.9cm 0.3cm 1cm 0.6cm,clip]{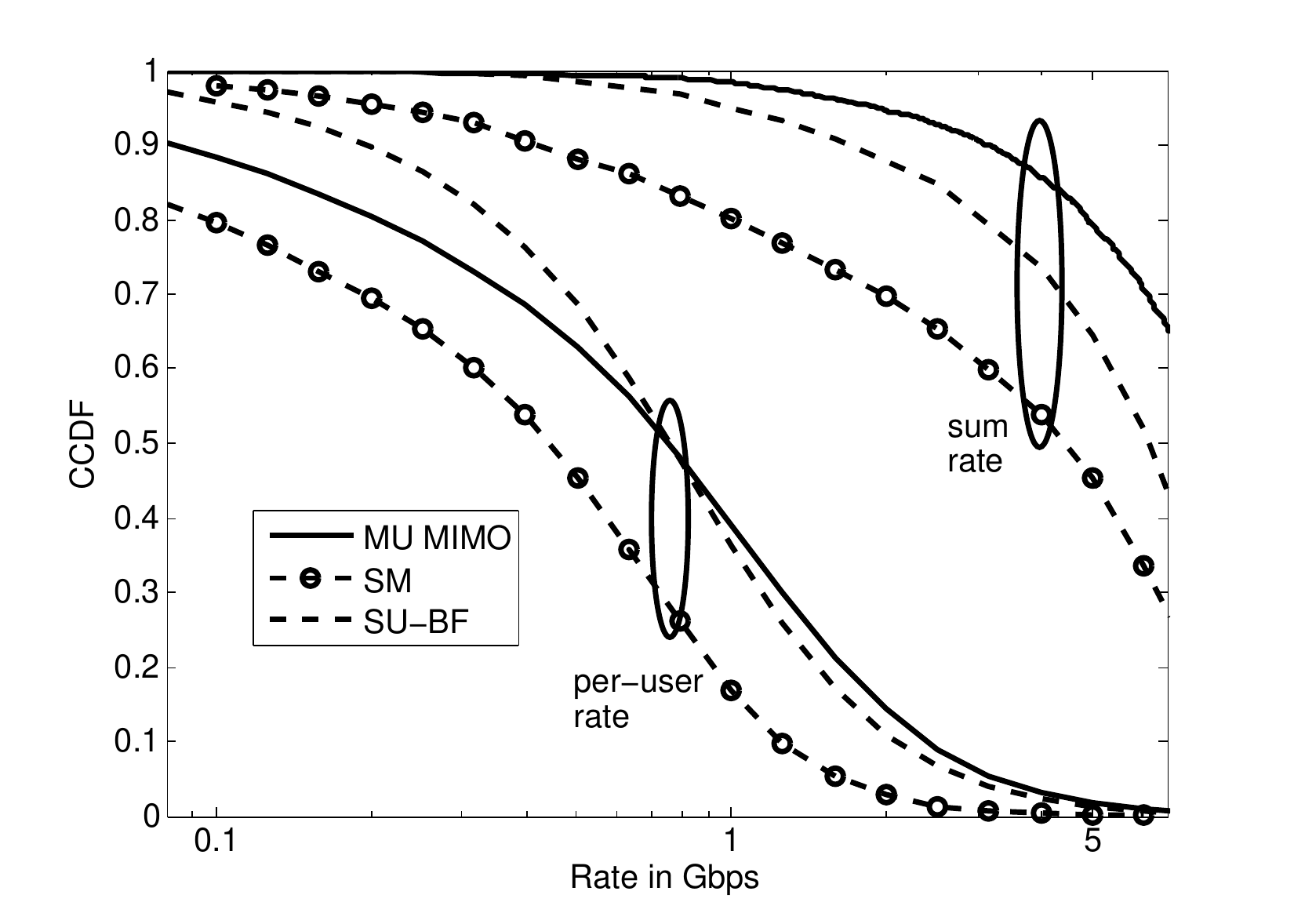}
\caption{Comparison of per user and sum rate with fixed power consumption. A denser SU-BF network outperforms MU-MIMO and SM in terms of cell edge per user rates but MU-MIMO performs the best in terms of sum rates. }
 \label{fig:compare2}
\end{figure}
\else
\begin{figure}
\centering
\includegraphics[width= \columnwidth, trim = 0.9cm 0.3cm 1cm 0.6cm,clip]{Fig6}
\caption{Comparison of per user and sum rate with fixed power consumption. A denser SU-BF network outperforms MU-MIMO and SM in terms of cell edge per user rates but MU-MIMO performs the best in terms of sum rates. }
 \label{fig:compare2}
\end{figure}
\fi
Figure~\ref{fig:lowload} shows that for moderate and low user densities (which corresponds to $\lambda_\mathrm{UE} = 500/$km$^2$ and $\lambda_\mathrm{UE} = 100/$km$^2$) MU-MIMO outperforms SM and SU-BF. However, for very low load (corresponds to $10$ UEs/km$^2$) SM outperforms MU-MIMO. This result is due to the fact that although SM can offer 2 streams per user but MU-MIMO cannot provide gains since per km$^2$ there are only 10 users that can associate with 60 possible BSs and the probability that a BS connects to more than 1 user is very low. Since our analytical model slightly loose estimates for low $\SNR$ users, we compare the cell edge rates using simulations only. The cell edge rates are quite close for the three schemes, although SM and SU-BF slightly outperform MU-MIMO. For low loads, SM is slightly better than SU-BF in terms of cell edge rates. Considering that overhead with MU-MIMO could be the highest, this trend will be more exaggerated afte considering these factors. A better scheduling will be indeed important for protecting cell edge rates with MU-MIMO.

Figure~\ref{fig:highmultipath} shows the impact of high multipath on the comparison insights. As was observed in Figure~\ref{fig:lowload}, MU-MIMO outperformed SM for $\lambda_\mathrm{UE} = 100/$km$^2$ when multipath was low.  For the same network parameters, that lead to a noise-limited case, increasing the multipath to  $\eta_\mathrm{L} = 10$ and $\eta_\mathrm{N} = 12$ gives higher rates with SM for even $30$ percentile users. This is again due to the fact that since there are 4 RF chains at UEs and BSs, SM can support 4 streams per user. However, since there are about 1.7 UEs per BS, BSs can only transmit to about 2 UEs per time slot on an average with MU-MIMO. Further the increased multipath leads to higher ZF penalty for MU-MIMO. Similar trend is observed in the interference-limited scenario ($f_c =28\text{ GHz},\mathrm{B} = 100\text{ MHz}, p_\mathrm{LOS} = 0.5$). Since a low blockage scenario is considered, the 4 streams per UE are LOS links with very high probability. Thus, the gains with SM look slightly exaggerated in the interference-limited case. Also note that having a large multipath as considered here could be a unlikely scenario in outdoor mmWave networks \cite{Akd14} but it is interesting to consider from an analytical perspective. 

A re-interpretation of the above plots can be made in terms of minimum allowable efficiency of MU-MIMO to outperform SM or SU-BF. For example, when $\lambda_\mathrm{UE} = 500/$km$^2$ in Figure~\ref{fig:lowload}, MU-MIMO outperforms SM in terms of median users if its efficiency factor is more than 58$\%$. Similarly, such numbers can be extracted for other plots 
using Definition~\ref{def:releffi}. As mentioned earlier, a separate study on estimating these efficiency factors is needed to make a strong claim on comparison of these MIMO techniques. 
%\begin{figure*}
%  \centering
%\subfloat[Comparison of per user rate.]
%{\label{fig:comparepower1}\includegraphics[width=0.5\columnwidth, trim = 0.9cm 0.3cm 1cm 0.6cm,clip]{Fig7}}
%\subfloat[Comparison of sum rate]
%{\label{fig:comparepower2}\includegraphics[width= 0.5\columnwidth, trim = 0.9cm 0.2cm 1cm 0.6cm,clip]{Fig8}}
%\caption{Per user and sum rate comparison with fixed power consumption per unit area. }
% \label{fig:comparepower}
%\end{figure*}

The above comparison results were for fixed BS density and the same number of antennas across different schemes. However, with an increasing number of RF chains, the power consumed per BS also increases. In the hybrid precoding as shown in  Fig.~\ref{fig:hybridarchitecture}, each RF chain is connected to all antennas through phase shifters. Thus, with increasing number of RF chains the number of phase shifters grows proportionally with the number of antennas, and effectively the power consumption is also increased. Let $\nu(\mathrm{N}_\mathrm{RF})$ denote the ratio of power consumed at a BS with $\mathrm{N}_\mathrm{RF}$ RF chains to a BS with 1 RF chain. 
%Based on \cite{rial15}, the power consumption for a fully-connected architecture with $\mathrm{N}_\mathrm{RF}$ chains can be modelled as 
%\begin{equation*}
%\power{\text{BS}}(\mathrm{N}_\mathrm{RF}) = \power{}+\NBS\mathrm{N}_\mathrm{RF}\power{\text{PS}} + \NBS
%\power{\text{LNA}} 
%+\mathrm{N}_\mathrm{RF} (\power{\text{RF chain}}+\power{\text{DAC}}) + \power{\text{BB}},
%\end{equation*}
%where $\power{}$ is the transmit power of BS, $\power{\text{PS}}$ is the power consumed by the phase shifters, $\power{\text{LNA}}$ is the power consumed by the low noise amplifier, $\power{\text{DAC}}$ is the power consumed by the digital to analog converter,
%$\power{\text{BB}}$ is the power consumed by baseband processing and $\power{\text{RF}}$ is power consumed by RF chains. The values will depend on system parameters like bandwidth and carrier frequency in addition to the details of a particular implementation. Note that $\nu(\mathrm{N}_\mathrm{RF}) = \power{\text{BS}}(\mathrm{N}_\mathrm{RF})/\power{\text{BS}}(1)$. 
A ballpark value of $\nu$ can be found to be $1.38$ for $\NBS=64$ and $\mathrm{N}_\mathrm{RF}=\NRF=2$ based on the power consumption model in \cite{rial15} (refer \cite{KulAlk15} for a discussion on this). 
%\begin{table}
%\caption{Power consumption parameters based on \cite{rial15}}
%\label{tab:power}
%\centering
%\begin{tabulary}{\columnwidth}{|L | C|L | C|}
%\hline
%{\bf Parameter} & {\bf Value (in mW)}& {\bf Parameter} & {\bf Value (in mW)}\\\hline
%$\power{}$ & 1000 &
%$\power{\text{PS}}$ & 20\\\hline
%$\power{\text{LNA}}$ & 20 &
%$\power{\text{RF chain}}$ & 40\\\hline
%$\power{\text{BB}}$ & 200 & $\power{\text{DAC}}$ & 200\\\hline
%\end{tabulary}
%\end{table}
We now scale up the BS density of SU-BF by exactly a factor of $\nu$. Note that UEs need to use only single RF chain for SU-BF and MU-MIMO with hybrid precoding. However, UEs need multiple RF chains for SM with hybrid precoding architecture. Thus, for fair comparison considering power consumption model in \cite{rial15} we reduce the $\NMS$ to 7 for SM. As can be seen from Fig.~\ref{fig:compare2}, the gain in per user data rates with MU-MIMO and SM diminishes or completely vanishes if the SU-BF network has 1.38 times denser deployment on an average. Fig.~\ref{fig:compare2} shows that MU-MIMO still has significantly higher sum rates than for a denser SU-BF network. However, per user cell edge rates with a denser SU-BF network are higher in this case. To quantify the cell edge gains in per user rates $\mathcal{O}_{\mathrm{MU},\mathrm{SU}}(0.95) = 315\%$, which is huge and strengthens our conclusion that a denser SU-BF network outperforms MU-MIMO in terms of cell edge rates. Also note that $\mathcal{O}_{\mathrm{MU},\mathrm{SU}}(0.5) = 99\%$, which implies that most likely even the median gains with SU-BF will be better after incorporating the channel acquisition overheads. However, in terms of sum rates $\mathcal{O}_{\mathrm{MU},\mathrm{SU}}(0.5) = 73\%$ which implies that median rates with MU-MIMO can still be higher as long as the efficiency is more than $73\%$ of SU-BF efficiency. 

%To summarize, if leasing sites for deploying more BSs and providing backhaul is a bottleneck, MU-MIMO and SM can be used for boosting data rates as compared to SU-BF mmWave networks. However, efficient implementations of channel acquisition using its sparsity needs to be done to exploit these gains. Neglecting the overheads due to downlink training and computational complexity, MU-MIMO offers significant gains in sum rates as compared to SM and SU-BF for a fixed density of BSs as well as for fixed power consumption per unit area. However, densifying the SU-BF network keeping power consumption per unit area equal to the MU-MIMO and SM counterparts, helps in getting better $\SINR$ coverage and thus better cell edge rates as well.
% %
% Plots: Fig 5 shows interference
\section{Conclusions and Future Work}
\label{sec:conclusion}
The analytical model in this work demonstrates the utility of the virtual channel approximation to incorporate different precoder and combiner constraints in network level analysis of dense MIMO cellular networks with many antennas. It would be beneficial to get tighter bounds on the Laplace functional of the out-of-cell interference. The analytical model can also be extended to incorporate more realistic cross-polarized uniform planar arrays instead of ULA. Another important issue that needs to be addressed is to incorporate the effects of imperfect channel state information in the analytical model. Since MU-MIMO requires more channel state information at the transmitter, imperfect channel knowledge may affect the performance of MU-MIMO more than SM or SU-BF. It is essential to know whether this would overshadow the benefits of MU-MIMO over SM and SU-BF observed in this paper. %Unlike in sub-6GHz networks, compressed sensing tools can be used for low-complexity channel estimation using the sparseness of the channel\cite{AlkMo14}. This might turn out to benefit MU-MIMO, for which imperfect channel estimation and acquisition overhead have been a major challenge in sub-6GHz networks\cite{Ges07}. 

 %If initial deployments are ultra-dense with SU-BF and the BSs are then upgraded to MU-MIMO in future, it is possible that switching off some BSs might help achieve better $\SINR$ coverage. This is due to the fact that the optimum BS density in terms of $\SINR$ coverage is lesser with MU-MIMO as compared to SU-BF, as shown in Section~\ref{sec:results}. 

%Dynamic time division duplexing and self-backhauling have been suggested for employing low latency dense mmWave cellular networks\cite{nokia15, Ghosh14}. With favorable BS to BS propagation and high beamforming gain, the downlink-to-uplink interference may be very difficult to mitigate\cite{Rangan15}, which may affect the MIMO comparison insights presented in this paper. Further, increased interference with MU-MIMO and sectorization may collapse the data rates, especially the edge rates. Investigating how the choice of MIMO techniques vary with these additional constraints on mmWave networks is, thus, an important open question yet to be addressed. %and references: together 1 page 
\begin{appendix}
\section{Appendices}
\subsection{Derivation of Zero Forcing Penalty in Proposition~\ref{prop:1}}
\label{app:prop1}
For simplicity in notation, let us denote by $\theta^{i}_{j}$ and $\phi^{i}_{j}$ as the AOD and AOA on the $j^{\text{th}}$ path from/to the BS at $x$ under consideration to/from the $i^{\text{th}}$ user, $i\in\{1,\ldots,\mathrm{U}\}$, served by the BS, respectively. $\overline{{\bf H}}_{\x,\us}$ is equal to (\ref{eq:equivalentchannel}) when all of the following events are true.
\begin{itemize}
\item $E_1$ : $\mathrm{\bf a}^{*}_{\mathrm{UE}}(\phi^{k}_{1})\aMS(\phi^{k}_{j})\mathrm{\bf a}^{*}_{\mathrm{BS}}(\theta^{k}_{j})\aBS(\theta^{1}_{1}) = 0$ for all $j\in\{1,\ldots,\eta_k\}$ and $k\in\{2,\ldots,\mathrm{U}\}$.
\item  $E_2$ : $\mathrm{\bf a}^{*}_{\mathrm{UE}}(\phi^{1}_{1})\aMS(\phi^{1}_{j})\mathrm{\bf a}^{*}_{\mathrm{BS}}(\theta^{1}_{j})\aBS(\theta^{k}_{1}) = 0$ for all $j\in\{1,\ldots,\eta_1\}$ and $k\in\{2,\ldots,\mathrm{U}\}$.
\item  $E_3$ : $\mathrm{\bf a}^{*}_{\mathrm{UE}}(\phi^{1}_{1})\aMS(\phi^{1}_{j})\mathrm{\bf a}^{*}_{\mathrm{BS}}(\theta^{1}_{j})\aBS(\theta^{1}_{1}) = 0$ for all $j\in\{2,\ldots,\eta_1\}$.
\end{itemize}
Note that probability of $p_\mathrm{ZF}=1$ is given by $\pr(E_1\cap E_2\cap E_3)$. Using the ON/OFF nature of inner products of beamsteering vectors with virtual channel approximation, we can re-write the above conditions as
\begin{itemize}
\item $\mathcal{E}_1=A_1 \cap A_2$, where $A_1 = \bigcap_{k=2}^{\mathrm{U}}\{ \theta^1_1\neq \theta^{k}_1\}$ and $A_2 = \bigcap_{k=2}^{\mathrm{U}}\bigcap_{j=2}^{\eta_k}\{\phi^k_1\neq \phi^k_j\} \cup \{\theta^k_j\neq \theta^1_1 \}$.
\item $\mathcal{E}_2=A_1\cap A_3$, where $A_3= \bigcap_{k=1}^{\mathrm{U}}\bigcap_{j=2}^{\eta_1}\{\phi^1_1\neq \phi^1_j\} \cup \{\theta^1_j\neq \theta^k_1 \}$.
%\item $E_3=A_3$, where $A_3= \bigcup_{j=2}^{\eta_1} \{\phi^1_1\neq \phi^1_j\} \cup \{\theta^1_j\neq \theta^1_1 \}$.
\end{itemize}
Note that $\pr(E_1\cap E_2\cap E_3)= \pr(\mathcal{E}_1\cap \mathcal{E}_2) = \pr(A_1\cap A_2\cap A_3)$. Conditioning on $\theta^1_1$, $A_2$ is independent of $A_1$ and $A_3$. Using (a) $ \pr(A \cup B) =  \pr(A)+\pr(B)-\pr(A \cap B)$, (b) all distinct AOA or AOD are independently distributed as per the distribution given in Lemma~\ref{lem:angle}, (c)  $\bigcap_{k=1}^{\mathrm{U}}\bigcap_{j=2}^{\eta_1}\{\phi^1_1\neq \phi^1_j\} \cup \bigcap_{k=1}^{\mathrm{U}}\bigcap_{j=2}^{\eta_1} \{\theta^1_j\neq \theta^k_1 \}\subset\bigcap_{k=1}^{\mathrm{U}}\bigcap_{j=2}^{\eta_1}\{\phi^1_1\neq \phi^1_j\} \cup \{\theta^1_j\neq \theta^k_1 \}$ and (d) for a highly dense network, the probability that the BS is serving a LOS UE is $\plos$ since the association region of a BS is almost surely covered by the ball of radius $\mathrm{D}$ centered at the BS, the required lower bound on the probability of $p_\mathrm{ZF}=1$ is derived, also given by $\zeta(.)$.  In order to get the more simplified expression in Remark~1, the term $\mathrm{D}_{j}(.)$ in Proposition~\ref{prop:1} needs to be simplified. For equally likely virtual angles, this can be found using the following Lemma, which we propose. 
\begin{lem}
Pick $U$ numbers that take values in range $\{1,\ldots,N\}$. Repetition of values is allowed and order is important. The probability that the first $U_1$ numbers are mutually exclusive from the remaining $U_2=U-U_1$ is given by $\mathcal{P}$, where
\begin{equation*}
\mathcal{P} = \sum_{d=1}^{U_1} {N \choose d} (N-d)^{U_2} \sum_{i=0}^{d} {d \choose i} (-1)^{i} (d-i)^{U_1}.
\end{equation*}
\end{lem}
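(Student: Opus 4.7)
The plan is to decompose the favorable outcomes by the number $d$ of distinct values appearing among the first $U_1$ picks, and then invoke the standard inclusion-exclusion formula for the number of surjections.

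Concretely, I would condition on $d \in \{1,\ldots,U_1\}$, where $d$ is the number of distinct values realized in positions $1,\ldots,U_1$. For each fixed $d$, the specific $d$-element subset $S \subset \{1,\ldots,N\}$ used by the first block can be chosen in ${N \choose d}$ ways. Given $S$, the number of length-$U_1$ sequences taking values in $S$ such that every element of $S$ appears at least once is precisely the number of surjections from $\{1,\ldots,U_1\}$ onto $S$; by inclusion-exclusion over the subsets of $S$ that are not hit, this count equals $\sum_{i=0}^{d}(-1)^i {d \choose i}(d-i)^{U_1}$.

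Next, for the second block of $U_2$ picks to be mutually exclusive from $S$, each of its entries must independently be drawn from $\{1,\ldots,N\}\setminus S$, contributing a factor of $(N-d)^{U_2}$ sequences. Since these three choices are independent given $d$, multiplying ${N \choose d}$, the surjection count, and $(N-d)^{U_2}$, and summing over $d$ from $1$ to $U_1$, produces exactly the claimed expression for $\mathcal{P}$. The summation range can be extended beyond $\min(U_1, N)$ without affecting the value, since ${N \choose d} = 0$ for $d > N$ and the surjection count vanishes for $d > U_1$, so there is no separate boundary case to worry about.

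The only substantive ingredient is the inclusion-exclusion expression for the surjection count, which is standard; no deeper technical obstacle is anticipated. Once the decomposition by $d$ is made and the three independent factors are identified, the identity assembles itself, and the brief proof reduces to writing out the sum over $d$.
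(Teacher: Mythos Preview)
Your proposal is correct and matches the paper's own argument essentially verbatim: the paper also conditions on the number $d$ of distinct values among the first $U_1$ picks, notes that the remaining $U_2$ values then have $(N-d)^{U_2}$ choices, and invokes inclusion--exclusion for the surjection count in the inner sum. Your observation about harmlessly extending the summation range is a nice extra clarification, but otherwise there is nothing to add.
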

The idea is to condition that there are $d$ distinct values in first $U_1$ numbers, in which case the remaining $U_2$ numbers can take values in $(N-d)^{U_2}$ ways. Further the number of ways in which first $U_1$ numbers take $d$ distinct values can be found using inclusion exclusion principle, which is given by the inner summation. 

\subsection{Proof of Theorem~\ref{thm:sinr}}
\label{app:sinr}
Let $l^*_\mathrm{L}$ and $l^*_\mathrm{N}$ denote the points closest to origin in $\mathcal{N}_\mathrm{L}$ and $\mathcal{N}_\mathrm{N}$, respectively.
Using Lemma~\ref{lem:mdnsty}, the probability of associating with a LOS BS is given by 

\begin{align*}
\mathrm{A}_\mathrm{L} \iftoggle{SC}{}{&}= \mathrm{B}_\mathrm{L}\int_{0}^{\infty} \pr\left(l^*_\mathrm{N}>t\right) f_\mathrm{L}(t) \mathrm{d}t\iftoggle{SC}{}{\\&}
=\mathrm{B}_\mathrm{L}\int_{0}^{\infty} \exp\left(-\dnsty \mathrm{M}_\mathrm{N}(t)\right) f_\mathrm{L}(t) \mathrm{d}t.
\end{align*}
Similarly, the probability of associating with NLOS BS is given by 
$
\mathrm{A}_\mathrm{N} =\mathrm{B}_\mathrm{N}\int_{0}^{\infty} \exp\left(-\dnsty \mathrm{M}_\mathrm{L}(t)\right) f_{\mathrm{N}}(t)\mathrm{d}t
$.
Similar to Lemma 3 in \cite{BaiHea14}, the PDF of propagation loss to associated BS given that the association is of type LOS, is given by 
$
\tilde{f}_\mathrm{L}(t) = \frac{\mathrm{B}_\mathrm{L}}{\mathrm{A}_\mathrm{L}}f_{\mathrm{L}}(t) \exp\left(-\dnsty \mathrm{M}_\mathrm{N}(t)\right)
$.
Similarly, the PDF of propagation loss given the associated BS is NLOS is given by 
$
\tilde{f}_\mathrm{N}(t) = \frac{\mathrm{B}_\mathrm{N}}{\mathrm{A}_\mathrm{N}}f_{\mathrm{N}}(t) \exp\left(-\dnsty \mathrm{M}_\mathrm{L}(t)\right)
$. Define $\mathcal{S}(\tau,\mathrm{U}) \triangleq \pr\left(\SNR_{\x,0}>\tau|U_\x=\mathrm{U}\right)$. Thus, $\overline{\mathcal{S}}(\tau) = \cexpect{\mathcal{S}(\tau,\mathrm{U})}{U_\x=\mathrm{U}}$. 
By the law of total probability,
\ifSC
\begin{align*}
\mathcal{S}(\tau,\mathrm{U}) &= \mathrm{A}_\mathrm{L} \pr\left(\SNR_{x,0}>\tau\Big| \text{LOS connection} \right) +\mathrm{A}_\mathrm{N} \pr\left(\SNR_{x,0}>\tau\Big| \text{NLOS connection} \right)\\
& \stackrel{(a)}{\approx} \mathrm{A}_\mathrm{L} \int_{0}^{\infty} \pr\left(|\gamma_{i_m,\x,0}|^2>\frac{\eta_{\mathrm{L}}\tau \mathrm{U} t \sigma^2_n }{p_{\mathrm{MU}}\mathrm{G}}\right) \tilde{f}_\mathrm{L}(t) \mathrm{d} t + \mathrm{A}_\mathrm{N} \int_{0}^{\infty} \pr\left(|\gamma_{i_m,\x,0}|^2>\frac{\eta_{\mathrm{N}}\tau \mathrm{U} t \sigma^2_n }{p_{\mathrm{MU}}\mathrm{G}}\right) \tilde{f}_\mathrm{N}(t) \mathrm{d} t\\
&= \mathrm{B}_\mathrm{L}\int_{0}^{\infty} \pr\left(|\gamma_{i_m,\x,0}|^2>\frac{\eta_{\mathrm{L}}\tau \mathrm{U} t \sigma^2_n}{p_{\mathrm{MU}}\mathrm{G}}\right) \exp\left(-\dnsty \mathrm{M}_\mathrm{N}(t)\right)f_{\mathrm{L}}(t)\mathrm{d}t \\
&\hspace{3cm}+ \mathrm{B}_\mathrm{N}\int_{0}^{\infty} \pr\left(|\gamma_{i_m,x,0}|^2>\frac{\eta_{\mathrm{N}}\tau \mathrm{U} t \sigma^2_n}{p_{\mathrm{MU}}\mathrm{G}}\right) \exp\left(-\dnsty \mathrm{M}_\mathrm{L}(t)\right)f_{\mathrm{N}}(t)\mathrm{d}t,
\end{align*} 
\else
\begin{align*}
&\mathcal{S}(\tau,\mathrm{U}) = \mathrm{A}_\mathrm{L} \pr\left(\SNR_{x,0}>\tau\Big| \text{LOS connection} \right) \\
&\hspace{1cm}+\mathrm{A}_\mathrm{N} \pr\left(\SNR_{x,0}>\tau\Big| \text{NLOS connection} \right)\\
& \stackrel{(a)}{\approx} \mathrm{A}_\mathrm{L} \int_{0}^{\infty} \pr\left(|\gamma_{i_m,\x,0}|^2>\frac{\eta_{\mathrm{L}}\tau \mathrm{U} t \sigma^2_n }{p_{\mathrm{MU}}\mathrm{G}}\right) \tilde{f}_\mathrm{L}(t) \mathrm{d} t \\
&\hspace{1cm}+ \mathrm{A}_\mathrm{N} \int_{0}^{\infty} \pr\left(|\gamma_{i_m,\x,0}|^2>\frac{\eta_{\mathrm{N}}\tau \mathrm{U} t \sigma^2_n }{p_{\mathrm{MU}}\mathrm{G}}\right) \tilde{f}_\mathrm{N}(t) \mathrm{d} t \\
&= \mathrm{B}_\mathrm{L}\int\limits_{0}^{\infty} \pr\left(|\gamma_{i_m,\x,0}|^2>\frac{\eta_{\mathrm{L}}\tau \mathrm{U} t \sigma^2_n}{p_{\mathrm{MU}}\mathrm{G}}\right) e^{-\dnsty \mathrm{M}_\mathrm{N}(t)}f_{\mathrm{L}}(t)\mathrm{d}t \\
&+ \mathrm{B}_\mathrm{N}\int\limits_{0}^{\infty} \pr\left(|\gamma_{i_m,x,0}|^2>\frac{\eta_{\mathrm{N}}\tau \mathrm{U} t \sigma^2_n}{p_{\mathrm{MU}}\mathrm{G}}\right) e^{-\dnsty \mathrm{M}_\mathrm{L}(t)}f_{\mathrm{N}}(t)\mathrm{d}t,
\end{align*} 
\fi
where (a) is obtained using Proposition~\ref{prop:1}. Note that the first integral is the probability that $\SNR$ exceeds the threshold and there is LOS connection, whereas the second term is for NLOS connection. Let us consider the probabilities in each of these two terms separately.
\begin{align*}
&\pr\left(|\gamma_{i_m,x,0}|^2>\frac{\eta_{\mathrm{L}}\tau \mathrm{U} t  \sigma^2_n}{p_{\mathrm{MU}}\mathrm{G}}\right) \iftoggle{SC}{}{\\&} = \pr\left(p_\mathrm{MU} = 1\right) \pr\left(|\gamma_{i_m,x,0}|^2>\frac{\eta_{\mathrm{L}}\tau \mathrm{U} t \sigma^2_n}{p_{\mathrm{MU}}\mathrm{G}}\Big| p_{\mathrm{MU}}=1\right)\\&
\stackrel{(b)}{=} \zeta(\eta_\mathrm{L},\mathrm{U}) \pr\left(|\gamma_{i_m,x,0}|^2>\frac{\eta_{\mathrm{L}}\tau \mathrm{U} t \sigma^2_n}{\mathrm{G}}\right),
\end{align*}
where (b) is obtained from distribution of $p_\mathrm{MU}$ in Proposition~\ref{prop:1}. 

Further, using the distribution of maximum of $\eta_\mathrm{L}$ exponential random variables for $|\gamma_{i_m,x,0}|^2$, 
\begin{align*}
\iftoggle{SC}{}{&}\pr\left(|\gamma_{i_m,x,0}|^2>\frac{\eta_{\mathrm{L}}\tau \mathrm{U} t \sigma^2_n}{\mathrm{G}}\right) \iftoggle{SC}{}{\\}
&= \sum_{n=1}^{\eta_\mathrm{L}}(-1)^{n+1}{\eta_\mathrm{L} \choose n}\exp\left(-\eta_{\mathrm{L}}\tau n \mathrm{U} t  \sigma^2_n/\mathrm{G}\right).
\end{align*}
Similarly, we can find the NLOS probability term, which completes the proof.

\subsection{Proof of Lemma~\ref{lem:int}}
\label{app:int}
The Laplace functional of the out-of-cell interference to a user at origin, given the path loss to the serving BS, is defined as 
$
L_{I_0,l}(s) \triangleq \expect{\exp\left(-s I_0\right)|L(\x,0) = l}.
$

\begin{align*}
L_{I_0,l}(s) &= \expect{\exp\left(-s\sum_{y\in\PPP, y\neq \x}\frac{\mathrm{G} |\gamma_{y,0}|^2 \chi_y }{L(y,0) U_y}\right)\Bigg|L(\x,0) = l}\\
 & \stackrel{(a)}{=} \expect{\exp\left(-s\sum_{t\in\mathcal{N}, t\geq l}\frac{\mathrm{G} |\gamma_{t}|^2 t^{-1} \chi_t }{U_t}\right)}
 \iftoggle{SC}{}{\\&}
  \stackrel{(b)}{=} \expect{\prod_{t\in\mathcal{N}, t\geq l}\exp\left(-\frac{s\mathrm{G} |\gamma_{t}|^2 t^{-1}\chi_t }{U_t}\right)}\\
%\end{align*}
%\begin{align*}
 & {=} \expect{\prod_{t\in\mathcal{N}, t\geq l}\cexpect{\exp\left(-\frac{s\mathrm{G} |\gamma_{t}|^2 t^{-1}\chi_t }{U_t}\right)}{|\gamma_{t}|^2}}
  \iftoggle{SC}{}{\\&}
 \stackrel{(c)}{=} \expect{\prod_{t\in\mathcal{N}, t\geq l}\frac{1}{1+\psi_t}}\\
& \stackrel{(d)}{=} \exp\left(-\int_{l}^{\infty}\left(1-\cexpect{\frac{1}{1+\psi_t}}{\psi_t}\right)\Lambda(dt)\right)
 \iftoggle{SC}{}{\\&}
 = \exp\left(-\int_{l}^{\infty}\left(\cexpect{\frac{1}{1+\psi_t^{-1}}}{\psi_t}\right)\Lambda(dt)\right),
\end{align*}
where (a) is obtained by displacing each point $y\in\PPP, y\neq \x$ to $L(y,0) = t \in \mathcal{N}$, $t\geq l$. Note that $\gamma_{y,0}, U_y$ and $\chi_y$ are independent marks of $y\in\PPP$, whose distributions are themselves independent of the location $y$. After one to one mapping of each point $y\in\mathbb{R}^2$ to $t\in\mathbb{R}^+$ and each mark to itself, we associate each feasible point $t\in\mathcal{N}$ with independent marks $\gamma_{t}, U_t$ and $\chi_t$, with same distribution as the corresponding earlier marks. Here, (b) is obtained using independence of the marks of the displaced PPP and (c) since $\gamma_t$ are exponentially distributed random variables with unit mean and $\psi_t = \frac{s\mathrm{G}t^{-1}\chi_t}{U_t}$. Using the PGFL (probability generating functional) \cite{BacBook09} we obtain (d). Using the distribution of $U_y$ and $\chi_t$, we get the required result.

\subsection{Laplace Functional of Out-of-cell Interference for General Number of Paths}
\label{app:int2}
The out-of-cell interference from a BS at $y$ to user at origin, served by BS at $\x$ is given by
\ifSC
\begin{align*}
I_{y,0}&= \frac{\mathrm{G}L(y,0)^{-1}}{\eta_{y,u}U_y} \sum_{w\in\mathcal{U}_y} ||\sum_{j=1}^{\eta_{y,u}}\gamma_{j} \aMS^*(\phi_{x,u})\aMS(\phi_{j,y,u})\aBS^*(\theta_{j,y,u})\aBS(\theta_{y,w})||^2\\
& = \frac{\mathrm{G}L(y,0)^{-1}}{\eta_{y,u}U_y} \sum_{w\in\mathcal{U}_y} ||\sum_{j=1}^{\eta_{y,u}}\gamma_{j}\chi_{j,w}||^2,
\end{align*}
\else
\begin{multline*}
I_{y,0}= \frac{\mathrm{G}L(y,0)^{-1}}{\eta_{y,u}U_y} \sum_{w\in\mathcal{U}_y} 
||\sum_{j=1}^{\eta_{y,u}}\gamma_{j} \aMS^*(\phi_{x,u})\aMS(\phi_{j,y,u})\times \\\hspace{2cm}\aBS^*(\theta_{j,y,u})\aBS(\theta_{y,w})||^2,
\end{multline*}
Thus,
\begin{align*}
I_{y,0} = \frac{\mathrm{G}L(y,0)^{-1}}{\eta_{y,u}U_y} \sum_{w\in\mathcal{U}_y} ||\sum_{j=1}^{\eta_{y,u}}\gamma_{j}\chi_{j,w}||^2,
\end{align*}
\fi
where $\chi_{j,w}$ is given by,
%\begin{equation*}
%\chi_{j,w} = \begin{cases}
%1 & \text{ if } \phi_{x,u} = \phi_{j,y,u} \text{ and } \theta_{y,w} = \theta_{j,y,u}\\
%\rho_{\mathrm{BS}} & \text{ if } \phi_{x,u} = \phi_{j,y,u} \text{ and } \theta_{y,w} \neq \theta_{j,y,u}\\
%\rho_{\mathrm{UE}} & \text{ if } \phi_{x,u} \neq \phi_{j,y,u} \text{ and } \theta_{y,w} = \theta_{j,y,u}\\
%\rho_{\mathrm{BS}} \rho_{\mathrm{UE}} & \text{ otherwise.} 
%\end{cases}
%\end{equation*}
\ifSC
%\begin{multline*}
%\chi_{j,w} = \mathds{1}_{\{\phi_{x,u} = \phi_{j,y,u}\}}\mathds{1}_{\{\theta_{y,w} = \theta_{j,y,u}\}} +
%\rho_{\mathrm{BS}}\mathds{1}_{\{\phi_{x,u} = \phi_{j,y,u}\}}\mathds{1}_{\{\theta_{y,w} \neq \theta_{j,y,u}\}} +\\
%\rho_{\mathrm{UE}} \mathds{1}_{\{\phi_{x,u} \neq \phi_{j,y,u}\}}\mathds{1}_{\{\theta_{y,w} = \theta_{j,y,u}\}} + 
%\rho_{\mathrm{BS}} \rho_{\mathrm{UE}} \mathds{1}_{\{\phi_{x,u} \neq \phi_{j,y,u}\}} \mathds{1}_{\{\theta_{y,w} \neq \theta_{j,y,u}\}}
%\end{multline*}
\begin{equation*}
\chi_{j,w} = \begin{cases}
1 & \text{ if } \phi_{x,u} = \phi_{j,y,u} \text{ and } \theta_{y,w} = \theta_{j,y,u}\\
\rho_{\mathrm{BS}} & \text{ if } \phi_{x,u} = \phi_{j,y,u} \text{ and } \theta_{y,w} \neq \theta_{j,y,u}\\
\rho_{\mathrm{UE}} & \text{ if } \phi_{x,u} \neq \phi_{j,y,u} \text{ and } \theta_{y,w} = \theta_{j,y,u}\\
\rho_{\mathrm{BS}} \rho_{\mathrm{UE}} & \text{ otherwise.} 
\end{cases}
\end{equation*}
\else
\begin{equation*}
\chi_{j,w} = \begin{cases}
1 & \text{ if } \phi_{x,u} = \phi_{j,y,u} \text{ and } \theta_{y,w} = \theta_{j,y,u}\\
\rho_{\mathrm{BS}} & \text{ if } \phi_{x,u} = \phi_{j,y,u} \text{ and } \theta_{y,w} \neq \theta_{j,y,u}\\
\rho_{\mathrm{UE}} & \text{ if } \phi_{x,u} \neq \phi_{j,y,u} \text{ and } \theta_{y,w} = \theta_{j,y,u}\\
\rho_{\mathrm{BS}} \rho_{\mathrm{UE}} & \text{ otherwise.} 
\end{cases}
\end{equation*}
\fi
Now let us look at the Laplace functional of this interference power. 
\begin{align*}
\iftoggle{SC}{}{&} L_{I_0}(s) \iftoggle{SC}{&}{}= \expect{\exp\left(-s \sum_{y\in \PPP, y\neq x} I_{y,0} \right)} \iftoggle{SC}{}{\\&}
 = \mathbb{E}\left[\exp\left(-s \sum_{y\in\PPP, y\neq x} \frac{\mathrm{G} L(y,0)^{-1}}{\eta_{y,0}U_y} \iftoggle{SC}{}{\right.\right.\\&\hspace{3.5cm}\times \left.\left.} \sum_{w\in\mathcal{U}_y}  ||\sum_{j=1}^{\eta_{y,0}}\gamma_j \chi_{j,w}||^2\right)\right]\\
%& = \expect{\prod_{y\in\PPP,y\neq x} \exp\left(\frac{-s \mathrm{G} L(y,0)^{-1}}{\eta_{y,0}U_y}\sum_{w\in\mathcal{U}_y}||\sum_{j=1}^{\eta_{y,0}}\gamma_j \chi_{j,w}||^2 \right)}\\
& \stackrel{(a)}{=} \mathbb{E}\left[\prod_{y\in\PPP,y\neq x} \mathbb{E}_{{\chi_{(.,.)},\gamma_{(.)}}}\left[\exp\left(\frac{-s \mathrm{G} L(y,0)^{-1}}{\eta_{y,0}U_y} \iftoggle{SC}{}{\right.\right.\right.\\&\hspace{3.5cm}\times \left.\left.\left.}
\sum_{w\in\mathcal{U}_y}||\sum_{j=1}^{\eta_{y,0}}\gamma_j \chi_{j,w}||^2 \right)\right]\right].
\end{align*}
where (a) follows since $\chi$ and $\gamma$ have distributions independent of location $y$.
Finding the exact distribution from this expression is intractable. The main bottleneck is that the small scale fading random variables $\gamma_j$, are together clubbed in a single norm expression and thus, although these random variables are assumed to be independent, the distribution of the norm squared for different users in $\mathcal{U}_y$ are correlated exponential random variables. We, thus, find upper and lower bounds in this work. 
\subsubsection{Upper Bound on the Laplace Functional}
In order to find an upper bound, we use the fact that $\chi_{j,w}\geq \rho_\mathrm{BS} \rho_\mathrm{UE}$. Thus, 
\begin{align*}
\iftoggle{SC}{}{\hspace{-0.4cm}}L_{I_0}(s) &\leq   \mathbb{E}\left[\prod_{y\in\PPP,y\neq x} \mathbb{E}_{\chi_{(.,.)},\gamma_{(.)}}\left[\exp\left(-\frac{s\mathrm{G}  \rho^2_\mathrm{BS}\rho^2_\mathrm{UE}}{L(y,0)\eta_{y,0}U_y}
\iftoggle{SC}{}{\right.\right.\right.\\&\hspace{3.5cm}\times \left.\left.\left.}
 \sum_{w\in\mathcal{U}_y}||\sum_{j=1}^{\eta_{y,0}}\gamma_j ||^2 \right)\right]\right]\\
& \stackrel{(a)}{=}  \expect{\prod_{y\in\PPP,y\neq x} \cexpect{\exp\left(\frac{-s\mathrm{G}  \rho^2_\mathrm{BS}\rho^2_\mathrm{UE} \Xi}{L(y,0) \eta_{y,0}} \right)}{\Xi}}
\iftoggle{SC}{}{\\&}
= \expect{\prod_{y\in\PPP,y\neq x}\frac{1}{1+ s\mathrm{G} L(y,0)^{-1} \rho^2_\mathrm{BS}\rho^2_\mathrm{UE}}}, 
\end{align*}
where $\Xi$ is an exponential random variable with mean $\eta_{y,0}$ in (a).
In order to find the $\SINR$ distribution, we are interested in Laplace functional conditioned on path loss to serving BS. Thus, conditioning on $L(x,0)=l$ and displacing the points in $\Phi$ to  $\mathcal{N}$, similar to Appendix~\ref{app:int} we get,
\begin{align*}
L_{I_0,l}(s) \leq \expect{\prod_{t\in\mathcal{N},t\geq l}\frac{1}{1+ s\mathrm{G} t^{-1} \rho^2_\mathrm{BS}\rho^2_\mathrm{UE}}} 
\iftoggle{SC}{}{\\}
= \exp\left(-\int_{l}^{\infty} {\frac{\Lambda(\mathrm{d}t)}{1+\frac{1}{ s\mathrm{G} t^{-1} \rho^2_\mathrm{BS}\rho^2_\mathrm{UE}}}} \right).
\end{align*}
\subsubsection{Lower Bound on the Laplace Functional}
One obvious lower bound can be obtained using $\chi_{j,w}=1$.  The Laplace functional in this case is the same as for the upper bound with $\rho^2_\mathrm{BS}\rho^2_\mathrm{UE}$ replaced by 1. However, with the narrow beamwidth for a large number of antennas, this approximation is clearly very pessimistic. We can get a tighter lower bound using the Cauchy-Schwarz inequality as follows.  
\begin{align*}
L_{I_0}(s) &\geq \mathbb{E}\left[\prod_{y\in\Phi,y\neq x} \mathbb{E}_{\chi_{(.,.)},\gamma_{(.)}}\left[\exp\left(\frac{-s \mathrm{G} L(y,0)^{-1}}{\eta_{y,0}U_y} 
\iftoggle{SC}{}{\right.\right.\right.\\&\hspace{1.5cm}\times \left.\left.\left.}
\left(\sum_{j=1}^{\eta_{y,0}}||\gamma_j||^2 \right) \sum_{w\in\mathcal{U}_y}\sum_{j=1}^{\eta_{y,0}}\chi_{j,w}^2 \right)\right]\right]\\%{L(y,0),U_y, \eta_{y,0}}\\
%\end{align*}
%\begin{align*}
&\hspace{-1cm}=  \mathbb{E}\left[\prod_{y\in\Phi,y\neq x} \mathbb{E}_{\chi_{(.,.)}}\left[\left(1+\frac{s \mathrm{G} L(y,0)^{-1}}{\eta_{y,0}U_y}
\iftoggle{SC}{}{\right.\right.\right.\\&\hspace{3cm}\times \left.\left.\left.} \sum_{w\in\mathcal{U}_y}\sum_{j=1}^{\eta_{y,0}}\chi_{j,w}^2\right)^{-\eta_{y,0}}\right]\right].%{L(y,0),U_y, \eta_{y,0}}
%&=  \cexpect{\prod_{y\in\Phi,y\neq x} \cexpect{\left(\frac{1}{1+\frac{s \mathrm{G} L %(y,0)^{-1}}{\eta_{y,0}U_y} %\sum_{w\in\mathcal{U}_y}\sum_{j=1}^{\eta_{y,0}}\chi_{j,w}^2}\right)^{\eta_{y,0}}}{\phi_{x, %0},\phi_{:,y,0},\theta_{:,y,0},\theta_{y,w_{(.)}}}}{L(y,0),U_y, \eta_{y,0}}.
\end{align*}
Simplifying the term 
\iftoggle{SC}{$\Psi_y = \cexpect{\left({1+\frac{s \mathrm{G} L(y,0)^{-1}}{\eta_{y,0}U_y} \sum_{w\in\mathcal{U}_y}\sum_{j=1}^{\eta_{y,0}}\chi_{j,w}^2}\right)^{-\eta_{y,0}}}{\eta_{y,0}},$}{
\begin{equation*}
\Psi_y = \cexpect{\left({1+\frac{s \mathrm{G} L(y,0)^{-1}}{\eta_{y,0}U_y} \sum_{w\in\mathcal{U}_y}\sum_{j=1}^{\eta_{y,0}}\chi_{j,w}^2}\right)^{-\eta_{y,0}}}{\eta_{y,0}},
\end{equation*}
}
we get 
%\begin{multline*}
%\Psi_y =  \sum_{i=1}^{\NBS}\sum_{m=0}^{\eta_{y,0}}{\eta_{y,0} \choose m} q^{m+1}_{\mathrm{UE},i} (1-q_{\mathrm{UE},i})^{\eta_{y,0}-m}\mathbb{E}_{\theta_{:,y,0},\theta_{y,w_{(.)}}}\left[\left(1+\frac{s \mathrm{G} L(y,0)^{-1}}{\eta_{y,0}U_y} \right.\right.\times\\
%\left.\left. \sum\limits_{j=1}^{\eta_{y,0}}\left(\sum\limits_{w\in\mathcal{U}_y}\left((1-\rho^2_\mathrm{BS})\mathds{1}(\theta_{j,y,0}=\theta_{y,w}) +\rho^2_\mathrm{BS}\right)\right)\left(\rho^2_\mathrm{UE}+\mathds{1}(j\leq m)(1-\rho^2_\mathrm{UE})\right)\right)^{-\eta_{y,0}}\right].
%\end{multline*}
%This can be further simplified as 
\begin{multline*}
\Psi_y =  \sum_{i=1}^{\NBS}\sum_{m=0}^{\eta_{y,0}}{\eta_{y,0} \choose m} q^{m+1}_{\mathrm{UE},i} (1-q_{\mathrm{UE},i})^{\eta_{y,0}-m}
\iftoggle{SC}{}{\\\times}
\sum_{k_1,\ldots,k_{U_y}=1}^{\NBS}\sum_{j_1,\ldots,j_{U_y}=0}^{\eta_{y,0}}\prod_{n=1}^{U_y}{\eta_{y,0}\choose j_n} q^{j_n+1}_{\mathrm{BS},k_n}(1-q_{\mathrm{BS},k_n})^{\eta_{y,0}-j_n} \\
\times\left(1+\frac{s \mathrm{G} L(y,0)^{-1}}{\eta_{y,0}U_y} \sum_{j=1}^{\eta_{y,0}}\left(\sum_{n=1}^{U_y}\left(\rho^2_\mathrm{BS}
\iftoggle{SC}{}{\right.\right. \right.\\\left.\left.\left.}
+
\mathds{1}(j\leq j_n))(1-\rho^2_\mathrm{BS}  \right)\right)\left(\rho^2_\mathrm{UE}+\mathds{1}(j\leq m)(1-\rho^2_\mathrm{UE})\right)\right)^{-\eta_{y,0}}.
\end{multline*}
The above expression boils down to Lemma~\ref{lem:int}, for a single path channel.  
This expression can be further simplified assuming equiprobable virtual angles,
\begin{multline*}
\Psi_y =  \sum_{m=0}^{\eta_{y,0}}{\eta_{y,0} \choose m} \left(\frac{1}{\NMS}\right)^{m} \left(1-\frac{1}{\NMS}\right)^{\eta_{y,0}-m}
\iftoggle{SC}{}{\\ \times}
 \sum_{j_1,\ldots,j_{U_y}=0}^{\eta_{y,0}}\prod_{n=1}^{U_y}{\eta_{y,0}\choose j_n} \left(\frac{1}{\NBS}\right)^{j_n}\left(1-\frac{1}{\NBS}\right)^{\eta_{y,0}-j_n} \\
\times\left(1+\frac{s \mathrm{G} L(y,0)^{-1}}{\eta_{y,0}U_y} \sum_{j=1}^{\eta_{y,0}}
\left(\sum_{n=1}^{U_y}\left(\rho^2_\mathrm{BS} +\mathds{1}(j\leq j_n)) 
\iftoggle{SC}{}{\right.\right.\right.\\\left.\left.\left.\times}
(1-\rho^2_\mathrm{BS}  \right)\right)\left(\rho^2_\mathrm{UE}
+\mathds{1}(j\leq m)(1-\rho^2_\mathrm{UE})\right)\right)^{-\eta_{y,0}}.
\end{multline*}
Now seperating the LOS and NLOS terms and using the Displacement theorem as for the upper bound, the Laplace functional can be given as 
\begin{align*}
L_{I_0,l}(s) \geq \exp\left(-\int_{l}^{\infty}(1-\expect{\Psi_{t,\mathrm{L}}})\Lambda_\mathrm{L}(\mathrm{d}t) \right)
\iftoggle{SC}{}{\\ \times}
\exp\left(-\int_{l}^{\infty}(1-\expect{\Psi_{t,\mathrm{N}}})\Lambda_\mathrm{N}(\mathrm{d}t) \right).
\end{align*}
where $\Psi_{t,\mathrm{j}}$ is same as $\Psi_{y}$ with $y$ replaced by $t$ and $\eta_{y,0}$ replaced by $\eta_\mathrm{j}$, for $j\in\{\mathrm{L},\mathrm{N}\}$. 
\end{appendix}
\section*{Acknowledgment}
The authors thank A. Alkhateeb for his comments on early drafts of this work. The authors also thank anonymous reviewers for their constructive comments. 
%\addcontentsline{toc}{section}{Acknowledgment}
\bibliographystyle{IEEEtran}
\bibliography{IEEEabrv,nextgen}
\end{document}